\definecolor{linkblue}{rgb}{0,0,.6}
\definecolor{citered}{rgb}{.7,0,0}
\renewcommand\thesection{\thechapter.\arabic{section}}
\newtheorem{theorem}[equation]{{\hspace*{-1.5em}}Theorem}
\newtheorem{proposition}[equation]{{\hspace*{-1.5em}}Proposition}
\newtheorem{corollary}[equation]{{\hspace*{-1.5em}}Corollary}
\newtheorem{lemma}[equation]{{\hspace*{-1.5em}}Lemma}
\newtheorem{definition}[equation]{{\hspace*{-1.5em}}Definition}
\newtheorem{remark}[equation]{{\hspace*{-1.5em}}Remark}
\newtheorem{example}[equation]{{\hspace*{-1.5em}}Example}
\newtheorem{definition and proposition}[equation]{Definition and Proposition}
\newtheorem{definition and theorem}[equation]{Definition and Theorem}
\newtheorem{definition and remark}[equation]{Definition and Remark}
\newtheorem{definition and lemma}[equation]{Definition en Lemma}
\newtheorem{definition and example}[equation]{Definition and Example}
\renewenvironment{proof}[1][\proofname]{\par
  \pushQED{\qed}%
  \normalfont \topsep6\p@\@plus6\p@\relax
  \trivlist
  \item[\hskip\labelsep
        \scshape
    #1\@addpunct{.}]\ignorespaces
}{%
  \popQED\endtrivlist\@endpefalse
}
\numberwithin{equation}{chapter}
\newcommand{\draft}[1]{} 
\newcommand{\vsp}{\vspace{2mm}}
\def\epsilon{\varepsilon}
\def\phi{\varphi}
\newcommand{\al}{\alpha}
\newcommand{\be}{\beta}
\newcommand{\ga}{\gamma}
\newcommand{\de}{\delta}
\newcommand{\ka}{\kappa}
\newcommand{\lam}{\lambda}
\newcommand{\om}{\omega}
\newcommand{\De}{\Delta}
\newcommand{\uti}{{\ti{u}}}
\newcommand{\Kti}{{\ti{K}}}
\def\C{{\mathbb C}}
\def\H{{\mathbb H}}
\def\N{{\mathbb N}}
\def\R{{\mathbb R}}
\def\T{{\mathbb T}}
\def\Z{{\mathbb Z}}
\newcommand{\mcE}{\mathcal E}
\newcommand{\mcF}{\mathcal F}
\newcommand{\mcI}{\mathcal I}
\newcommand{\mcJ}{\mathcal J}
\newcommand{\mcK}{\mathcal K}
\newcommand{\msD}{\mathscr D}
\newcommand{\ti}{\tilde}
\newcommand{\x}{\times}
\newcommand{\del}{\partial}
\newcommand{\abs}[1]{\left\lvert #1 \right\rvert}
\newcommand{\beq}{\begin{equation}}
\newcommand{\eeq}{\end{equation}}
\newcommand{\beqs}{\begin{equation*}}
\newcommand{\eeqs}{\end{equation*}}
\newcommand{\vungoc}{V\~u Ng\d{o}c}
\DeclareMathOperator{\Id}{Id}
\DeclareMathOperator{\Int}{Int}
\def\slashii#1{\setbox0=\hbox{$#1$}             
\dimen0=\wd0                                 
\setbox1=\hbox{\sl/} \dimen1=\wd1            
\ifdim\dimen0>\dimen1                        
\rlap{\hbox to \dimen0{\hfil\sl/\hfil}}   
#1                                        
\else                                        
\rlap{\hbox to \dimen1{\hfil$#1$\hfil}}   
\hbox{\sl/}                               
\fi}                                         %
\def\slashiii#1{\setbox0=\hbox{$#1$}#1\hskip-\wd0\hbox to\wd0{\hss\sl/\/\hss}}
\newcommand{\dirac}{{\slashii{\partial}}} 
\newcommand{\refdalembert}{Theorem \ref{dalembert} (D'Alembert)}
\newcommand{\refKdVscalMult}{Lemma \ref{KdVscalMult}}
\newcommand{\refKdVtrans}{Lemma \ref{KdVtrans}}
\newcommand{\refKdVmult}{Lemma \ref{KdVmult}}
\newcommand{\refKdVscaling}{Lemma \ref{KdVscaling}}
\newcommand{\refKdVgali}{Lemma \ref{KdVgali}}
\newcommand{\refgronwall}{Lemma \ref{gronwall} (Gronwall)}
\newcommand{\refgenHamFct}{Remark \ref{genHamFct}}
\newcommand{\refdegThree}{Example \ref{degThree}}
\newcommand{\reflaxKdvFirst}{Example \ref{laxKdvFirst}}
\newcommand{\reflaxKdvThird}{Example \ref{laxKdvThird}}
\newcommand{\reflaxKdvAlmost}{Example \ref{laxKdvAlmost}}
\newcommand{\refexampleNoLax}{Example \ref{exampleNoLax}}
\begin{document}

\pagenumbering{arabic}

\begin{titlepage}

\mbox{  }

\vspace{20mm}

\centerline{{\fontsize{1.3cm}{0cm}\selectfont {\bf Selected aspects}}}

\vspace{12mm}

\centerline{{\fontsize{1.3cm}{0cm}\selectfont {\bf of the}}} 

\vspace{12mm}

\centerline{{\fontsize{1.3cm}{0cm}\selectfont {\bf Korteweg-de Vries equation}}}

\vspace{25mm}


\vspace{1mm}

\vspace{30mm}

\centerline{{\LARGE {Sonja Hohloch}} \quad {\Large and} \quad {\LARGE {Federico Zadra}}}

\vspace{4mm}

\centerline{{\Large (University of Antwerp, Belgium)}}

\vspace{4mm}

\centerline{{November 2024}}

\end{titlepage}

\titlepage

\frontmatter

\cleardoublepage

\mbox{ \ }

\vspace{40mm}

\begin{center}

 {\Large {\bf Preface}}

\end{center}

\vspace{20mm}

\begin{quote}
 These lecture notes grew out of notes for courses around Integrable PDEs and the KdV equation given by the authors during the past five years at the University of Antwerp (Belgium). The authors wish to thank their students for all the helpful comments over the past years that helped to improve these lecture notes. Further comments and suggestions are welcome!

 \vspace{10mm}

 \flushright Sonja Hohloch $\&$ Federico Zadra

 \flushright University of Antwerp

 \flushright November 2024

\end{quote}

\cleardoublepage

\tableofcontents

\addcontentsline{toc}{chapter}{Contents}

\mainmatter


\chapter{KdV equation: Motivation and basic properties}

There exist various notions of integrability, depending on the context and its natural properties. 
Very loosely, one may say intuitively that a function, vector field, distribution etc. is `integrable' if it has `sufficiently many nice symmetries'.
From the point of view of analysis there is the main distinction into {\em finite dimensional} and {\em infinite dimensional} integrable systems. 


\vsp

Roughly, {\em finite dimensional} integrable systems live on finite dimensional manifolds. {\em Frobenius integrability} is a natural choice for a differential geometric context when working with vector fields and/or distributions. {\em Liouville integrability}, on the other hand, is tailored for Hamiltonian systems on finite dimensional symplectic manifolds and motivated by their property of energy conservation.

When talking of {\em infinite dimensional} integrable systems, one aims at replacing the concept of systems of integrable ODEs on finite dimensional manifolds by (systems of) PDEs on infinite dimensional function spaces. Even more than in the finite dimensional setting, the notion of integrability depends very much on the given situation. Examples of integrable PDEs are the {\em Korteweg-de Vries equation}, the {\em Nonlinear Schr\"odinger equation}, the {\em Camassa-Holm equation} and others.

\vsp

The discovery of the {\em Korteweg-de Vries equation} was motivated by the following observation: In 1834, J.\ Scott Russell (1808 -- 1882, Scottish civil engineer) noticed a `single wave not changing shape' travelling down a narrow channel in front of a boat, see \cite[page 319]{scottRussell}. Such waves are called {\bf solitons\index{Soliton}}. A nice simulation of such a wave can be found on YouTube, see
\centerline{
\url{https://www.youtube.com/watch?v=D14QuUL8x60}.
}

Although \cite[Footnote on page 360]{boussinesq} already mentioned briefly an equation describing such travelling waves in 1877, the breakthrough came in 1895 when Korteweg and de Vries gave an explicit solution of such a soliton in their work \cite{kortewegDeVries}. A historical overview over the research around the Korteweg-de Vries equation can be found in \cite{deJager}.

\vsp

The literature on integrable systems in infinite dimensions is vast and, when crossing over to physics, unfortunately often lacks mathematical rigor.
A short summary of important facts of infinite dimensional integrable Hamiltonian systems can be found in \cite[Section 5.5]{abrahamMarsden}.
The book \cite[Chapter 12]{vilasi} motivates and presents the most important methods and facts concerning the KdV equation in a concise way.

For the reader's convenience, we list here (in alphabetical order) some books and lecture notes on infinite dimensional integrable systems around the Korteweg-de Vries equation:
\begin{itemize}
 \item [$\circ$]
 \cite{ablowitzClarkson},
 \item [$\circ$]
 \cite{babelonBernardTalon}, 
 \item[$\circ$]
 \cite{batlle}, 
 \item[$\circ$]
 \cite{dickey}, 
 \item [$\circ$]
 \cite{drazinJohnson},
  \item [$\circ$]
 \cite{dunajski},
 \item [$\circ$]
 \cite{eckhausVanHarten},
 \item[$\circ$]
 \cite{kuksin},
 \item[$\circ$]
 \cite{kupershmidt}, 
 \item[$\circ$]
 \cite{miwaJimboDate}, 
 \item [$\circ$]
 \cite{olver}, 
 \item[$\circ$]
 \cite{schneiderUecker},
 \item[$\circ$]
 \cite{vilasi}, 
 \item[$\circ$]
 \cite{zhidkov}. 
\end{itemize}


\clearpage

\section{Revisited: Finite dimensional integrable systems}

\noindent
This section recalls the main definitions and properties of (Liouville) completely integrable Hamiltonian systems in finite dimensions. 

\begin{framed}
 \begin{definition}
  \label{symplectic}
  The pair $(M, \om)$ is a {\bf symplectic manifold\index{Symplectic manifold}} if $M$ is a finite dimensional smooth manifold that admits a closed, nondegenerate $2$-form $\om$. Such an $\om$ is said to be a {\bf symplectic form\index{Symplectic form}}.
 \end{definition}
\end{framed}

Note that symplectic manifolds are always even dimensional, but not all even dimensional manifolds are symplectic. Moreover, locally all symplectic manifolds look the same:

\begin{framed}
 \begin{theorem}[Darboux]
  \label{darboux}
  Locally any symplectic manifold looks like $(\R^{2n}, \sum_{k=1}^n dx_k \wedge dy_k)$ endowed with the coordinates $(x,y)=(x_1, \dots, x_n, y_1, \dots, y_n)$.
 \end{theorem}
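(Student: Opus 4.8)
The plan is to use \emph{Moser's deformation argument}. First I would reduce to a purely local statement: choosing an arbitrary chart around the given point $p$, we may assume that $M$ is an open neighborhood of $0$ in $\R^{2n}$, that $p=0$, and that $\om$ is a closed, nondegenerate $2$-form there. The bilinear form $\om_0 := \om|_0$ on the vector space $T_0\R^{2n}\cong\R^{2n}$ is alternating and nondegenerate, so by the standard linear-algebra normal form it admits a symplectic basis; after a linear change of coordinates we may therefore assume $\om|_0 = \om_{\mathrm{std}}|_0$, where $\om_{\mathrm{std}} := \sum_{k=1}^n dx_k\wedge dy_k$. Now set $\om_t := \om_{\mathrm{std}} + t(\om - \om_{\mathrm{std}})$ for $t\in[0,1]$. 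Each $\om_t$ is closed, and at $0$ we have $\om_t|_0 = \om_{\mathrm{std}}|_0$, which is nondegenerate; since nondegeneracy is an open condition and $[0,1]$ is compact, there is a ball $U$ around $0$ on which $\om_t$ is nondegenerate for \emph{every} $t\in[0,1]$ simultaneously.

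Next, on the ball $U$ the Poincar\'e lemma gives a $1$-form $\si'$ with $d\si' = \om - \om_{\mathrm{std}}$; since $(\om - \om_{\mathrm{std}})|_0 = 0$, subtracting from $\si'$ a constant-coefficient (hence exact) $1$-form we obtain $\si$ with $d\si = \om - \om_{\mathrm{std}}$ and $\si|_0 = 0$. Because each $\om_t$ is nondegenerate on $U$, there is a unique time-dependent vector field $X_t$ solving $\iota_{X_t}\om_t = -\si$, and $X_t|_0 = 0$ since $\si|_0=0$. Hence the time-dependent flow $\psi_t$ of $X_t$ fixes $0$, and a continuity/compactness argument (shrinking $U$ if necessary) shows that $\psi_t$ is defined on a neighborhood of $0$ for all $t\in[0,1]$.

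Finally I would compute, using Cartan's magic formula $\mcL_{X_t} = d\,\iota_{X_t} + \iota_{X_t} d$, the closedness of $\om_t$, and $\frac{d}{dt}\om_t = \om - \om_{\mathrm{std}} = d\si$:
\[
  \frac{d}{dt}\bigl(\psi_t^*\om_t\bigr) = \psi_t^*\Bigl(\mcL_{X_t}\om_t + \tfrac{d}{dt}\om_t\Bigr) = \psi_t^*\bigl(d\,\iota_{X_t}\om_t + d\si\bigr) = \psi_t^*\bigl(-\,d\si + d\si\bigr) = 0 .
\]
Therefore $\psi_1^*\om = \psi_1^*\om_1 = \psi_0^*\om_0 = \om_{\mathrm{std}}$, so the composition of the initial chart with $\psi_1$ is the desired Darboux chart. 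I expect the main obstacle to be the bookkeeping of shrinking neighborhoods: one must genuinely use $X_t|_0 = 0$ together with a uniform-in-$t$ estimate to guarantee that the flow $\psi_t$ exists up to time $1$ on a common neighborhood of $0$. The linear normalization of $\om|_0$ and the refinement $\si|_0=0$ of the Poincar\'e lemma are comparatively routine.

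An alternative, more classical route avoids the homotopy: pick any function $q$ with $dq|_0\neq 0$, so its Hamiltonian vector field $X_q$ (defined by $\iota_{X_q}\om = -dq$) is nonzero at $0$; find a function $p$ with Poisson bracket $\{q,p\}=1$ near $0$, straighten the commuting flows of $X_q$ and $X_p$ to obtain coordinates $(x_1,y_1)$ in which $\om = dx_1\wedge dy_1 + (\text{terms not involving }dx_1,dy_1)$, restrict to the codimension-$2$ submanifold $\{x_1=y_1=0\}$ where the remaining form is symplectic, and induct on $n$. I would favor the Moser argument for its brevity, but either establishes the theorem.
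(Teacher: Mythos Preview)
Your Moser deformation argument is correct and is the standard modern proof of Darboux's theorem; the alternative inductive construction you sketch is also valid. Note, however, that the paper does \emph{not} prove this theorem at all: it merely states Darboux's theorem as background in the section recalling finite-dimensional integrable systems and immediately moves on, so there is no proof in the paper to compare your proposal against.
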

\end{framed}

The following class of ODEs lives naturally on symplectic manifolds.

\begin{framed}
 \begin{definition}
  \label{hamODE}
  Let $(M, \om)$ be a symplectic manifold and $H: M \to \R$ a smooth function. The {\bf Hamiltonian vector field\index{Hamiltonian vector field}} $X^H$ of $H$ is defined via $\om( X^H, \cdot) = dH(\cdot)$ 
  and the {\bf Hamiltonian equation\index{Hamiltonian equation}} is given by $X^H(z) = z'$. In local Darboux coordinates $(x, y) \in \R^{2n}$, the Hamiltonian vector field is of the form
  \begin{align*}
   X^H(x,y) & = \bigl( \del_{y_1}H(x,y), \dots , \del_{y_n}H(x, y), - \del_{x_1}H(x, y), \dots, -\del_{x_n}H(x,y) \bigr)^T \\
   & =:\bigl(\del_y H(x, y), - \del_x H(x,y)\bigr)^T
  \end{align*}
and the Hamiltonian equation is given by the system of ODEs
\begin{align*}
 x_i' = \del_{y_i}H(x,y), \quad y_i' = -\del_{x_i}H(x,y) \quad \forall\ 1 \leq i \leq n,
\end{align*}
briefly $x'= \del_y H(x, y)$ and $y'=-\del_xH(x, y)$. The associated flow $\phi^H$ is called {\bf Hamiltonian flow\index{Hamiltonian flow}} of $H$. When working with with a Hamiltonian system, the function inducing the Hamiltonian vector field is often referred to as {\bf Hamiltonian function\index{Hamiltonian function}}. 
  \end{definition}
\end{framed}

Solutions of Hamiltonian systems are {\bf energy conserving\index{Energy conservation}}, i.e., a given solution stays for all times in the same level set of the Hamiltonian. This means in particular that Hamiltonian solutions do not roam throughout the whole manifold but stay for generic Hamiltonians in a subset of dimension $\leq \dim M -1$. If we consider two Hamiltonians such that their Hamiltonian solutions stay in addition in each other's level sets, too, then the solutions stay in the intersection set of the level sets and thus in a subset of dimension $\leq \dim M - 2$. Analytically this is expressed by means of the following notion.

\begin{framed}
 \begin{definition}
  \label{poisson}
  Let $(M, \om)$ be a symplectic manifold. The {\bf Poisson bracket\index{Poisson bracket}} induced by $\om$ of two smooth functions $f, g: M \to \R$ is given by 
  $$\{f, g\}:= \om(X^f, X^g) =-df(X^g)=dg(x^f).$$
  Two smooth functions are {\bf Poisson commutative} if their Poisson bracket vanishes.
 \end{definition}
\end{framed}

Geometrically, Hamiltonian solutions of $f$ stay also within level sets of $g$ and vice versa if and only if $\{f, g\}=0$. This motivates

\begin{framed}
 \begin{definition}
  \label{integrable}
Let $(M, \om)$ be a $2n$-dimensional symplectic manifold. A smooth function $h:=(h_1, \dots, h_n) : M \to \R^n$ is said to be a {\bf (momentum map\index{Momentum map} of a) completely integrable (Hamiltonian) system\index{Completely integrable system}\index{Integrable system}\index{Completely integrable Hamiltonian system}} if
\begin{enumerate}[label=\arabic*)]
 \item 
 $X^{h_1}$, \dots, $X^{h_n}$ are almost everywhere linearly independent.
 \item
 $\{h_i, h_j\}=0$ for all $1 \leq i, j \leq n$ \ {\bf (Poisson commutative)\index{Poisson commutative}}.
\end{enumerate}
The components $h_1, \dots, h_n$ are often referred to as {\bf integrals\index{Integral}} and a completely integrable system is briefly abbreviated by $(M, \om, h)$.
\end{definition}
\end{framed}

Contrary to standard notions in physics, some mathematicians call the momentum map briefly {\em moment map\index{Moment map}}.

\vsp

Examples of completely integrable systems are the spherical pendulum, coupled spin oscillators, coupled angular momenta, the Euler, the Lagrange, and the Kovalevskaya spinning top, etc.

\vsp

For an overview over symplectic geometry and its interaction with integrable systems, we refer the interested reader e.g.\ to the lecture notes and textbooks by \cite{audin-torus}, \cite{audin-integrability}, \cite{bolsinov-fomenko}, \cite{cannas-da-silva}, \cite{hofer-zehnder}, \cite{mcduffSalamon}, and \cite{vungoc book}.


\clearpage

\section{The classical wave equation}

\noindent
As mentioned in the introduction of this chapter, the {\em Korteweg-de Vries equation} describes waves in shallow water. After establishing some facts of the {\em wave equation}, we will use it to motivate the {\em Korteweg-de Vries equation}.

The {\bf 1-dimensional wave equation\index{Wave equation}} is given by
$$
u_{tt} = c^2 u_{xx}
$$
where we have coordinates $(x,t) \in \R \x \R$ and $u: \R \x \R \to \R$, a (constant) {\bf phase velocity\index{Phase velocity}} $c \in \R$ and partial derivatives $u_x:= \del_x u$ and $u_{xx}:=\del_x u_x$ and $u_t:= \del_t u$ and $u_{tt}:= \del_t u_t$.

In 1750, the French mathematician Jean-Baptiste le Rond d'Alembert (1717 -- 1783) solved the 1-dimensional wave equation completely:

\begin{framed}
 \begin{theorem}[{\bf D'Alembert\index{D'Alembert}}]
 \label{dalembert}
 Let $c \in \R$. 
  \begin{enumerate}[label=\arabic*)]
   \item 
    The general solution of the 1-dimensional wave equation\index{Wave equation} $u_{tt} = c^2 u_{xx}$ is of the form 
    $$u(x,t)= F(x-ct) + G(x+ct)$$
    where $F, G: \R \to \R$ are arbitrary smooth functions.
    \item
    Given smooth functions $f, g : \R \to \R$, then the initial value problem
    $$
    \left\{
    \begin{aligned}
     & u_{tt}  =c^2 u_{xx} \qquad  \ \mbox{for } (x,t) \in \R \x \R , \\
     & u(x, 0)  = f(x)  \quad \mbox{for } x \in \R, \\
     & u_t(x, 0)  = g(x)  \quad \mbox{for } x \in \R
    \end{aligned}
    \right.
    $$
   has the unique solution
   $$
   u(x,t)= \frac{1}{2} \bigl( f(x-ct) + f(x+ct) \bigr) + \frac{1}{2c} \int_{x-ct}^{x+ct} g(s) \ ds.
   $$
  \end{enumerate}
 \end{theorem}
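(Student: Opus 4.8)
The plan is to reduce the PDE to an elementary first-order condition by the light-cone change of coordinates $\xi := x - ct$, $\eta := x + ct$, assuming throughout $c \neq 0$ (the statement, and in particular the formula in part 2), implicitly requires this). Since $(x,t) \mapsto (\xi,\eta)$ is a linear diffeomorphism of $\R \x \R$, a function $u$ is smooth in $(x,t)$ if and only if $v(\xi,\eta) := u(x,t)$ is smooth in $(\xi,\eta)$, so no regularity is lost. First I would compute via the chain rule that $\del_x = \del_\xi + \del_\eta$ and $\del_t = c(\del_\eta - \del_\xi)$, whence $u_{tt} - c^2 u_{xx} = -4c^2\, v_{\xi\eta}$. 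Thus the wave equation is equivalent to $v_{\xi\eta} = 0$.

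Next I would integrate this: $v_{\xi\eta} = 0$ means $\del_\eta(v_\xi) = 0$, so along the (connected) lines $\xi = \text{const}$ the function $v_\xi$ is constant, i.e. $v_\xi(\xi,\eta) = \phi(\xi)$ for some smooth $\phi$; integrating once more in $\xi$ gives $v(\xi,\eta) = F(\xi) + G(\eta)$, where $F$ is a smooth antiderivative of $\phi$ and $G$ is the "constant" of integration, a smooth function of $\eta$ alone. Translating back yields $u(x,t) = F(x-ct) + G(x+ct)$. The converse — that every such $u$ with $F, G$ smooth solves $u_{tt} = c^2 u_{xx}$ — is a two-line chain-rule check. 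This settles part 1).

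For part 2), I would substitute the general form into the data. The condition $u(x,0) = f(x)$ gives $F(x) + G(x) = f(x)$, and $u_t(x,0) = g(x)$ gives $-cF'(x) + cG'(x) = g(x)$. Differentiating the first relation and combining with the second, I solve the resulting $2\times 2$ linear system: $F'(x) = \tfrac12 f'(x) - \tfrac{1}{2c} g(x)$ and $G'(x) = \tfrac12 f'(x) + \tfrac{1}{2c} g(x)$. Integrating from $0$ to $x$ and using $F(0) + G(0) = f(0)$ to pin down the integration constants (they turn out to cancel), I get $F(x) = \tfrac12 f(x) - \tfrac{1}{2c}\int_0^x g(s)\,ds + \text{const}$ and $G(x) = \tfrac12 f(x) + \tfrac{1}{2c}\int_0^x g(s)\,ds - \text{const}$; substituting into $u(x,t) = F(x-ct) + G(x+ct)$ and merging the two integrals into $\int_{x-ct}^{x+ct} g(s)\,ds$ produces exactly the claimed formula. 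Uniqueness is then automatic: by part 1) any solution is of the form $F(x-ct)+G(x+ct)$, and the computation just performed shows $F$ and $G$ are determined by $f,g$ up to the single additive constant that drops out of $u$; a final direct verification (differentiate the formula and evaluate at $t=0$) confirms that it does satisfy the problem.

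I do not expect a serious obstacle — the argument is the standard factorization of the d'Alembertian $\del_t^2 - c^2\del_x^2$ into $\del_\xi$ and $\del_\eta$. The only points that need a little care are: justifying cleanly the step "$v_\xi$ independent of $\eta$, hence $v = F(\xi) + G(\eta)$" (this uses the connectedness of the coordinate lines, which is why the global change of variables is convenient rather than a purely local one); the smoothness bookkeeping, i.e. checking that the antiderivatives $F, G$ really are as regular as claimed when $f, g$ are smooth; and remembering to note that $c = 0$ is excluded.
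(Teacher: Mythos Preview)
Your proposal is correct and follows essentially the same approach as the paper: the light-cone change of variables reducing the wave equation to $v_{\xi\eta}=0$, followed by two integrations, is exactly the argument the paper gives for part~1). For part~2) the paper simply writes ``Left to the reader'', so your standard derivation (impose the data on $F+G$ and $-cF'+cG'$, solve, and reassemble) goes beyond what the paper provides; your observation that $c\neq 0$ is implicitly required is also apt.
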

\end{framed}

\begin{proof}
 {\bf 1)} We consider the change of coordinates 
 $$
 h: \R^2 \to \R^2 , \quad h(x,t):= \bigl(\xi(x, t), \eta(x, t)\bigr):= (x+ct, x-ct)
 $$
 and set $\uti := u \circ h^{-1}$. We calculate $Dh =  \left( \begin{smallmatrix} \xi_x & \xi_t \\ \eta_x & \eta_t \end{smallmatrix} \right) =\left( \begin{smallmatrix} 1 & c \\ 1 & -c \end{smallmatrix} \right)$ and compute
 \begin{align*}
u_{xx} & = (\uti \circ h)_{xx} = (D\uti.h_x)_x = (\uti _\xi \xi_x + \uti _\eta \eta_x)_x = (\uti _\xi + \uti _\eta)_x  \\ 
& = D(\uti_\xi).h_x + D(\uti_\eta).h_x = \uti_{\xi \xi} \xi_x + \uti_{\xi \eta} \eta_x + \uti_{\eta \xi} \xi_x + \uti_{\eta \eta} \eta_x \\
& =  \uti_{\xi \xi} + 2 \uti_{\xi \eta} + \uti_{\eta \eta}
 \end{align*}
and
\begin{align*}
u_{tt} & =  (\uti \circ h)_{tt} = (D \uti.h_t)_t = (\uti _\xi \xi_t + \uti _\eta \eta_t)_t = (c \uti_\xi - c \uti_\eta)_t =c ( \uti_\xi -  \uti_\eta)_t \\
& = c( D( \uti_\xi).h_t - D(\uti_\eta).h_t) = c( \uti_{\xi \xi} \xi_t + \uti_{\xi_\eta} \eta_t - \uti_{\eta \xi} \xi_t - \uti_{\eta \eta} \eta_t ) \\
& = c^2 (\uti_{\xi \xi} - 2 \uti_{\xi \eta} + \uti_{\eta \eta}).
\end{align*}
Putting this together, we obtain
$$
0 = u_{tt}-c^2 u_{xx} = c^2 (\uti_{\xi \xi} - 2 \uti_{\xi \eta} + \uti_{\eta \eta}) - c^2( \uti_{\xi \xi} + 2 \uti_{\xi \eta} + \uti_{\eta \eta}) = - 4 c^2 \uti_{\xi \eta}.
$$
Thus solving $ u_{tt}= c^2 u_{xx}$ is equivalent to solving $\uti_{\xi \eta}=0$ and transforming back. Integrating $\uti_{\xi \eta}=0$ w.r.t.\ $\xi$ yields $\uti_\eta (\xi, \eta) = H(\eta)$ with an arbitrary smooth function $H: \R \to \R$. Integrating now w.r.t.\ $\eta$ yields 
$$\uti(\xi, \eta) = F(\eta) + G(\xi)$$
for an arbitrary smooth function $G: \R \to \R$ and a function $F: \R \to \R$ with $F' = H$.
Therefore we get
\begin{align*}
u(x,t) & = (\uti \circ h)(x,t) = \uti(\xi(x,t), \eta(x, t)) = F(\eta(x, t)) + G(\xi(x, t)) \\ 
& = F(x-ct) + G(x+ct).
\end{align*}
{\bf 2)} Left to the reader.
\end{proof}

The general solution $u(x,t)=  F(x-ct) + G(x+ct)$ from \refdalembert\ consists, for $c>0$ and $t \to \infty$, of a `wave travelling to the right' with `shape' $x \mapsto F(x-ct)$ and a `wave travelling to the left' with `shape' $x \mapsto G(x+ct)$. 

\begin{framed}
 \begin{example}
Let $c>0$ and let $u(x,t)=  F(x-ct) + G(x+ct)$ be the general solution of the 1-dimensional wave equation $u_{tt}=c^2 u_{xx}$ from \refdalembert.
\begin{enumerate}[label=\arabic*)]
 \item 
For $F(s):= e^{-s^2}$ and $G(s):=0$, the solution $u(x,t)=e^{-(x-ct)^2}$ consists of one wave with the shape of the Gauss curve travelling to the right, i.e., $x \mapsto u(x,0)$ is the standard Gauss curve and $x \mapsto u(x,t_0)$ for $t_0 \in \R$ is the Gauss curve with maximum at $x_{max}=  ct_0$, i.e., translated to the right by $ct_0$.
\item
For $F(s):= e^{-s^2}=:G(s)$, the solution $u(x,t)=e^{-(x-ct)^2} +e^{-(x+ct)^2} $ consists of two `somewhat overlapping' Gauss curve shaped waves. At $t=0$, the waves overlap completely, forming {\em one} big wave of shape $x \mapsto u(x,0)=2e^{-x^2} $. The larger $\abs{t}$ becomes, the more the two waves separate, making their two wave crests more and more distinguishable.
\end{enumerate}
 \end{example}
\end{framed}

More on the wave equation (and some other classical PDEs) may be found in \cite{farlow}.


\clearpage

\section{The KdV equation}

\noindent
In this section, we follow the line of thoughts in \cite{dunajski} to deduce the {\em Korteweg-de Vries equation} from the {\em wave equation}.
Note that the {\em wave equation} describes waves under several simplifying assumption like for example
\begin{itemize}
 \item 
 {\it No dissipation\index{Dissipation}}, i.e., the equation is invariant under `time reversal'\index{Time reversal} $t \mapsto -t$.
 \item
 The amplitude of oscillations is small, i.e., there are no nonlinear terms like for instance $u^2$.
 \item
 {\it No dispersion\index{Dispersion}} ({\bf dispersion\index{Dispersion}} means that waves with different wave length travel with different phase velocity), i.e., the {\bf group velocity\index{Group velocity}} (definition see below) is constant.
\end{itemize}
Now we relax these assumptions by allowing for dispersion and nonlinearity. To this end, consider the two waves $v(x,t) := F(x-ct)$ and $w(x,t):= G(x +ct)$ in the general solution of the 1-dimensional wave equation and compute
\begin{align*}
 v_x(x,t) & = F'(x-ct), \quad v_t(x,t) = -c  F'(x-ct), \\
 w_x(x,t) &= G'(x+ct), \quad w_t(x,t) = c  G'(x+ct).
\end{align*}
Thus the two waves satisfies the PDEs
$$
v_x +\frac{1}{c} v_t =0 \qquad \mbox{and} \qquad w_x -\frac{1}{c} w_t =0.
$$
Now consider the `complex' wave given by $z(x,t):= e^{i(kx - \om(k)t)}$ where $k \in \Z$ and $\om: \R \to \R$. The derivative $\om'$ of $\om$ is called {\bf group velocity\index{Group velocity}}. We calculate 
\begin{align*}
 & z_x(x,t)  =  ik e^{i(kx - \om(k)t)} = ikz(x,t), \\
 & z_{xx}(x,t)  = (ik)^2 e^{i(kx - \om(k)t)} = -k^2 z(x,t) , \\
 & z_t (x,t) = -i\om(k)e^{i(kx - \om(k)t)} = -i \om(k) z(x,t), \\
 & z_{tt} (x,t) = (-i\om(k))^2 e^{i(kx - \om(k)t)} = -\om^2(k) z(x,t).
\end{align*}
Thus $z(x,t)= e^{i(kx - \om(k)t)}$ satisfies the wave equation $z_{tt} = \vartheta^2 z_{xx}$ for $\vartheta=\frac{\om(k)}{k}$.
If we assume $\om(k) = ck$ then $\om'(k) = c$, i.e., the group velocity equals the phase velocity and $\vartheta=c$. Then we obtain again the PDE
$$
z_x -\frac{1}{c} z_t =0.
$$
But if we work for instance with $\om(k):= c(k - \be k^3)$ with $\be \in \R^{\neq 0}$, we will get dispersion, changing the underlying PDEs: We now get
\begin{align*}
& z_t (x,t) = -i\om(k)e^{i(kx - \om(k)t)} = -i c(k - \be k^3) z(x,t)
\end{align*}
and compute in addition
\begin{align*}
& z_{xxx}(x,t) = (ik)^3 e^{i(kx - \om(k)t)} = -ik z(x,t) .
\end{align*}
Therefore we obtain the PDE
$$
z_x + \be z_{xxx} + \frac{1}{c} z_t =0.
$$
Setting $\rho(x,t):= \frac{1}{c} z(x,t)$ and $j(x,t):= z(x,t) + \be z_{xx}(x,t)$, this PDE becomes 
$$
j_x + \rho_t =0.
$$
In this context, $\rho$ is usually called {\bf density\index{Density}} and $j$ is said to be the {\bf current\index{Current}}. In physics, an identity of the form $j_x + \rho_t =0$ is often called {\bf conservation law\index{Conservation law}} or {\bf continuity equation\index{Continuity equation}}. If we modify the current by adding a nonlinear term, i.e., if we consider 
$$
j(x,t):= z(x,t) + \be z_{xx}(x,t) + \frac{\al}{2} z^2 (x,t)
$$
with $\al \in \R^{\neq 0}$, we obtain a PDE of the form
$$
\frac{1}{c} z_t + z_x + \be z_{xxx} + \al z z_x =0
$$
Applying the variable change $x \mapsto x-ct$ and rescaling leads to 
\begin{framed}
$$ u_t -6 u u_x + u_{xxx} =0.$$
\end{framed}
\noindent
This equation is called {\bf Korteweg-de Vries equation\index{Korteweg-de Vries equation}}, short {\bf KdV\index{KdV}}, and is named after the Dutch mathematicians Diederik Johannes Korteweg (1848 -- 1941) and Gustav de Vries (1866 -- 1934) who found in 1895 an explicit solution describing a `soliton wave'.

\begin{framed}
	\begin{remark}
		In the literature, and in these lecture notes as well, the Korteweg–de Vries equation is presented in various forms. All results presented for the Korteweg–de Vries equation apply to a family of equations that vary with the parameter $\kappa$, specifically:
		\begin{equation}
			u_t + \kappa u u_x + u_{xxx} = 0.
		\end{equation}
	\end{remark}
\end{framed}

\begin{framed}
 \begin{theorem}[\cite{kortewegDeVries}]
$$u(x,t):=  -\frac{2 \lam ^2}{\cosh^2 \bigl(\lam (x - 4 \lam^2 t - \tau_0)\bigr)}$$
with parameter $\lam \in \R$ and position shift $\tau_0 \in \R$ is an explicit soliton\index{Soliton} solution of the KdV equation $u_t -6 u u_x + u_{xxx} =0$.
 \end{theorem}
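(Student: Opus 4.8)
The statement is a direct verification, which I would organize so as to collapse it to a single one–variable computation. Introduce the traveling–wave coordinate $\xi := \lam(x - 4\lam^2 t - \tau_0)$, so that $\xi_x = \lam$ and $\xi_t = -4\lam^3$, and write $u(x,t) = -2\lam^2 f(\xi)$ where $f := \sech^2$ (with $\sech := 1/\cosh$). Since $u$ depends on $(x,t)$ only through $\xi$, the chain rule gives $u_x = -2\lam^3 f'(\xi)$, $u_{xxx} = -2\lam^5 f'''(\xi)$, $u_t = 8\lam^5 f'(\xi)$, and $u u_x = 4\lam^5 f(\xi) f'(\xi)$, so that the left–hand side of the KdV equation equals $\lam^5\bigl(8 f' - 24 f f' - 2 f'''\bigr)$ evaluated at $\xi$. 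The whole theorem thus reduces to the identity $8 f' - 24 f f' - 2 f''' \equiv 0$ for $f = \sech^2$.

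To prove that identity I would compute the $\xi$–derivatives of $f$ using only $\tfrac{d}{d\xi}\sech\xi = -\sech\xi\tanh\xi$, $\tfrac{d}{d\xi}\tanh\xi = \sech^2\xi$, and the Pythagorean identity $\tanh^2\xi = 1 - \sech^2\xi$. This gives $f' = -2 f\tanh\xi$, then $f'' = 4 f\tanh^2\xi - 2 f^2 = 4 f - 6 f^2$, and hence $f''' = (4 - 12 f) f'$. Substituting $f''' = 4 f' - 12 f f'$ into $8 f' - 24 f f' - 2 f'''$ cancels all four terms in pairs, so the expression vanishes identically and $u$ solves $u_t - 6 u u_x + u_{xxx} = 0$.

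It remains to account for the word ``soliton'': this is immediate from the explicit form, since for each fixed $t$ the profile $x \mapsto u(x,t) = -2\lam^2\sech^2\bigl(\lam(x - 4\lam^2 t - \tau_0)\bigr)$ is a single, exponentially localized bump whose shape is independent of $t$ and which is rigidly translated with constant speed $4\lam^2$ — a single wave ``not changing shape'' in the sense of Scott Russell.

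I do not expect any genuine obstacle here: the argument is pure bookkeeping, and the only delicate points are keeping track of the powers of $\lam$ and of the sign coming from $\xi_t = -4\lam^3$ when applying the chain rule. An alternative (slightly longer) route, which has the merit of explaining \emph{why} this profile arises, is to impose the ansatz $u = \phi(x - ct)$ with $c = 4\lam^2$, integrate the resulting ODE once to obtain $\phi'' - 3\phi^2 - c\phi = 0$ (the constant of integration being killed by the decay of $\phi$ and its derivatives at infinity), multiply by $\phi'$ and integrate again to get $(\phi')^2 = \phi^2(c + 2\phi)$, and then solve this separable first–order equation; but since the candidate solution is already handed to us, direct substitution is the economical choice.
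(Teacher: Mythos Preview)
Your proposal is correct and takes exactly the approach the paper indicates: the paper's proof is the one-line instruction ``Calculate the derivatives of $u$ and verify the equation,'' and you have carried out precisely that verification, cleanly organized via the traveling-wave coordinate $\xi$ and the reduction to $8f' - 24ff' - 2f''' \equiv 0$ for $f=\sech^2$. The alternative constructive route you sketch at the end (traveling-wave ansatz, integrate twice, solve the separable ODE) is the one the paper defers to \cite[Section 2.3.1]{dunajski}.
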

\end{framed}

\begin{proof}
 Calculate the derivatives of $u$ and verify the equation. A constructive proof, i.e., how to come up with this formula, can be found in \cite[Section 2.3.1]{dunajski}.
\end{proof}

We really need {\em all} terms in $u_t -6 u u_x + u_{xxx} =0$ to obtain solutions with soliton behaviour:

\begin{framed}
 \begin{remark}
 \begin{enumerate}[label=\arabic*)]
  \item 
  Solutions of $u_t -6 u u_x =0$ that form a `nice' soliton at time $t=0$ will break (`shock') at some time $t_0>0$ due to discontinuities of the first derivatives.
  \item
  $u_t + u_{xxx} =0$ leads to dispersion, i.e., a `nice' soliton at time $t=0$ will dilute at some time $t_0>0$ into a bigger wave with preceeding and following smaller waves.
 \end{enumerate}
\end{remark}
\end{framed}

At the end of the 1960s, the series of papers authored by \cite{miura}, \cite{miuraGardnerKruskal}, \cite{suGardner}, \cite{gardner}, \cite{kruskalMiuraGardnerZabusky}, and \cite{gardnerGreeneKruskalMiura} found more soliton solutions and conservation laws, in particular solitons of different size and speed either passing through each other from opposite directions or overtaking each other -- without changing the shape except while `overlapping'. There are several videos on YouTube demonstrating this phenomenon, see for example the following two videos:
\centerline{
\url{https://www.youtube.com/watch?v=v5MGNcCnuE4},
}

\centerline{
\url{https://www.youtube.com/watch?v=H4rN3Wr4ctw}.
}


\clearpage

\section{Other PDEs with conservation laws}

\noindent
The KdV equation is not the only PDEs that admit conservation laws and integrability properties. In this section we list some other of these integrable equations.

\begin{framed}
 \begin{example}
  
 The {\bf Sine-Gordon equation\index{Sine-Gordon equation}}
 $$
 u_{tt} - u_{xx} + \sin (u) =0
 $$
 is used when studying surfaces of constant negative curvature. 
 \end{example}
\end{framed}

A YouTube video describing an `artistic' soliton solution of the Sine-Gordon equation can be found at: \url{https://www.youtube.com/watch?v=SAbQ4MvDqEE}.

\begin{framed}
 \begin{example}
  The {\bf nonlinear Schr\"odinger equation\index{Nonlinear Schr\"odinger equation}\index{Schr\"odinger equation}}
 $$
 iu_t = - \frac{1}{2} u_{xx} + \ka \abs{u}^2 u
 $$
 describes propagation of light.  
 \end{example}
\end{framed}

\begin{framed}
 \begin{example}
 The {\bf Camassa-Holm equation\index{Camassa-Holm equation}}
 $$
 u_t + 2 \ka u_x - u_{xxt} + 3uu_x = 2 u_x u_{xx} + u u_{xxx}
 $$
 describes waves in shallow water.  
 \end{example}
\end{framed}

PDEs defined over a discrete space are often called {\bf lattices\index{Lattice}}. An important example was found by the Japanese physicist Morikazu Toda (1917 -- 2010) in 1967:

\begin{framed}
 \begin{example}
 Let $n \in \Z$. Then the {\bf Toda lattice\index{Toda lattice}} is given by
 $$
 \left\{ 
 \begin{aligned}
    q_t(n, t) & = p(n,t), \\
   p_t(n, t) & = e^{-(q(n, t) - q(n-1, t))} - e^{-(q(n+1, t) -q(n,t))}.
 \end{aligned}
 \right. 
 $$
 It describes a discrete chain of particles, parametrized by $n \in \Z$, that interact with their adjacent left and right `neighbours'.
 \end{example}
\end{framed}
 
\noindent
Videos of soliton solutions of the Toda lattice can be found for instance on Gerald Teschl's webpage

\centerline{\url{https://www.mat.univie.ac.at/~gerald/ftp/book-jac/toda.html}}

\vspace{3mm}

\noindent
There are various techniques to study the above mentioned equations, among others
\begin{itemize}
 \item
 Hamiltonian formalism,
 \item
 Integrals of motion as in the finite dimensional case,
 \item
 Hierarchies of equations,
  \item 
 Scattering,
 \item
 Lax pairs and Lax equation.
\end{itemize}
We will encounter them in later sections.


\clearpage

\section{Symmetries of the KdV equation}

\label{sectionSymmetries}

The aim of this section is to describe all transformations that generate new solutions of the KdV equation from a given one. Such transformations are sometimes referred to as {\em symmetries of the equation}\index{Symmetry}.

In this section, $\R^2$ is endowed with the coordinates $(x,t) \in\R^2$ that are mapped under the change of coordinates $h: \R^2 \to \R^2$ to the coordinates $(\xi, \tau)\in \R^2$.

\begin{framed}
 \begin{lemma}
 \label{KdVscalMult}
Let $u \in C^3(\R^2, \R)$ and $\ka \in \R$. Then the transformation $ u \mapsto \ka u$ transforms KdV into
$$
(\ka u)_t + 6 (\ka u) (\ka u_x) + (\ka u)_{xxx} = \ka( u_t + 6 \ka u u_x + u_{xxx}).
$$
 \end{lemma}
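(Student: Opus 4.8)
The plan is to observe that this identity is a pure consequence of the linearity of partial differentiation together with the product structure of the nonlinear term, so the whole statement reduces to a short direct computation. First I would record that for the constant $\ka \in \R$ and $u \in C^3(\R^2,\R)$ we have $(\ka u)_t = \ka u_t$ and $(\ka u)_{xxx} = \ka u_{xxx}$, since $\ka$ pulls out of each (iterated) partial derivative. This already handles the linear terms of the KdV expression.

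Next I would treat the nonlinear term: since $(\ka u)_x = \ka u_x$, we get $(\ka u)(\ka u_x) = \ka^2\, u u_x$, hence $6(\ka u)(\ka u_x) = 6\ka^2 u u_x = \ka \cdot (6\ka\, u u_x)$. Substituting these three computations into the left-hand side yields
$$
(\ka u)_t + 6(\ka u)(\ka u_x) + (\ka u)_{xxx} = \ka u_t + \ka(6\ka\, u u_x) + \ka u_{xxx} = \ka\bigl(u_t + 6\ka\, u u_x + u_{xxx}\bigr),
$$
which is exactly the claimed identity. The $C^3$ hypothesis is only needed so that all the derivatives appearing (up to third order in $x$) exist and the manipulations are justified.

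There is essentially no obstacle here: the only thing to be careful about is not to lose the extra factor of $\ka$ produced by the quadratic nonlinearity, which is precisely why the right-hand side features $6\ka\, u u_x$ rather than $6\, u u_x$ — that is, $u\mapsto\ka u$ does not map the standard KdV equation to itself but to a $\ka$-rescaled member of the family from the preceding remark. I would close by noting this interpretation, since it explains why the lemma is phrased with a trailing factor $\ka$ rather than as an invariance statement.
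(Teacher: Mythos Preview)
Your proof is correct and follows exactly the same approach as the paper's: both simply observe that $(\ka u)_t=\ka u_t$, $(\ka u)_x=\ka u_x$, and $(\ka u)_{xxx}=\ka u_{xxx}$, from which the identity is immediate. The paper's version is more terse, but there is no difference in substance.
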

\end{framed}

\begin{proof}
 This follows immediately from $(\ka u)_t=\ka u_t$ and $(\ka u)_x=\ka u_x$ and $(\ka u)_{xxx}= \ka u_{xxx}$.
\end{proof}

Thus there is hope that tweaking this simple transformation a bit may lead to a transformation that maps a solution of KdV to a (new) solution of KdV.

\vsp

To this aim, let us briefly recall how derivatives transform under change of coordinates. Let $u\in C^3(\R^2, \R)$ and $h: \R^2 \to \R^2$ a $C^3$-diffeomorphism given by
$$
h(x,t)=(\xi(x,t), \tau (x, t)) =:(\xi, \tau) \quad \mbox{and} \quad h^{-1}(\xi, \tau)=(x(\xi, \tau), t(\xi, \tau))=:(x,t).
$$
Set $\uti:= u \circ h^{-1}: \R^2 \to \R$. When we calculate partial derivatives of this concatenation and want to distinguish between the foot point and a vector we write $ (u \circ h^{-1}) (\xi, \tau) =:u|_{h^{-1} (\xi, \tau)} $ where the latter stands for {\em $u$ at the point $h^{-1} (\xi, \tau)$}. This leads to
\begin{align*}
 (\uti_\xi, \uti_\tau) & = D\uti = D(u \circ h^{-1}) = Du|_{h^{-1}} \circ D(h^{-1}) 
 = (u_x, u_t)|_{h^{-1}} 
 \begin{pmatrix}
  x_\xi & x_\tau \\ t_\xi & t_\tau 
 \end{pmatrix}
\\
&= \begin{pmatrix} u_x|_{h^{-1}} \ x_\xi + u_t|_{h^{-1}} \ t_\xi, & u_x|_{h^{-1}} \ x_\tau + u_t|_{h^{-1}} \ t_\tau \end{pmatrix}
\end{align*}
and, similarly for $u=\uti \circ h$, to
\begin{align*}
 (u_x, u_t) = \begin{pmatrix} \uti_\xi |_h \ \xi_x + \uti_\tau |_h \ \tau_x, & \uti_\xi |_h \ \xi_t + \uti_\tau  |_h \ \tau_t \end{pmatrix}.
\end{align*}
For higher derivatives, we obtain by means of the product and chain rule
\begin{align*}
 \uti_{\xi \xi} & = (\uti_\xi)_\xi = (u_x|_{h^{-1}} \ x_\xi + u_t|_{h^{-1}} \ t_\xi)_\xi \\
 & = (u_x|_{h^{-1}})_\xi \  x_\xi + u_x|_{h^{-1}} \ x_{\xi \xi} + ( u_t|_{h^{-1}})_\xi \ t_\xi +  u_t|_{h^{-1}} \ t_{\xi \xi} \\
 & = (u_{xx}|_{h^{-1}},u_{xt}|_{h^{-1}} ) \begin{pmatrix} x_\xi \\ t_\xi \end{pmatrix}  x_\xi + u_x|_{h^{-1}} \ x_{\xi \xi} + ( u_{tx}|_{h^{-1}},  u_{tt}|_{h^{-1}})  \begin{pmatrix} x_\xi \\ t_\xi \end{pmatrix} t_\xi +  u_t|_{h^{-1}} \ t_{\xi \xi} \\
 & =  u_{xx}|_{h^{-1}} \ x_\xi^2 + u_{xt}|_{h^{-1}} \ x_\xi \ t_\xi+ u_x|_{h^{-1}} \ x_{\xi \xi} + 
  u_{tx}|_{h^{-1}} \ x_\xi \  t_\xi +  u_{tt}|_{h^{-1}} \ t_\xi ^2 +  u_t|_{h^{-1}} \ t_{\xi \xi}  \\
  & =  u_{xx}|_{h^{-1}} \ x_\xi^2 + u_x|_{h^{-1}} \ x_{\xi \xi} + 
  2 u_{xt}|_{h^{-1}} \ x_\xi \  t_\xi +  u_{tt}|_{h^{-1}} \ t_\xi ^2 +  u_t|_{h^{-1}} \ t_{\xi \xi}
\end{align*}
and similarly for the other higher partial derivatives.
When considering $x$ as {\em space variable} and $t$ as {\em time variable}, we get in particular:

\begin{framed}
\begin{lemma}[{\bf Translation invariance in space and time}\index{Translation invariance of KdV}]
\label{KdVtrans}
 Let $u \in C^3(\R^2, \R) $ and $\xi_0, \tau_0 \in \R$ and $h: \R^2 \to \R^2$, $h(x,t)=: (x-x_0, t-t_0)$ and set $\uti := u \circ h^{-1}$, i.e., $\uti(\xi, \tau)=u(\xi+\xi_0, \tau+\tau_0)$. Then $\uti$ is a solution of KdV if and only if $u$ is a solution of KdV.
\end{lemma}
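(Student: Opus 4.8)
The plan is to specialize the general change-of-variables formulas derived immediately above the statement to the particular map $h$, which here is simply the translation $h(x,t) = (x-x_0, t-t_0)$. Because $h$ is a translation, its Jacobian is the identity matrix: $\xi_x = 1$, $\xi_t = 0$, $\tau_x = 0$, $\tau_t = 1$, and every second (and higher) partial derivative of $\xi$ and $\tau$ with respect to $x$ and $t$ vanishes. Plugging this into the formula $(u_x, u_t) = \bigl(\uti_\xi|_h\,\xi_x + \uti_\tau|_h\,\tau_x,\ \uti_\xi|_h\,\xi_t + \uti_\tau|_h\,\tau_t\bigr)$ collapses it to $u_x = \uti_\xi|_h$ and $u_t = \uti_\tau|_h$; likewise the displayed formula for $\uti_{\xi\xi}$ loses all the terms involving $x_{\xi\xi}$, $t_{\xi\xi}$, etc.\ and reduces to $\uti_{\xi\xi} = u_{xx}|_{h^{-1}}$, and iterating this once more gives $\uti_{\xi\xi\xi} = u_{xxx}|_{h^{-1}}$. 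By definition $\uti = u|_{h^{-1}}$ as well.

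Next I would assemble the KdV combination built from $\uti$ and substitute the identities above to obtain the pointwise equality
$$\uti_\tau - 6\,\uti\,\uti_\xi + \uti_{\xi\xi\xi} = \bigl(u_t - 6 u u_x + u_{xxx}\bigr)\big|_{h^{-1}},$$
valid at every $(\xi,\tau) \in \R^2$.

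Finally, since $h^{-1}:\R^2 \to \R^2$ is a bijection (it is again a translation), the right-hand side of this identity vanishes for all $(\xi,\tau)$ precisely when $u_t - 6 u u_x + u_{xxx}$ vanishes for all $(x,t)$. Reading the resulting equivalence in both directions yields that $\uti$ solves KdV if and only if $u$ does, which is the claim.

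There is essentially no serious obstacle here; the proof is bookkeeping. The only two points that warrant a little care are (i) keeping the foot-point notation straight — distinguishing $\,\cdot\,|_h$ from $\,\cdot\,|_{h^{-1}}$ — when invoking the general transformation formulas, and (ii) making explicit that it is the surjectivity of $h^{-1}$ onto all of $\R^2$ that upgrades the pointwise identity into a genuine equivalence of the two PDEs, rather than merely a one-way implication.
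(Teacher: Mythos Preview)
Your proposal is correct and follows essentially the same approach as the paper's proof: specialize the general change-of-variables formulas to the translation $h$ (whose Jacobian is the identity), read off $\uti_\tau = u_t|_{h^{-1}}$, $\uti_\xi = u_x|_{h^{-1}}$, $\uti_{\xi\xi\xi} = u_{xxx}|_{h^{-1}}$, and assemble the KdV expression to obtain the pointwise identity $(\uti_\tau \pm 6\uti\uti_\xi + \uti_{\xi\xi\xi}) = (u_t \pm 6uu_x + u_{xxx})\circ h^{-1}$. The only discrepancy is the sign in front of the nonlinear term (you use $-6$, the paper in this section uses $+6$), which is purely a convention and does not affect the argument; your added remark that bijectivity of $h^{-1}$ is what upgrades the pointwise identity to an equivalence is a detail the paper leaves implicit.
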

\end{framed}

\begin{proof}
 We calculate
 \begin{align*}
  \uti_\tau + 6 \uti \uti_\xi + \uti_{\xi \xi \xi} & = u_t|_{h^{-1}} + 6 u|_{h^{-1}} u_x|_{h^{-1}} + u_{xxx}|_{h^{-1}} = (u_t+ 6 u u_x + u_{xxx}) \circ h^{-1}
 \end{align*}
 which proves the claim.
\end{proof}

Moreover, we observe

\begin{framed}
\begin{lemma}
\label{KdVmult}
 Let $u \in C^3(\R^2, \R) $ and $a, b \in \R^{\neq 0}$ and $h: \R^2 \to \R^2$, $h(x,t)=: (ax, bt)$ and set $\uti := u \circ h^{-1}$, i.e., $\uti(\xi, \tau)=u\left(\frac{1}{a} \xi, \frac{1}{b} \tau \right)$. Then KdV transforms as follows:
 $$
 \uti_\tau + 6 \uti \uti_\xi + \uti_{\xi \xi \xi} = \left( \frac{1}{b} u_t + \frac{6}{a} u u_x + \frac{1}{a^3} u_{xxx} \right) \circ h^{-1}.
 $$
\end{lemma}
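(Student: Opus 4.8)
The plan is to proceed exactly as in the chain-rule computations preceding Lemma~\ref{KdVtrans}, specialized to the linear change of coordinates $h(x,t)=(ax,bt)$, whose inverse is $h^{-1}(\xi,\tau)=\bigl(\tfrac1a\xi,\tfrac1b\tau\bigr)$. The point is that $h$ is linear, so $\xi_x=a$, $\tau_t=b$, $\xi_t=\tau_x=0$, and all second (and higher) derivatives of the coordinate functions vanish. Consequently the general higher-order chain rule — which in the translation case already collapsed, and here carries only the extra scalar factors — loses every mixed term, and the computation reduces to tracking powers of $\tfrac1a$ and $\tfrac1b$.

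Concretely, first I would record the first-order derivatives. From $\uti(\xi,\tau)=u\bigl(\tfrac1a\xi,\tfrac1b\tau\bigr)$ one obtains
$$\uti_\xi=\tfrac1a\,u_x|_{h^{-1}},\qquad \uti_\tau=\tfrac1b\,u_t|_{h^{-1}}.$$
Then, iterating the $\xi$-derivative and using that $\tfrac1a$ is a constant, one gets $\uti_{\xi\xi}=\tfrac1{a^2}\,u_{xx}|_{h^{-1}}$ and hence $\uti_{\xi\xi\xi}=\tfrac1{a^3}\,u_{xxx}|_{h^{-1}}$. Also $\uti\,\uti_\xi=u|_{h^{-1}}\cdot\tfrac1a u_x|_{h^{-1}}=\tfrac1a\,(uu_x)|_{h^{-1}}$.

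Next I would substitute these into the left-hand side of the claimed identity:
$$\uti_\tau+6\uti\uti_\xi+\uti_{\xi\xi\xi}=\tfrac1b\,u_t|_{h^{-1}}+\tfrac6a\,(uu_x)|_{h^{-1}}+\tfrac1{a^3}\,u_{xxx}|_{h^{-1}}=\Bigl(\tfrac1b u_t+\tfrac6a uu_x+\tfrac1{a^3}u_{xxx}\Bigr)\circ h^{-1},$$
which is exactly the assertion of Lemma~\ref{KdVmult}.

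There is no genuine obstacle: the argument is a routine differentiation. The only care needed is bookkeeping — attaching the correct scalar ($\tfrac1b$ to the time term, $\tfrac1a$ to the nonlinear term, $\tfrac1{a^3}$ to the third-order term) and keeping the "evaluated at $h^{-1}$" tags consistent so that the right-hand side is honestly a single function precomposed with $h^{-1}$. It is worth noting, looking ahead, that because the three coefficients generally scale differently, this transformation alone does \emph{not} map KdV solutions to KdV solutions; combining it with the scaling $u\mapsto\ka u$ of Lemma~\ref{KdVscalMult} to match all three coefficients simultaneously is what will yield a true symmetry.
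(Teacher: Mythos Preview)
Your proposal is correct and follows essentially the same approach as the paper: compute $\uti_\tau=\tfrac1b u_t|_{h^{-1}}$, $\uti_\xi=\tfrac1a u_x|_{h^{-1}}$, $\uti_{\xi\xi}=\tfrac1{a^2}u_{xx}|_{h^{-1}}$, $\uti_{\xi\xi\xi}=\tfrac1{a^3}u_{xxx}|_{h^{-1}}$ via the chain rule for the linear map $h$, and substitute. The paper's proof is in fact even terser than yours, simply recording $Dh$ and $D(h^{-1})$ and listing these four derivatives before concluding.
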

\end{framed}

\begin{proof}
 We calculate $Dh=\left(\begin{smallmatrix} a & 0 \\ 0 & b \end{smallmatrix} \right) $ and $D(h^{-1}) = \left(\begin{smallmatrix} \frac{1}{a} & 0 \\ 0 & \frac{1}{b} \end{smallmatrix} \right)$ and get 
 \begin{align*}
 \uti_\tau = \frac{1}{b}u_t|_{h^{-1}}, \quad \uti_\xi = \frac{1}{a}u_x|_{h^{-1}}, \quad \uti_{\xi \xi} = \frac{1}{a^2}u_x|_{h^{-1}}, \quad  \quad \uti_{\xi \xi \xi} = \frac{1}{a^3}u_x|_{h^{-1}}
 \end{align*}
which implies the claim.
\end{proof}

This leads to

\begin{framed}
\begin{lemma}[{\bf Coupled scaling invariance\index{Coupled scaling invariance of KdV}}]
\label{KdVscaling}
 Let $u \in C^3(\R^2, \R) $ and $a, b, \ka \in \R^{\neq 0}$ and $h: \R^2 \to \R^2$, $h(x,t)=: (ax, bt)$ and set $\uti := \frac{1}{\ka} u \circ h^{-1}$, i.e., $\uti(\xi, \tau) = \frac{1}{\ka} u\left(\frac{1}{a} \xi, \frac{1}{b} \tau \right)$. Then 
 \begin{enumerate}[label=\arabic*)]
  \item 
  $
 \uti_\tau + 6 \uti \uti_\xi + \uti_{\xi \xi \xi} = \left( \frac{1}{b\ka} u_t + \frac{6}{a\ka^2} u u_x + \frac{1}{a^3\ka} u_{xxx} \right) \circ h^{-1}
 $
 \item
 If $\uti$ is a solution of KdV and $\frac{b}{a \ka}=1 = \frac{b}{a^3}$ then $u$ is a solution of KdV.
 \item
 If $u$ is a solution of KdV and $\frac{b}{a \ka}=1 = \frac{b}{a^3}$ then $\uti$ is a solution of KdV.
 \item
  The solutions $(a, b, \ka)$ of $\frac{b}{a \ka}=1 = \frac{b}{a^3}$ are given by $(a, b, \ka)=(\lam, \lam^3, \lam^2)$ for $\lam \in \R$.
 \end{enumerate}
\end{lemma}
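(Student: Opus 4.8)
The plan is to reduce everything to the two computational lemmas already established, namely \refKdVscalMult\ (the behaviour of KdV under $u \mapsto \ka u$) and \refKdVmult\ (the behaviour under the rescaling $h(x,t)=(ax,bt)$), and then, for part 4), to solve a small system of two equations.

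First I would prove 1) by a direct chain-rule computation that simply superposes the two previous lemmas. With $h(x,t)=(ax,bt)$, so that $Dh=\mathrm{diag}(a,b)$ and $D(h^{-1})=\mathrm{diag}(1/a,1/b)$, and $\uti(\xi,\tau)=\tfrac1\ka\,u(\tfrac1a\xi,\tfrac1b\tau)$, one reads off
\[
\uti_\tau=\tfrac{1}{b\ka}\,u_t|_{h^{-1}},\qquad
\uti_\xi=\tfrac{1}{a\ka}\,u_x|_{h^{-1}},\qquad
\uti_{\xi\xi\xi}=\tfrac{1}{a^3\ka}\,u_{xxx}|_{h^{-1}},
\]
exactly as in \refKdVmult, the extra factor $1/\ka$ coming along linearly. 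The one point that needs attention is the nonlinear term, which scales differently: $\uti\,\uti_\xi=\tfrac1\ka u|_{h^{-1}}\cdot\tfrac{1}{a\ka}u_x|_{h^{-1}}=\tfrac{1}{a\ka^2}(uu_x)|_{h^{-1}}$, picking up $\ka^{-2}$ rather than $\ka^{-1}$. Adding the three contributions yields
\[
\uti_\tau+6\uti\uti_\xi+\uti_{\xi\xi\xi}=\Bigl(\tfrac{1}{b\ka}u_t+\tfrac{6}{a\ka^2}uu_x+\tfrac{1}{a^3\ka}u_{xxx}\Bigr)\circ h^{-1},
\]
which is the claim of 1).

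For 2) and 3) I would multiply this identity through by the nonzero constant $b\ka$, obtaining
\[
b\ka\bigl(\uti_\tau+6\uti\uti_\xi+\uti_{\xi\xi\xi}\bigr)=\Bigl(u_t+\tfrac{6b}{a\ka}uu_x+\tfrac{b}{a^3}u_{xxx}\Bigr)\circ h^{-1},
\]
and then impose $\tfrac{b}{a\ka}=1=\tfrac{b}{a^3}$, which collapses the right-hand side to $(u_t+6uu_x+u_{xxx})\circ h^{-1}$. Since $b\ka\neq 0$ and $h^{-1}$ is a bijection of $\R^2$, the left-hand side vanishes identically if and only if $u_t+6uu_x+u_{xxx}$ does, so both implications follow at once.

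Finally, for 4) I would solve the system directly: $\tfrac{b}{a^3}=1$ forces $b=a^3$, and $\tfrac{b}{a\ka}=1$ forces $b=a\ka$, hence $a^3=a\ka$ and, using $a\neq 0$, $\ka=a^2$. Setting $\lam:=a$ gives $(a,b,\ka)=(\lam,\lam^3,\lam^2)$, and conversely every such triple with $\lam\neq 0$ satisfies both equations. I expect no genuine obstacle here — the entire lemma is bookkeeping — the only thing worth flagging is precisely the $\ka^{-2}$ scaling of the nonlinearity noted above, which is the structural reason one cannot normalize the equation with a single scaling parameter and is forced into the coupled three-parameter family.
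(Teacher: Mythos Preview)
Your proof is correct and follows essentially the same route as the paper: part 1) by combining \refKdVscalMult\ and \refKdVmult, parts 2) and 3) by multiplying through by $b\ka$ and using that $h^{-1}$ is a bijection, and part 4) by direct elimination. Your observation that one needs $\lam\neq 0$ in part 4) is in fact more precise than the paper's statement, which writes $\lam\in\R$ but tacitly assumes $a,b,\ka\neq 0$.
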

\end{framed}

\begin{proof}
{\bf 1)} \refKdVscalMult\ and \refKdVmult\ together yield
\begin{equation}
 \label{eqKdVscaling}
 \uti_\tau + 6 \uti \uti_\xi + \uti_{\xi \xi \xi} = \left( \frac{1}{b\ka} u_t + \frac{6}{a\ka^2} u u_x + \frac{1}{a^3\ka} u_{xxx} \right) \circ h^{-1}
\end{equation}
{\bf 2)} If $\uti$ is a solution of KdV then the left hand side of \eqref{eqKdVscaling} vanishes and multiplication with $b\ka$ leads to 
$$0 =\left( u_t + 6\frac{b}{a\ka} u u_x + \frac{b}{a^3} u_{xxx} \right) \circ h^{-1}$$
Thus, if $\frac{b}{a \ka}=1 = \frac{b}{a^3}$ then $u$ solves KdV. 

\noindent
{\bf 3)} Similar to {\bf 2)}.

\noindent
{\bf 4)} From $ \frac{b}{a^3}=1$ we get $b=a^3$. Using this, $\frac{b}{a \ka}=1 $ leads to $\ka= a^2$.
\end{proof}

Furthermore

\begin{framed}
\begin{lemma}[{\bf Galilean boost\index{Galilean boost}}]
\label{KdVgali}
 Let $h : \R^2 \to \R^2$, $h(x,t)=(x-\lam t, t)$ and let $\lam \in \R$. Then $u \in C^3(\R^2, \R)$ is a solution of KdV if and only if $v:=(u \circ h^{-1}) - \frac{\lam}{6}$ is a solution of KdV.
\end{lemma}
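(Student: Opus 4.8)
The plan is to carry out the change of variables $h(x,t)=(x-\lambda t,t)=:(\xi,\tau)$ explicitly, compute how the KdV differential operator transforms, and then check that the extra constant shift $-\lambda/6$ is exactly what is needed to cancel the unwanted linear term produced by the boost. Write $\uti:=u\circ h^{-1}$, so that $\uti(\xi,\tau)=u(\xi+\lambda\tau,\tau)$, and set $v:=\uti-\tfrac{\lambda}{6}$.

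First I would record the Jacobians. From $h(x,t)=(x-\lambda t,t)$ we get $Dh=\left(\begin{smallmatrix}1 & -\lambda\\ 0 & 1\end{smallmatrix}\right)$ and $D(h^{-1})=\left(\begin{smallmatrix}1 & \lambda\\ 0 & 1\end{smallmatrix}\right)$, i.e.\ $x_\xi=1$, $x_\tau=\lambda$, $t_\xi=0$, $t_\tau=1$, all of which are constant (so all second derivatives of the coordinate functions vanish). Using the chain-rule formulas recalled just before \refKdVtrans, this gives
\begin{align*}
 \uti_\xi &= u_x|_{h^{-1}}, & \uti_{\xi\xi}&=u_{xx}|_{h^{-1}}, & \uti_{\xi\xi\xi}&=u_{xxx}|_{h^{-1}},\\
 \uti_\tau &= u_t|_{h^{-1}} + \lambda\, u_x|_{h^{-1}}.
\end{align*}
In particular, since $v$ and $\uti$ differ by a constant, $v_\xi=\uti_\xi$, $v_{\xi\xi\xi}=\uti_{\xi\xi\xi}$, and $v_\tau=\uti_\tau$, while $v=\uti-\tfrac{\lambda}{6}$.

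Next I would substitute these into the KdV operator applied to $v$ and collect terms:
\begin{align*}
 v_\tau + 6 v v_\xi + v_{\xi\xi\xi}
 &= \Bigl(u_t|_{h^{-1}} + \lambda u_x|_{h^{-1}}\Bigr) + 6\Bigl(u|_{h^{-1}}-\tfrac{\lambda}{6}\Bigr)u_x|_{h^{-1}} + u_{xxx}|_{h^{-1}}\\
 &= u_t|_{h^{-1}} + \lambda u_x|_{h^{-1}} + 6 u|_{h^{-1}} u_x|_{h^{-1}} - \lambda u_x|_{h^{-1}} + u_{xxx}|_{h^{-1}}\\
 &= \bigl(u_t + 6 u u_x + u_{xxx}\bigr)\circ h^{-1}.
\end{align*}
Hence $v$ solves KdV if and only if $(u_t+6uu_x+u_{xxx})\circ h^{-1}\equiv 0$, which (as $h^{-1}$ is a diffeomorphism of $\R^2$) holds if and only if $u$ solves KdV. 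This proves both implications at once.

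There is no real obstacle here: the only point requiring a little care is the bookkeeping of the foot-point notation $|_{h^{-1}}$ and making sure the chain rule for $\uti_\tau$ picks up the term $x_\tau\, u_x=\lambda u_x$ (this is where the boost velocity enters), together with the observation that adding a constant to $\uti$ leaves all derivatives unchanged but shifts the coefficient of $u_x$ by exactly $-\lambda$. Choosing the constant to be $-\lambda/6$ is dictated precisely by the factor $6$ in front of the nonlinear term. If one instead works with the variant $u_t+\kappa uu_x+u_{xxx}=0$ from the Remark, the same computation forces the constant shift to be $-\lambda/\kappa$.
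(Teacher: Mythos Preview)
Your proof is correct and follows essentially the same approach as the paper: you compute the Jacobians of $h$ and $h^{-1}$, derive the chain-rule identities $\uti_\xi=u_x|_{h^{-1}}$, $\uti_{\xi\xi\xi}=u_{xxx}|_{h^{-1}}$, $\uti_\tau=u_t|_{h^{-1}}+\lambda u_x|_{h^{-1}}$, and then substitute $v=\uti-\tfrac{\lambda}{6}$ into the KdV operator to obtain $(u_t+6uu_x+u_{xxx})\circ h^{-1}$. The only cosmetic difference is that the paper first displays $\uti_\tau+6\uti\uti_\xi+\uti_{\xi\xi\xi}$ as an intermediate step before introducing $v$, whereas you go directly to the computation for $v$; the content is identical.
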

\end{framed}

\begin{proof}
 We calculate $Dh=\left(\begin{smallmatrix} 1 & -\lam \\ 0 & 1 \end{smallmatrix}\right)$. Finding $h^{-1}(\xi, \tau) = (\xi + \lam \tau, \tau)$, we get $D(h^{-1})=\left(\begin{smallmatrix} 1 & \lam \\ 0 & 1 \end{smallmatrix}\right)$ and thus for $\uti: = u \circ h^{-1}$
 \begin{align*}
  \uti_\xi = u_x|_{h^{-1}}, \quad \uti_{\xi \xi} = u_{xx}|_{h^{-1}} , \quad \uti_{\xi \xi \xi} = u_{xxx}|_{h^{-1}} , \quad \uti_\tau = \lam u_x|_{h^{-1}} + u_t|_{h^{-1}}
 \end{align*}
which leads to
\begin{align*}
  \uti_\tau + 6 \uti \uti_\xi + \uti_{\xi \xi \xi} & = (u_t + (\lam + 6u)u_x + u_{xxx} ) \circ h^{-1} \\
   & = \left(u_t + 6\left(\frac{\lam}{6} + u \right)u_x + u_{xxx} \right) \circ h^{-1}
\end{align*}
Now set $v:= \left(u-\frac{\lam}{6} \right) \circ h^{-1}= \uti - \frac{\lam}{6} $ and note that 
$$
v_\tau=\uti_\tau, \quad v_\xi = \uti_\xi, \quad  v_{\xi \xi} = \uti_{\xi \xi}, \quad  v_{\xi \xi \xi} = \uti_{\xi \xi \xi}.
$$
Thus we get
\begin{align*}
 v_\tau + 6 v v_\xi + v_{\xi \xi \xi} & = \uti_\tau + 6 \left(\uti - \frac{\lam}{6} \right) \uti_\xi + \uti_{\xi \xi \xi}  \\
 & = \left( \lam u_x + u_t + 6 \left(u - \frac{\lam}{6} \right) u_x + u_{xxx} \right) \circ h^{-1} \\
 & = (u_t + 6 uu_x + u_{xxx}) \circ h^{-1}
\end{align*}
Thus $v$ is a solution of KdV if and only if $u$ is a solution of KdV. 
\end{proof}

Altogether, we get

\begin{framed}
\begin{corollary}[{\bf Symmetries of KdV}\index{Symmetry}]
 The KdV equation has precisely four `symmetries' that transform solutions into solutions, namely translation in space, translation in time, coupled scaling, and the Galilean boost. 
 \end{corollary}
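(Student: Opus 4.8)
The plan is to classify all transformations of the form $\uti := \tfrac{1}{\ka}\,(u\circ h^{-1}) + \mu$ with $\ka\in\R^{\neq 0}$, $\mu\in\R$ and $h\colon\R^2\to\R^2$ an invertible affine map that send \emph{every} solution of KdV to a solution, and then to check that the admissible ones depend on exactly four free parameters, one for each symmetry in the list. The five lemmas of this section already supply all the building blocks; the new content is that, inside this natural class, nothing else occurs. As a first reduction, write $h(x,t)=(a_{11}x+a_{12}t+x_0,\ a_{21}x+a_{22}t+t_0)$ and split off the additive constants as a composition with a pure translation: by \refKdVtrans\ this is always a symmetry and it accounts for translation in space (parameter $x_0$) and in time (parameter $t_0$), so we may assume $h$ linear.

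\textbf{Step 1 (normalising $h$).} The crucial point — and, I expect, the main obstacle — is that a linear admissible $h$ must have $a_{21}=0$: the new time variable cannot depend on the old space variable. Indeed, if $a_{21}\neq 0$ then $\tau$ depends on $x$, so after the change of variables $\uti_{\xi\xi\xi}$ acquires terms containing $u_{xxt}$, $u_{xtt}$ and $u_{ttt}$; restricting to a KdV solution and using $u_t = -6uu_x-u_{xxx}$ (and its derivatives) to eliminate all $t$-derivatives, these terms produce purely spatial derivatives of $u$ of order up to nine, which cannot be cancelled against the remaining terms (of order $\le 3$) unless their coefficients already vanish. This forces $a_{21}=0$, so $h(x,t)=(ax+ct,\ bt)$ with $a,b\in\R^{\neq 0}$ and $c\in\R$.

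\textbf{Step 2 (the computation).} Combining \refKdVscalMult, \refKdVmult\ and the shear part of the derivative computation in \refKdVgali\ yields
\[
\uti_\tau + 6\,\uti\,\uti_\xi + \uti_{\xi\xi\xi} \;=\; \Bigl(\tfrac{1}{b\ka}\,u_t \;+\; \tfrac{6}{a\ka^{2}}\,u u_x \;+\; \bigl(\tfrac{6\mu}{a\ka}-\tfrac{c}{ab\ka}\bigr)u_x \;+\; \tfrac{1}{a^{3}\ka}\,u_{xxx}\Bigr)\circ h^{-1}.
\]
For $\uti$ to solve KdV whenever $u$ does, the bracket must be a scalar multiple of $u_t+6uu_x+u_{xxx}$; matching the coefficients of $u_t$, $uu_x$, $u_x$ and $u_{xxx}$ forces $b=a^{3}$, then $\ka=a^{2}$ — precisely the relations in the last part of \refKdVscaling — and finally $6\mu = c/a^{3}$, i.e.\ $\mu=\tfrac{c}{6a^{3}}$. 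Writing $\lam:=-c$ recovers, for $a=1$, the normalisation $\mu=-\lam/6$ of \refKdVgali.

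\textbf{Step 3 (conclusion).} Hence the admissible transformations form a four-parameter family, with $(a,c,x_0,t_0)$ free and $b,\ka,\mu$ determined as above; it is generated by coupled scaling (parameter $a$, cf.\ the last part of \refKdVscaling), the Galilean boost (parameter $c$), translation in space ($x_0$) and translation in time ($t_0$), and every admissible transformation factors as a composition of these four — which is the asserted statement. I should add the caveat that a genuinely exhaustive classification, allowing arbitrary smooth point transformations of $(x,t,u)$ rather than affine ones, needs Lie's prolongation formalism and lies beyond the present scope; what the argument above establishes is that no additional symmetry is hidden inside the natural affine class that contains the four standard ones.
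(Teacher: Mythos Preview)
Your approach differs substantially from the paper's. The paper does not attempt a self-contained argument for the ``precisely four'' claim at all: it simply cites the earlier lemmas for the existence of the four symmetries and then refers the completeness statement to \cite[Chapter~2 and Example~2.44]{olver}, where Lie's prolongation machinery is applied to KdV. Your proposal, by contrast, carries out an explicit classification within the natural affine class $\uti=\tfrac{1}{\ka}(u\circ h^{-1})+\mu$ and arrives at a four-parameter family by direct coefficient matching. The computation in Step~2 is correct and nicely unifies \refKdVscalMult, \refKdVmult, \refKdVscaling\ and \refKdVgali\ into a single identity; the closing caveat that arbitrary point transformations require Lie theory is exactly the honest disclaimer the situation demands, and in that sense your argument proves neither more nor less than the paper's --- both leave the full classification to Olver.

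The one place your argument is soft is Step~1. The claim that $a_{21}\neq 0$ forces ninth-order spatial derivatives after eliminating $t$-derivatives via KdV is correct in spirit, but as written it is a sketch rather than a proof: you would need to track at least the leading term (e.g.\ observe that $u_{ttt}$, after three substitutions of $u_t=-6uu_x-u_{xxx}$, contributes a nonzero multiple of $u_{9x}$, while every other term in $\uti_\tau+6\uti\uti_\xi+\uti_{\xi\xi\xi}$ has spatial order at most seven) and argue that this top-order term cannot be cancelled for \emph{generic} solutions $u$. This is routine but should be stated, since otherwise a reader may wonder whether some miraculous cancellation is being overlooked.
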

\end{framed}

\begin{proof}
 \refKdVtrans, \refKdVscaling, and \refKdVgali\ showed that translation in space, translation in time, coupled scaling, and the Galilean boost transform solutions of KdV into new solutions. It is much more involved to prove that this are in fact {\em all} symmetries: We refer to \cite[Chapter 2]{olver} for the necessary theory and to \cite[Example 2.44]{olver} for the application of this theory to the KdV equation where this claim is proven.
\end{proof}


\clearpage

\section{Initial value problem for KdV}

\noindent
When dealing with ODEs, there are the theorems of Peano\index{Peano}{\index{Theorem of Peano} and Picard-Lindel\"of\index{Picard-Lindel\"of}\index{Theorem of Picard-Lindel\"of} that establish existence resp.\ existence $\&$ uniqueness of solutions of initial value problems in a very general way: If the function underlying an ODE is continuous, we have existence of solutions of initial value problems. If the function underlying an ODE is locally lipschitz, we have existence and uniqueness of solutions of initial value problems.

Unfortunately, there is no such general theory of existence and uniqueness for PDEs --- here each class of PDEs has to be studied separately if it admits existence and maybe even uniqueness of solutions of initial value problems. The KdV equation $u_t + 6uu_x + u_{xxx}=0$ indeed admits existence and uniqueness theorems. We will prove a uniqueness theorem using quite strong assumptions to make the proof easier. More general existence theorems with weak(er) assumptions can be found in \cite[Chapter 1.1]{zhidkov} who is dealing in later chapters with other `qualitative' questions of KdV such as stability, convergence, and invariant measures, briefly, with notions related to the theory of classical dynamical systems.

\begin{framed}
\begin{definition}
 Consider $\R^2$ with coordinates $(x, t)$ and let $u: \R \times \R \to \R$ be a function.
 \begin{enumerate}[label=\arabic*), leftmargin=*,align=left]
  \item 
  $u$ is said to have {\bf initial data\index{Initial data} $u_0: \R \to \R$ in $t_0\in \R$} if $u(x, t_0) = u_0(x)$ for all $x \in \R$.
  \item 
  $u$  {\bf decays sufficiently rapidly}\index{Decaying sufficiently rapidly} if $\lim_{x \to \pm \infty} \del_x^k u(x, t) = 0$ for all $t \in \R$ and all $k \in \N_0$.
  \item
  The {\bf energy\index{Energy}} of $u$ is given by 
  $$E_u: \R \to \R^{\geq 0}, \quad E_u(t):= \int_{-\infty}^\infty \abs{u(x,t)}^2 \ dx. $$  
 \end{enumerate}
\end{definition}
\end{framed}

The following statement stems in fact from ODE theory but is often very useful when estimating a growth rate of solutions. It is due to the (originally Swedish) mathematician Thomas Hakon Gr\"onwall (1877 -- 1932) who, after emigrating to the USA, spelt his name {\em Gronwall}. There are in fact two formulations, the first one based on a differential equation, the second one using an integral equation. We will recall (and later use) the first one:

\begin{framed}
 \begin{lemma}[Gronwall\index{Gronwall}\index{Lemma of Gronwall}]
  \label{gronwall}
  Let $t_0$, $T \in \R$ with $t_0<T$ and let $I \subset \R$ be an interval of the form $[t_0, T]$ or $[t_0, T[$ or $[t_0, \infty[$.
   Let $\al, \be \in C^0(I, \R)$ and let $\al$ be in addition differentiable on $\Int(I)$. Moreover, assume that 
   $ \al'(t) \leq \be (t) \al(t)$ for all $ t \in \Int(I)$.
   Then 
   $$\al(t) \leq \al(t_0) \exp\left( \int_{t_0}^t \be(s) \ ds \right) \quad\ \mbox{for all } t \in I. $$
 \end{lemma}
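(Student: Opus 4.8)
The plan is to reduce the differential inequality to a statement about monotonicity by multiplying through with a suitable integrating factor, exactly as one linearizes a first-order linear ODE. Concretely, define $\mu(t) := \exp\left(-\int_{t_0}^t \be(s)\,ds\right)$, which is well defined, positive, and $C^1$ on $I$ since $\be$ is continuous. The key observation is that $\mu$ satisfies $\mu'(t) = -\be(t)\mu(t)$ on $\Int(I)$, so that for the product $\al \mu$ we obtain, by the product rule,
\[
(\al\mu)'(t) = \al'(t)\mu(t) + \al(t)\mu'(t) = \bigl(\al'(t) - \be(t)\al(t)\bigr)\mu(t) \leq 0
\]
for all $t \in \Int(I)$, where the inequality uses the hypothesis $\al'(t) \leq \be(t)\al(t)$ together with $\mu(t) > 0$.

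Next I would invoke the standard fact that a continuous function on $I$ that is differentiable on $\Int(I)$ with nonpositive derivative is nonincreasing on $I$ (a one-line consequence of the mean value theorem). Applying this to $\al\mu$ gives $(\al\mu)(t) \leq (\al\mu)(t_0)$ for all $t \in I$. Since $\mu(t_0) = \exp(0) = 1$, this reads $\al(t)\mu(t) \leq \al(t_0)$, and dividing by the strictly positive quantity $\mu(t)$ — equivalently, multiplying by $\exp\left(\int_{t_0}^t \be(s)\,ds\right) = 1/\mu(t)$ — yields
\[
\al(t) \leq \al(t_0)\exp\left(\int_{t_0}^t \be(s)\,ds\right) \qquad \text{for all } t \in I,
\]
which is the claim.

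I do not expect any genuine obstacle here; the proof is short and essentially forced once one thinks of the integrating factor. The only minor points requiring a little care are bookkeeping at the endpoints: one must check that $\al\mu$ is continuous on all of $I$ (clear, as a product of continuous functions) and differentiable on $\Int(I)$ (clear, as $\al$ is differentiable there and $\mu$ is $C^1$), so that the mean value theorem argument applies on the closed or half-open interval; and one should note that the integral $\int_{t_0}^t \be(s)\,ds$ makes sense for every $t \in I$ because $\be$ is continuous on $I$, including the case $I = [t_0,\infty[$ where $t$ ranges over an unbounded set but each individual integral is still over a compact subinterval. No additional hypotheses (such as sign conditions on $\al$ or $\be$) are needed for this one-sided version.
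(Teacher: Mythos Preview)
Your proof is correct and is essentially the same as the paper's: the paper sets $B(t) := \exp\bigl(\int_{t_0}^t \be(s)\,ds\bigr)$ and shows $(\al/B)' \leq 0$, which is exactly your computation with $\mu = 1/B$. The only cosmetic difference is that you multiply by the integrating factor while the paper divides by its reciprocal.
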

\end{framed}

\begin{proof}
Set $B: I \to \R^{>0}$, $B(t):=\exp\left( \int_{t_0}^t \be(s) \ ds \right)$ and keep in mind that $B(t_0)=\exp(0)=1$ and $B(t)>0$ for all $t \in I$. Note that $B(t)$ satisfies $B'(t) = \be(t) B(t)$ for all $t \in I$ and calculate
\begin{align*}
 \left(\frac{\al}{B}\right)'(t) & = \frac{\al'(t) B(t) - \al(t) B'(t)}{B^2(t)} =\frac{\al'(t) B(t) - \al(t) \be(t) B(t)}{B^2(t)}  \\
 & = \frac{\al'(t) - \al(t) \be(t)}{B(t)} \leq 0
\end{align*}
since $ \al'(t) \leq  \al(t)\be (t)$ by hypothesis and $B>0$. Thus $t \mapsto \frac{\al(t)}{B(t)}$ is a decreasing function and therefore attains it maximum at $t_0$. This means in particular $\al(t_0) = \frac{\al(t_0)}{B(t_0)} \geq \frac{\al(t)}{B(t)}$ for all $t \in I$ which is equivalent to
$$
\al(t) \leq \al(t_0) B(t) = \al(t_0)\exp\left( \int_{t_0}^t \be(s) \ ds \right)
$$
for all $t \in I$, proving the claim.
\end{proof}

One useful application of \refgronwall\ is

\begin{framed}
 \begin{corollary}
 \label{initialNull}
Let $\al$ satisfy the hypotheses of \refgronwall. If $\al(t_0)=0$ and $\al(t)\geq 0$ for all $t>t_0$, then $\al(t)=0$ for all $t\geq t_0$.
 \end{corollary}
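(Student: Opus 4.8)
The plan is to apply Lemma \ref{gronwall} directly and then squeeze $\al$ between two bounds. First I would note that $\al$ satisfies, by hypothesis, all the assumptions of the Gronwall Lemma on the interval $I$, with $t_0$ as its left endpoint; in particular $\al \in C^0(I,\R)$, $\al$ is differentiable on $\Int(I)$, and $\al'(t) \leq \be(t)\al(t)$ for all $t \in \Int(I)$. Hence the conclusion of Lemma \ref{gronwall} applies verbatim.

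Next I would simply evaluate the Gronwall estimate using $\al(t_0) = 0$. Since
$$
\al(t) \leq \al(t_0) \exp\left( \int_{t_0}^t \be(s)\,ds \right) = 0 \cdot \exp\left( \int_{t_0}^t \be(s)\,ds \right) = 0
$$
for all $t \in I$, we obtain $\al(t) \leq 0$ for every $t \geq t_0$ in $I$. Combining this with the standing assumption $\al(t) \geq 0$ for all $t > t_0$ (and $\al(t_0) = 0$ at the endpoint itself) forces $\al(t) = 0$ for all $t \geq t_0$.

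There is essentially no obstacle here: the statement is a one-line corollary of Lemma \ref{gronwall}, and the only thing to be slightly careful about is bookkeeping on the interval $I$ — making sure the endpoint $t_0$ is handled (where $\al(t_0)=0$ is given outright) and that the inequality $\al(t) \geq 0$ is invoked on the correct range. No calculation beyond substituting $\al(t_0)=0$ into the exponential bound is required.
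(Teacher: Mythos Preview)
Your proposal is correct and follows exactly the same approach as the paper: substitute $\al(t_0)=0$ into the Gronwall bound to obtain $\al(t)\leq 0$, then combine with the hypothesis $\al(t)\geq 0$ to squeeze $\al(t)=0$. The paper's proof is just a terser one-line version of what you wrote.
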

\end{framed}

\begin{proof}
 If $\al(t_0)=0$ and $\al(t)\geq 0$ for all $t>t_0$, then \refgronwall\ implies $0 \leq \al(t) \leq 0$, thus $\al(t)=0$ for all $t \geq t_0$.
\end{proof}

Now we are ready to study initial value problems of KdV.

\begin{theorem}
 Smooth solutions of KdV that decay sufficiently rapidly are uniquely determined by their initial data.
\end{theorem}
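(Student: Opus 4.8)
The plan is to run an $L^2$-energy estimate for the difference of two solutions and feed it into Gr\"onwall's lemma. So let $u$ and $v$ be two smooth solutions of KdV that decay sufficiently rapidly and have the same initial data at some $t_0$, i.e.\ $u(\cdot,t_0)=v(\cdot,t_0)$, and put $w:=u-v$. Although $w$ is not itself a solution of KdV, subtracting the two identities $u_t+6uu_x+u_{xxx}=0$ and $v_t+6vv_x+v_{xxx}=0$ and using $uu_x-vv_x=\tfrac12(u^2-v^2)_x=\tfrac12\bigl((u+v)w\bigr)_x$ yields the \emph{linear} (in $w$) evolution equation
\[
w_t=-3\bigl((u+v)w\bigr)_x-w_{xxx}.
\]

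Next I would differentiate the ``energy of the difference'' $E_w(t):=\int_{-\infty}^{\infty} w(x,t)^2\,dx$; that this integral converges, that one may differentiate under the integral sign, and that the boundary terms appearing below vanish are exactly the points secured by the hypotheses that $u$, $v$ and all their $x$-derivatives decay sufficiently rapidly. Using the evolution equation, integration by parts makes the dispersive contribution disappear, $\int_{-\infty}^{\infty} w\,w_{xxx}\,dx=-\int_{-\infty}^{\infty} w_x w_{xx}\,dx=-\tfrac12\bigl[w_x^2\bigr]_{-\infty}^{\infty}=0$, while the transport contribution collapses via $\int_{-\infty}^{\infty} w\bigl((u+v)w\bigr)_x\,dx=\tfrac12\int_{-\infty}^{\infty}(u+v)_x\,w^2\,dx$. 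Altogether
\[
E_w'(t)=-3\int_{-\infty}^{\infty}(u_x+v_x)\,w^2\,dx ,
\]
so with $\be(t):=3\,\bigl\|(u_x+v_x)(\cdot,t)\bigr\|_{L^\infty(\R)}$ --- finite for each $t$ because $u_x(\cdot,t)$ and $v_x(\cdot,t)$ are continuous and vanish at $\pm\infty$ --- we obtain $|E_w'(t)|\le \be(t)\,E_w(t)$ for all $t$, in particular $E_w'(t)\le\be(t)E_w(t)$.

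Since $E_w(t_0)=\int w(\cdot,t_0)^2\,dx=0$ and $E_w\ge 0$, \refinitialNull\ (a consequence of \refgronwall) applied on $[t_0,\infty[$ forces $E_w(t)=0$ for all $t\ge t_0$; applying the same reasoning to the reflected function $\ti E(s):=E_w(2t_0-s)$ on $[t_0,\infty[$ --- which satisfies the same type of differential inequality because $|E_w'|\le\be E_w$ holds for all $t$ --- gives $E_w(t)=0$ for $t\le t_0$ as well. Hence $\int w(\cdot,t)^2\,dx=0$ for every $t$, and since $w$ is continuous this means $w\equiv 0$, i.e.\ $u=v$. The one genuinely delicate point is verifying that $\be$ qualifies as a coefficient in \refgronwall, i.e.\ that $\be\in C^0$: smoothness alone together with merely pointwise-in-$t$ decay does not immediately make $t\mapsto\|(u_x+v_x)(\cdot,t)\|_{L^\infty}$ continuous, so here one either reads ``decays sufficiently rapidly'' as including local uniformity in $t$ of the decay (which is harmless for the solutions one cares about) or argues directly that, for solutions of KdV, this supremum depends continuously on $t$; everything else is routine bookkeeping with integration by parts.
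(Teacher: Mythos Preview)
Your proof is correct and follows essentially the same route as the paper: derive a linear equation for the difference, compute $\frac{d}{dt}\int w^2\,dx$ via integration by parts so that the dispersive term drops out, bound the remainder by $\beta(t)E_w(t)$, and invoke Gronwall/\refinitialNull. Your difference equation $w_t=-3\bigl((u+v)w\bigr)_x-w_{xxx}$ is the symmetric rewriting of the paper's $z_t+6u_xz+6vz_x+z_{xxx}=0$, and you additionally cover $t<t_0$ and flag the continuity of $\beta$; otherwise the arguments coincide.
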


\begin{proof}
 Let us consider two smooth solutions $u, v : \R \times \R \to \R$ of KdV with initial data $u_0$, $v_0: \R \to \R$ in $t_0 \in \R$ with $u_0 \equiv v_0$ and set $z:=u-v$. This implies in particular $E_z(t_0) = \int_{-\infty}^\infty (u_0-v_0)^2 \ dx =0$. The idea is to employ \refgronwall\ to show $E_z(t)=0$ for all $t>t_0$ and deduce then $u\equiv v$ since $(u-v)^2 \geq 0$.
 
 Now we look for a PDE of which $z$ is a solution. To this aim, plug $z$ into the KdV equation and compute
 \begin{align*}
  & z_t + 6zz_x + z_{xxx} \\
  &  = u_t - v_t + 6(u-v)(u_x-v_x) + u_{xxx} - v_{xxx} \\
  &  = u_t - v_t + 6uu_x -6uv_x -6u_x v + 6vv_x   + u_{xxx} - v_{xxx} -6vv_x + 6vv_x \\
  & = -6uv_x -6u_xv +6vv_x + 6vv_x = -6(u-v)v_x - 6(u_x - v_x) v \\
  & = -6zv_x - 6 z_x v 
 \end{align*}
which yields 
\begin{align*}
 0 & = z_t + 6zz_x + z_{xxx} + 6zv_x + 6 z_x v  =  z_t+ z_{xxx} +6zu_x + 6 z_x v.
\end{align*}
Now multiply the equation by $z$ and obtain
\begin{align*}
 0 & =  zz_t+ zz_{xxx} +6z^2 u_x + 6 zz_x v
\end{align*}
and integrate it from $- \infty $ to $\infty$ w.r.t.\ the variable $x$:
\begin{align*}
 0 & =  \int_{-\infty }^ \infty zz_t \ dx +   \int_{-\infty }^ \infty zz_{xxx} \ dx+   \int_{-\infty }^ \infty 6z^2 u_x \ dx+    \int_{-\infty }^ \infty 6 zz_x v\ dx
\end{align*}
Now analyse each integral separately. Since differentiation w.r.t.\ parameters may be pulled out of the integral if the function is sufficiently regular, we get
\begin{align*}
\int_{-\infty }^ \infty zz_t \ dx  = \frac{1}{2} \int_{-\infty }^ \infty (z^2)_t \ dx =\frac{1}{2} \frac{d}{dt}   \int_{-\infty }^ \infty z^2 \ dx .
\end{align*}
Since sufficiently rapidly decaying functions vanish at $\pm \infty$, their boundary values in partial integration vanish and we obtain
\begin{align*}
  \int_{-\infty }^ \infty zz_{xxx} \ dx = [zz_{xx}]_{-\infty}^\infty - \int_{-\infty }^ \infty z_x z_{xx} \ dx = -   \frac{1}{2}\int_{-\infty }^ \infty \left((z_x)^2 \right)_x \ dx = -   \frac{1}{2} \left[(z_x)^2\right]_{-\infty }^ \infty = 0.
\end{align*}
 Similarly we get
 \begin{align*}
 \int_{-\infty }^ \infty 6 zz_x v\ dx = 3 \left[z^2 v \right] _{-\infty }^ \infty - 3 \int_{-\infty }^ \infty  z^2 v_x \ dx = -6 \int_{-\infty }^ \infty  z^2 \frac{v_x}{2} \ dx.
\end{align*}
 Putting everything together, we get
 \begin{align*}
  0 & = \frac{1}{2} \frac{d}{dt}   \int_{-\infty }^ \infty z^2 \ dx + 0 +   \int_{-\infty }^ \infty 6z^2 u_x \ dx  -6 \int_{-\infty }^ \infty  z^2 \frac{v_x}{2} \ dx \\
  & = \frac{1}{2}  \frac{d}{dt} E_z(t) - 6 \int_{-\infty }^ \infty z^2 \left(\frac{v_x}{2} - u_x \right) \ dx .
 \end{align*}
Since $u$ and $v$ are sufficiently rapidly decaying there exists $ M(t):= \max_{x \in \R} \abs{\frac{v_x(x,t)}{2} - u_x(x,t) } < \infty $ for all $t$ which we use to estimate
\begin{align*}
 \frac{d}{dt} E_z(t) & =  12 \int_{-\infty }^ \infty z^2 \left(\frac{v_x}{2} - u_x \right) \ dx 
 \leq  12 \int_{-\infty }^ \infty z^2 \abs{\frac{v_x}{2} - u_x } \ dx
 \leq 12 \ M(t) \int_{-\infty }^ \infty z^2 \ dx  = 12 \ M(t) \ E_z(t).
\end{align*}
resulting in the differential inequality 
$$ E'_z(t) \leq 12 \ M(t) \ E_z(t) $$
for the energy $E_z$ of $z$. \refgronwall\ leads to the estimate
$$
E_z(t) \leq 12 \ E_z(t_0) \ \exp\left( \int_{t_0}^\infty M(s) \ ds\right) 
$$ 
for all $t_0 \in \R$ and all $t> t_0$. Now recall that, by assumption, $u$ and $v$ have the same initial data and that therefore $E_z(t_0)=0$.  Since $E_z(t)$ is by definition always nonnegative, we get for $t>t_0$
$$0 = E_z(t) = \int_{-\infty }^ \infty (u-v)^2 \ dx .$$
Since $(u-v)^2 $ is smooth, we conclude $(u-v)^2=0$ for all $x \in \R$ and all $t>t_0$ implying $u\equiv v$ for all $x \in \R$ and all $t>t_0$.
\end{proof}


\chapter{KdV equation: Hamiltonian PDE and integrability}

In this chapter, we study how the notions of Hamiltonian vector field, Hamiltonian equation, and integrability extend to the infinite dimensional setting.


\section{Hamiltonian formalism for the KdV equation and the KdV hierarchy}

\noindent
Inspired by \cite[Example 5.5.7]{abrahamMarsden}, we drop mathematical rigor in this section and compute purely formally to get a quick impression of the Hamiltonian formalism in infinite dimension.

Denote by $\mcE$ the space of real-valued functions in one variable that 
\begin{itemize}
 \item 
 admit at least three weak derivatives to render the KdV equation $u_t -6uu_x +u_{xxx}=0$ well-defined in a weak sense;
 \item
 have vanishing boundary terms in the sense that, when integrating by parts, $\int u_x(x) \ v(x) \ dx = \int u(x) \ v_x(x) \ dx$ should always hold true.
\end{itemize}
The latter can be e.g.\ achieved by requiring the functions to have compact support.
For more details, we refer to \cite{batlle}, \cite{dunajski}, and \cite{miwaJimboDate} for the necessary theorie of (pseudo)differential operators and suitable Hilbert and/or Banach spaces. 

Now we equip the function space $\mcE$ with the symplectic form
$$
\om_g(u,v):= \frac{1}{2} \int_{- \infty}^\infty \left( \ \int_{-\infty}^x v(x)u(y) - u(x)v(y) \ dy \right) dx
$$
where $g: \R \to \R$ lies in $\mcE$ and $u,v \in T_g \mcE$, i.e., $x \mapsto u(x) \in T_{g(x)} \R \simeq \R$ and $x \mapsto v(x) \in T_{g(x)} \R \simeq \R$ are vector fields along $x \mapsto g(x)$. Neglecting the foot point $g$, we consider $u$ and $v$ in the following as functions $u,v: \R \to \R$. 

Note that $\om$ is a so-called {\em weak} symplectic form such that we normally should be very careful when switching between the form and Hamiltonian vector fields. Consider the function
\begin{equation}
 \label{hamFctInfinite}
 H: \mcE \to \R, \qquad H(g):= \int_{- \infty}^\infty g^3(x) + \frac{1}{2} (g_x)^2(x) \ dx
\end{equation}
whose Hamiltonian vector field $X^H$ is indirectly defined via
$$
\om(X^H, v) = dH(v).
$$
\begin{framed}
 \begin{lemma}
  The Hamiltonian vector field $X^H$ of $H: (\mcE, \om)\to \R$ defined in \eqref{hamFctInfinite} is given by 
  $$X^H(g)= \del_x (3g^2 - g_{xx})= 6gg_x -g_{xxx}.$$
 \end{lemma}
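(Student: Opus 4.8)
The plan is to compute $dH(v)$ directly, manipulate it into the form $\om_g(W,v)$ for a suitable vector field $W$, and then read off $X^H = W$ from the (weak) nondegeneracy of $\om$. First I would compute the derivative of $H$ at $g$ in direction $v \in T_g\mcE$ as a one-dimensional variation: writing $H(g+sv)$, differentiating at $s=0$, and using that $(g^3)' $-type variations give $3g^2 v$ and $\frac{1}{2}((g_x+sv_x)^2)'$ gives $g_x v_x$, one obtains
\[
dH(v) = \int_{-\infty}^\infty \bigl( 3g^2(x)\, v(x) + g_x(x)\, v_x(x) \bigr)\, dx.
\]
Integrating the second term by parts and discarding the boundary term (which vanishes by the defining property of $\mcE$), this becomes $dH(v) = \int_{-\infty}^\infty (3g^2 - g_{xx})(x)\, v(x)\, dx$. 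So if we set $\phi := 3g^2 - g_{xx}$, the task reduces to showing $\om_g(\del_x \phi, v) = \int \phi\, v\, dx$.

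Next I would unwind the definition of $\om_g$ with first slot $u = \del_x\phi = \phi_x$. The expression is
\[
\om_g(\phi_x, v) = \tfrac{1}{2}\int_{-\infty}^\infty \left( \int_{-\infty}^x v(x)\phi_x(y) - \phi_x(x) v(y)\, dy \right) dx.
\]
In the inner integral, $\int_{-\infty}^x \phi_x(y)\, dy = \phi(x)$ (using decay/compact support so that $\phi(-\infty)=0$), which handles the first piece and contributes $\tfrac12\int v(x)\phi(x)\,dx$. For the second piece, $\int_{-\infty}^x \phi_x(x) v(y)\, dy = \phi_x(x)\int_{-\infty}^x v(y)\, dy = \phi_x(x)\,V(x)$ where $V(x) := \int_{-\infty}^x v(y)\, dy$; then integrating in $x$ and integrating by parts, $\int \phi_x V\, dx = -\int \phi V_x\, dx = -\int \phi v\, dx$ (boundary term vanishes). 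So the second piece contributes $-\tfrac12 \cdot(-\int \phi v\, dx) = +\tfrac12\int \phi v\, dx$, and the two halves add to $\int \phi v\, dx = dH(v)$, as desired. Finally, rewriting $\del_x(3g^2 - g_{xx}) = 6 g g_x - g_{xxx}$ gives the stated formula.

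The main obstacle is that everything here is formal: $\om$ is only a \emph{weak} symplectic form, so "read off $X^H$ from $\om(X^H,v)=dH(v)$" presupposes uniqueness, i.e. that no nonzero vector field pairs trivially against all $v\in T_g\mcE$ — and one must also be honest that the candidate $\del_x(3g^2-g_{xx})$ actually lies in the relevant space and that all the boundary terms genuinely vanish for functions in $\mcE$. Since the section explicitly announces that rigor is dropped, I would not belabor these points: I would simply verify the identity $\om_g(\del_x(3g^2-g_{xx}), v) = dH(v)$ for all test directions $v$, note that this characterizes $X^H$ given the (assumed) nondegeneracy, and flag the boundary-term vanishing as the one place where the hypotheses on $\mcE$ are used.
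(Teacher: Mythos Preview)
Your proposal is correct and follows essentially the same approach as the paper: both compute $dH|_g(v)$ by varying $H(g+sv)$ and integrating by parts to obtain $\int (3g^2-g_{xx})\,v\,dx$, and both handle $\om_g(\cdot,v)$ by splitting the inner integral into two pieces and integrating one of them by parts so that the two halves combine. The only cosmetic difference is that the paper leaves $X^H$ as an unknown, reduces $\om_g(X^H,v)$ to $\int v(x)\bigl(\int_{-\infty}^x X^H(y)\,dy\bigr)\,dx$, and then reads off $\int_{-\infty}^x X^H(y)\,dy = 3g^2-g_{xx}$ by comparison, whereas you plug in the candidate $X^H=\del_x\phi$ from the start and verify the identity directly.
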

\end{framed}

\begin{proof}
We calculate first
\begin{align*}
 dH|_g (v) & = \left.\frac{d}{ds}\right|_{s=0} H(g+s v) \\
 & = \left.\frac{d}{ds}\right|_{s=0} \int_{- \infty}^\infty (g+sv)^3(x) + \frac{1}{2} ((g+sv)_x)^2(x) \ dx 
 \end{align*}
This is a parameter depending integral where we --- assuming the integrand to be sufficiently well-behaved --- may switch the order of integration and differentiation:
\begin{align*}
& =  \int_{- \infty}^\infty \left.\frac{d}{ds}\right|_{s=0} \left((g+sv)^3(x) + \frac{1}{2} ((g+sv)_x)^2(x)\right) \ dx \\ 
 & =  \int_{- \infty}^\infty \left. \frac{d}{ds}\right|_{s=0} \left( g^3(x) + 3 g^2(x) s v(x) + 3 g(x) s^2 v^2(x)
  +\frac{1}{2}(g_x)^2(x) + g_x(x) sv_x(x)  + \frac{1}{2}s^2(v_x)^2(x) \right) \ dx \\
 &=  \int_{- \infty}^\infty 3 g^2(x)  v(x) +  g_x(x) v_x(x)\ dx
\end{align*}
and finally conclude by integration by parts
\begin{align*}
  & =  \int_{- \infty}^\infty \left(3 g^2(x) -  g_{xx}(x) \right) v(x) \ dx.
\end{align*}
Now abbreviate $(X^H(g))(z)=:X^H(z) \in T_{g(z)} \R \simeq \R$ and consider
\begin{align*}
 \om_g(X^H(g), v) & = \frac{1}{2} \int_{-\infty} ^\infty \int_{- \infty}^x v(x) X^H(y) - X^H(x)v(y) \ dy \ dx \\
  & = \frac{1}{2} \int_{-\infty} ^\infty v(x) \left(\ \int_{-\infty}^x X^H(y) \ dy \right) - X^H(x)  \left(\ \int_{-\infty}^x v(y) \ dy \right) \ dx
\end{align*}
and obtain by integration by parts
\begin{align*}
 & = \frac{1}{2} \cdot 2 \int_{-\infty} ^\infty v(x) \left( \ \int_{-\infty}^x X^H(y) \ dy \right) \ dx \\
 & =  \int_{-\infty} ^\infty v(x) \left( \ \int_{-\infty}^x X^H(y) \ dy \right) \ dx.
\end{align*}
Comparison of both sides of the equation $\om_g(X^H(g), v) = dH|_g (v)$ yields
$$
\int_{-\infty}^x X^H(y) \ dy  = 3 g^2(x) - g_{xx}(x)
$$
so that we obtain
$$
(X^H(g))(x) =X^H(x) = \del_x (3 g^2(x) -  g_{xx}(x)) = 6g(x)g_x(x) - g_{xxx}(x).
$$
\end{proof}

We conclude

\begin{framed}
 \begin{corollary}
The function $H$ in \eqref{hamFctInfinite} gives rise to the Hamiltonian equation
$$\del_tu = X^H(u) = 6 u u_x -  u_{xxx}$$
where $u: \R \x \R \to \R$. This equation is equivalent to the standard Korteweg-de Vries equation 
$$u_t - 6 uu_x +  u_{xxx}=0.$$
 \end{corollary}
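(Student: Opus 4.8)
The plan is to read the Hamiltonian equation straight off the preceding Lemma and then observe that the resulting PDE is, after a trivial rearrangement of terms, literally the standard KdV equation; all the analytic content sits in the Lemma, and the Corollary is essentially bookkeeping.

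First I would recall that, following Definition \ref{hamODE} transported to the (weak symplectic) infinite-dimensional setting of $(\mcE, \om)$, the Hamiltonian equation attached to $H$ is obtained by demanding that a curve $t \mapsto u(\cdot,t)$ in $\mcE$ have time derivative equal to the Hamiltonian vector field at the corresponding point, i.e.\ $\del_t u = X^H(u)$. Here the left-hand side is the tangent vector $x \mapsto \del_t u(x,t) \in T_{u(x,t)}\R \simeq \R$ along $x \mapsto u(x,t)$, exactly matching the description of $T_g\mcE$ given earlier in the section. Since we are computing purely formally, as announced at the start of the section, I would not dwell on the analytic justification that $\del_t u = X^H(u)$ is the correct notion; this is where the only genuine subtlety lies, and it is precisely what the cited references on (pseudo)differential operators and suitable Banach/Hilbert spaces are there to address.

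Next I would substitute the formula for $X^H$ from the preceding Lemma, where we showed $X^H(g) = \del_x(3g^2 - g_{xx}) = 6gg_x - g_{xxx}$. Evaluating at $g = u(\cdot,t)$ for each $t$ turns the Hamiltonian equation into
$$\del_t u = 6uu_x - u_{xxx},$$
which is exactly the first displayed identity in the statement.

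Finally, to reach the standard form I would simply move all terms to one side: $\del_t u = 6uu_x - u_{xxx}$ holds if and only if $u_t - 6uu_x + u_{xxx} = 0$. This is an equivalence of PDEs in the strongest possible sense — the two equations have identical solution sets, since only the placement of terms has changed — so the Hamiltonian flow of $H$ is exactly the KdV flow. I do not expect any real obstacle here; the one thing to watch is the sign bookkeeping, namely that the Hamiltonian vector field produced by $\om$ and $H$ carries the $+6uu_x - u_{xxx}$ pattern, which is exactly why $H$ reproduces the $-6uu_x$ (rather than $+6uu_x$) sign convention used for KdV in this chapter.
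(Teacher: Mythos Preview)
Your proposal is correct and matches the paper's own proof essentially verbatim: introduce the time variable so that $g$ becomes $u(\cdot,t)$, plug in the formula $X^H(g)=6gg_x-g_{xxx}$ from the preceding Lemma, and rearrange to obtain $u_t-6uu_x+u_{xxx}=0$. The paper's proof is just as brief and relies on exactly the same bookkeeping.
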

\end{framed}

\begin{proof}
 Since we want to observe the change in time we have to `flow' $g \in \mcE$ by means of a time variable $t\in \R$. Therefore replace $g : \R \to \R$ by $u: \R \x \R \to \R$ seen as $u(x,t)=(u(x))(t)$. We find
 $$\del_tu = X^H(u) = 6 u u_x -  u_{xxx}$$
 which is equivalent to  $u_t - 6 uu_x +  u_{xxx}=0$, i.e., we regain the Korteweg-de Vries equation.
\end{proof}

Moreover

\begin{framed}
 \begin{remark}
For $\rho(x,t):=u(x,t)$ and $j(x,t):= -3 u^2(x,t) + u_{xx}(x,t) $ we obtain the conservation law\index{Conservation law}
$$
\rho_t + j_x=0.
$$ 
 \end{remark}
\end{framed}

More generally

\begin{framed}
 \begin{remark}
 \label{genHamFct}
  Let $f$ be a suitably differentiable function in several variables. The Hamiltonian $H: (\mcE, \om) \to \R$ given by 
  $$H(g):= \int_{-\infty}^\infty f(g(x), g_x(x), g_{xx}(x), \dots)\ dx$$ 
  has as Hamiltonian vector field 
  $$X^H(g)= \del_x \left(\frac{\de f}{\de g} \right)$$
  where~\cite{abrahamMarsden}
  \begin{equation}\frac{\de f}{\de g}:= \del_g f - \del_x \left( \del_{g_x} f \right) +  \del^2_x \left( \del_{g_{xx}} f \right) - \dots
  \end{equation}
 \end{remark}
\end{framed}


\begin{proof}
 Verify the identity 
 $$
 \om_g(X^H(g), v) = \int_{-\infty}^\infty \frac{\de f}{\de g} (x) \ v(x) \ dx = dH|_g(v).
 $$
\end{proof}

Now we want to study integrability notions of the Korteweg-de Vries equation. Define the {\bf Poisson bracket\index{Poisson bracket}} of two functions $K, L: \mcE \to \R $ as
$$\{ K, L\}:= \om(X^K, X^L).$$ 
We will see that the KdV equation is in fact only one item within a whole family of similarly generated partial differential equations. This was discovered and studied in a series of papers by \cite{miura}, \cite{miuraGardnerKruskal}, \cite{suGardner}, \cite{gardner}, \cite{kruskalMiuraGardnerZabusky}, \cite{gardnerGreeneKruskalMiura} resulting in:

\begin{framed}
 \begin{theorem}
  For $g \in \mcE$, set $X_1(g):= g_x$ and $f_1(g):= \frac{1}{2}g^2$ and let $a, b \in \R$. Abbreviate $\msD:= \del_x$ which has (on a suitable Hilbert space) an inverse integration operator denoted by $\msD^{-1}$. For $j \geq 2$, define
  \begin{align*}
   X_j(g) & := \bigl(a g + a \msD g \msD^{-1} + b \msD^2 \bigr) X_{j-1} \\
   & \ = \bigl(a g \msD + a \msD g + b \msD^3 \bigr) \frac{\de f_{j-1}}{\de g},
  \end{align*}
i.e., we have $X_j(g) = \del_x \left( \frac{\de f_{j}}{\de g} \right).$ 
\begin{enumerate}[label=\arabic*)]
 \item 
Then the family of equations
$$
u_t = X_j(u), \quad \mbox{for } j\geq 1
$$
is called {\bf KdV hierarchy\index{KdV hierarchy}\index{Hierarchy}} or {\bf higher order KdV equations\index{Higher order KdV equations}\index{KdV equations of higher order}} and recovers for $a=2$, $b=-1$ and $j=2$ the standard KdV equation $u_t-6uu_x + u_{xxx} = 0$.
\item
The higher order KdV equations $u_t = X_j(u)$ are Hamiltonian with Hamiltonian functions $F_j(g):= \int_{- \infty}^\infty f_j(g(x)) \ dx$ and Hamiltonian vector fields $X^{F_j} = X_j$.
\item
The higher order KdV equations are integrable in the sense that $\{ F_j, F_k\}=0$ for all $j,k \geq 1$. 
\item
The Hamiltonian $H$ from the standard KdV equation satisfies $H=F_2$ and therefore $\{H, F_k\}=0$ holds true for all $k\geq 1$. Thus all $F_k$ are integrals of the standard KdV equation. 
\end{enumerate}
 \end{theorem}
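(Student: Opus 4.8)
The plan is to dispatch the four claims in turn, staying inside the formal variational calculus already introduced in this section. For 1) I would run the recursion once: since $f_1(g)=\tfrac12 g^2$ involves $g$ alone, $\tfrac{\de f_1}{\de g}=g$, and hence
\[
X_2(g)=\bigl(ag\msD+a\msD g+b\msD^3\bigr)g=agg_x+a(g^2)_x+bg_{xxx}=3a\,gg_x+b\,g_{xxx}.
\]
With $a=2$ and $b=-1$ this is $X_2(g)=6gg_x-g_{xxx}$, so $u_t=X_2(u)$ is exactly $u_t-6uu_x+u_{xxx}=0$. The same computation shows $\del_x\tfrac{\de f_2}{\de g}=X_2$, hence $\tfrac{\de f_2}{\de g}=\tfrac{3a}{2}g^2+bg_{xx}$ and $f_2=\tfrac a2 g^3-\tfrac b2 g_x^2$, i.e.\ $f_2=g^3+\tfrac12 g_x^2$ in the standard normalization; this is reused in 4).

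For 2), the recursion is arranged precisely so that $X_j(g)=\del_x\!\bigl(\tfrac{\de f_j}{\de g}\bigr)$, and by \refgenHamFct\ this is the Hamiltonian vector field of $F_j(g)=\int_{-\infty}^\infty f_j\,dx$, which is claim 2). The step that really needs justification---and which, in the nonrigorous spirit of this section, I would only indicate---is that at each stage $\bigl(ag\msD+a\msD g+b\msD^3\bigr)\tfrac{\de f_{j-1}}{\de g}$ is indeed $\del_x$ of the variational derivative of some \emph{local} density $f_j$; this is the Lenard/Magri step and rests on the compatibility of the two Poisson operators $\msD=\del_x$ and $E:=ag\msD+a\msD g+b\msD^3$. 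Granting it, 2) follows, and since $f_2=g^3+\tfrac12 g_x^2$ we also obtain $F_2=H$, the Hamiltonian of the previous subsection.

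For 3), write $\gamma_j:=\tfrac{\de f_j}{\de g}$, so that $X_j=\msD\gamma_j$, and read off from the computation of $\om$ in the previous subsection that $\{F_j,F_k\}=\om(X^{F_j},X^{F_k})=\int_{-\infty}^\infty\gamma_j\,(\msD\gamma_k)\,dx$. Both $\msD$ and $E$ are skew-adjoint for the pairing $\langle p,q\rangle:=\int_{-\infty}^\infty p\,q\,dx$---for $E$ one checks $(ag\msD)^{\ast}=-a\msD g$, $(a\msD g)^{\ast}=-ag\msD$, $(b\msD^3)^{\ast}=-b\msD^3$---and the recursion reads $\msD\gamma_j=E\gamma_{j-1}$. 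Hence for $k\ge2$
\[
\langle\gamma_j,\msD\gamma_k\rangle=\langle\gamma_j,E\gamma_{k-1}\rangle=-\langle E\gamma_j,\gamma_{k-1}\rangle=-\langle\msD\gamma_{j+1},\gamma_{k-1}\rangle=\langle\gamma_{j+1},\msD\gamma_{k-1}\rangle,
\]
so the bracket is unchanged when one index is raised and the other lowered. Starting from $j\le k$ and iterating brings the pair either to $\langle\gamma_m,\msD\gamma_m\rangle=0$ (skewness of $\msD$) or to $\langle\gamma_m,\msD\gamma_{m+1}\rangle=\langle\gamma_m,E\gamma_m\rangle=0$ (skewness of $E$); either way $\{F_j,F_k\}=0$ for all $j,k\ge1$.

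For 4), $H=F_2$ was just recorded, so $\{H,F_k\}=\{F_2,F_k\}=0$ by 3); and if $u$ solves KdV, i.e.\ $\del_t u=X^H(u)$, then $\tfrac{d}{dt}F_k(u(\cdot,t))=dF_k|_u\bigl(X^H(u)\bigr)=\om\bigl(X^{F_k},X^H\bigr)=\{F_k,H\}=0$, so every $F_k$ is constant along solutions of KdV---an integral. The main obstacle is the one isolated in 2): showing that the recursion operator applied to an exact variational derivative is again $\del_x$ of an exact variational derivative, so that the local densities $f_j$ genuinely exist and each $X_j$ is honestly Hamiltonian. This is the bi-Hamiltonian heart of the theorem, and it is also where the analytic fine print---existence of $\msD^{-1}$, decay or compact-support conditions making every integration by parts legitimate---lives; once it is granted, all four parts reduce to the short formal computations above.
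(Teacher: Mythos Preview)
Your proof is correct and follows essentially the same route as the paper: both compute $X_2$ directly for part 1), invoke \refgenHamFct\ for part 2), and for part 3) establish the shift identity $\{F_j,F_k\}=\{F_{j+1},F_{k-1}\}$ via integration by parts (you phrase this as skew-adjointness of $\msD$ and of $E=ag\msD+a\msD g+b\msD^3$, the paper unpacks the same computation in the symplectic form $\om$) and then iterate. The only minor variation is in terminating the odd-parity case: the paper iterates all the way to $\{F_k,F_j\}$ and invokes skew-symmetry of the bracket, whereas you stop one step earlier at $\langle\gamma_m,\msD\gamma_{m+1}\rangle=\langle\gamma_m,E\gamma_m\rangle$ and use skewness of $E$ directly---a slightly cleaner endgame, but the same idea.
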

\end{framed}

\begin{proof}
 {\bf 1)} 
 $X_1(u) := u_x$ leads to $u_t = X_1(u)=u_x$. Now keep in mind that $u_x = \msD u$ and calculate with $a=2$ and $b=-1$
 \begin{align*}
  u_t & = X_2(u) = (2u + 2 \msD u \msD^{-1} - \msD^2) \msD u \\
  & = 2u \msD u + 2 \msD(u \msD^{-1} \msD u) - \msD^3 u \\
  & = 2uu_x + 2 \del_x(u^2) - u_{xxx} \\
  & = 2uu_x + 4uu_x -u_{xxx} \\
  & = 6uu_x - u_{xxx}
 \end{align*}
which is the standard KdV equation $u_t -6uu_x + u_{xxx}=0$.

{\bf 2)}
Using the identity $X_j(g) = \del_x \left( \frac{\de f_{j}}{\de g} \right)$ and applying \refgenHamFct\ to the Hamiltonian $F_j(g)= \int_{- \infty}^\infty f_j(g(x)) \ dx$, we obtain $X^{F_j}= X_j$. 

{\bf 3)}
In order to show integrability we first prove a recurrence relation for the Poisson bracket. We use the short notation $X_j(z):=X_j(g)(z)$ and compute
\begin{align*}
 \{F_j, F_k\} & = \om(X^{F_j}, X^{F_k}) = \om(X_j, X_k) \\
 & = \frac{1}{2} \int_{- \infty}^ \infty \int_{-\infty}^x X_k(x) X_j(y) - X_j(x) X_k(y) \ dy \ dx \\
 & = \frac{1}{2} \int_{- \infty}^ \infty X_k(x) \left( \ \int_{-\infty}^x  X_j(y) \ dy \right) - X_j(x)\left( \ \int_{-\infty}^x X_k(y) \ dy \right) \ dx \\
 & = \frac{1}{2} \int_{- \infty}^ \infty X_k(x) \left( \ \int_{-\infty}^x   \del_y \left( \frac{\de f_{j}}{\de g}(y) \right)  \ dy \right) - X_j(x)\left( \ \int_{-\infty}^x \del_y \left( \frac{\de f_{k}}{\de g} (y)\right) \ dy \right) \ dx \\
 & = \frac{1}{2} \int_{- \infty}^ \infty X_k(x) \left( \frac{\de f_{j}}{\de g} (x)\right)  - X_j(x) \left( \frac{\de f_{k}}{\de g} (x)\right) \ dx .\\
\end{align*}
Integration by parts leads to
\begin{align}
\label{backwards}
 & =  \int_{- \infty}^ \infty X_k(x) \left( \frac{\de f_{j}}{\de g} (x)\right) \ dx \\ \notag
 & = \int_{-\infty}^\infty  \left( \bigl(a g \msD + a \msD g  + b \msD^3 \bigr) \frac{\de f_{k-1}}{\de g} \right)(x) \left( \frac{\de f_{j}}{\de g} (x)\right)  \  dx
\end{align}
and another integration by parts yields
\begin{align}
 \notag
 & = - \int_{-\infty}^\infty \left( \frac{\de f_{k-1}}{\de g} (x)\right) \left( \bigl(a  \msD g + a g \msD   + b \msD^3 \bigr) \frac{\de f_{j}}{\de g} \right)(x) \ dx \\ 
 \label{fromHere}
 & = - \int_{-\infty}^\infty \left( \frac{\de f_{k-1}}{\de g} (x) \right) X_{j+1}(x) \ dx
\end{align}
Reversing the steps that lead to \eqref{backwards}, we can turn \eqref{fromHere} into
\begin{align*}
 & = - \om(X_{k-1}, X_{j+1}) = \om(X_{j+1}, X_{k-1}) \\
 & = \{F_{j+1}, F_{k-1}\}
\end{align*}
thus obtaining the relation
$$\{F_j, F_k\}= \{F_{j+1}, F_{k-1}\}.$$
Now we show that in fact $\{ F_j, F_k\} = 0$ holds true for all $j,k \geq 1$.
First, consider the case $\abs{j-k}$ even and assume w.l.o.g.\ that $j<k$. By increasing $j \mapsto j+1$ and decreasing $k \mapsto k-1$ simultaneously $\frac{\abs{j-k}}{2}$ times, we obtain
\begin{align*}
 \{F_j, F_k\} & = \{F_{j+1}, F_{k-1} \} = \dots = \left\{ F_{\frac{j+k}{2}}, F_{\frac{j+k}{2}} \right\} =0.
\end{align*}
Second, consider the case $\abs{j-k}$ odd and assume w.l.o.g.\ that $j<k$. By increasing $j \mapsto j+1$ and decreasing $k \mapsto k-1$ simultaneously $\abs{j-k}$ times, we obtain
\begin{align*}
 \{F_j, F_k\} & = \{F_{j+1}, F_{k-1} \} = \dots = \{F_{k-1}, F_{j+1}\} = \{F_k, F_j\}.
\end{align*}
Since the Poisson bracket is skew symmetric, we also have 
$$ \{F_j, F_k\} = - \{F_k, F_j\}.$$ 
Both identities together imply $\{F_j, F_k\} =0$.

{\bf 4)} 
This follows immediately from {\bf 3)}.
\end{proof}


\clearpage

\section{Evolution equations and their symmetries}

\noindent
Intuitively, an integrable system is a system of differential equations whose behavior is determined by initial conditions and which can be solved (`integrated') from those initial conditions. So it is of interest to study the interaction of initial conditions and symmetries.

In Section \ref{sectionSymmetries}, we determined four symmetries of the KdV equation by direct computation. Whereas symmetries of differential equation are often studied by means of Lie groups like e.g.\ in \cite{olver}, we consider symmetries in this section as `variations of the equation that do not destroy solutions'. To this aim, let us explain what we mean with `variation'. For $k,n \in \N_0$, consider
$$
{\small
\mcF_k(\R^{n}, \R):=
\left \{ 
f: \R^n \times \R^k \to \R 
\left| \ f \mbox{ depends on } 
\begin{aligned} 
 & \bullet \mbox{ variables } x=(x_1, \dots , x_n) \in \R^n , \\
 & \bullet \mbox{ parameters } t =(t_1, \dots, t_k) \in \R^k
\end{aligned}
\right. 
\right\}
}
$$
and set
$$
\mcF_{par}(\R^{n}, \R):= \bigcup_{k\geq 0} \mcF_k(\R^{n}, \R).
$$
The idea is to consider a function ${\mathbf f} \in \mcF_0(\R^n, \R)$ and to see it as part of a family $t \mapsto f( \cdot, t) $ with $ {\mathbf f} =f( \cdot, t_0)$ for a suitable parameter value $t=t_0$ and function $f \in \mcF_1(\R^{n}, \R)$.
Given an operator $K$ defined on suitable subspaces of $\mcF_{par}(\R^{n}, \R)$, a special case are families arising from solutions of initial value problems like for instance
$$
\del_t f =K(f) \quad \mbox{with} \quad f(x,t_0) ={\mathbf f} (x) \quad \forall\ x \in \R^n.
$$
This initial value problem describes the `evolution' of ${\mathbf f}$ under the operator $K$ which motivates

\begin{framed}
 \begin{definition}
  Let $K$ be an operator defined between suitable subspaces of $\mcF_{par}(\R^{n}, \R)$. Then $\del_t f =K(f)$ is called {\bf evolution equation\index{Evolution equation}} of the operator $K$.
 \end{definition}
\end{framed}

If $t=(t_1, \dots, t_k)$ is $k$-dimensional, then $\del_t f =K(f)$ stands for the system
\begin{equation*}
\left\{ \
\begin{aligned}
 \del_{t_1} f & = K(f), \\
 \vdots \ \  & \qquad \vdots \\
 \del_{t_k} f & = K(f).
\end{aligned}
\right. 
\end{equation*}

\begin{framed}
 \begin{example}
Let $K: C^\infty(\R \times \R, \R) \to C^\infty(\R \times \R, \R)$ with $K(u):=-(6uu_x + u_{xxx})$ and $u$ depending on $(x,t) \in \R^2$. Then the associated evolution equation $\del_t u = K(u)$ is equivalent to the KdV equation 
$$u_t +6uu_x + u_{xxx}=0.$$
 \end{example}
\end{framed}

Intuitively, a `symmetry' of an evolution equation is another evolution equation that preserves the property of being solutions of the first one.

\begin{framed}
 \begin{definition}
 Let $K$ and $\Kti$ be operators between suitable subspaces of $\mcF_{par}(\R^n, \R)$, let $k, \ell \in \N$, and let the parameters $t=(t_1, \dots, t_k)$ and $s=(s_1, \dots, s_\ell)$ be associated to the evolution equations $\del_t u = K(u)$ and $\del_s u= \Kti(u)$. 
 The evolution equation $\del_s u = \Kti(u)$ is a {\bf symmetry\index{Symmetry}} of the evolution equation $\del_t u = K(u)$ if, given $(x,t) \mapsto {\mathbf u}(x,t)$ satisfying $\del_t {\bf u} = K({\mathbf u})$ and extending to $(x,t,s) \mapsto u(x,t,s)$ with $\del_s u = \Kti(u)$ for all $t \in \R^k$, the function $(x,t) \mapsto u(x,t,s)$ is a solution of $\del_t u = K(u)$ for all $s \in \R^\ell$.
 Briefly, we say that $\Kti $ {\bf gives rise to a symmetry\index{Symmetry} of} $K$.
  \end{definition}
\end{framed}

In other words, let $(x,t) \mapsto {\mathbf u}$ satisfy $\del_t {\mathbf  u} = K({\mathbf u})$. If ${\mathbf u}$ can be considered as the initial condition at some $s=s_0\in \R^\ell$ of the initial value problem
  $$
  \del_s u = \Kti(u), \qquad u(x,t,s_0)={\mathbf u}(x,t) \quad \forall\ (x,t) \in \R^n \times \R^k
  $$
and if all solutions $(x,t,s) \mapsto u(x,t,s)$ of this initial value problem satisfy $\del_t u(x,t,s) = K(u(x,t,s))$ for all $s \in \R^\ell$, then $\del_s u = \Kti(u)$ is a symmetry of $\del_t u = K(u)$.

\vsp

Note that the definition of {\em symmetry} is symmetric in $s$ and $t$ in the following sense: start with a function $x \mapsto u(x,t_0, s_0)$ and vary it first w.r.t.\ $t$ and then w.r.t.\ $s$ and vice versa, i.e., applying $\del_s \del_t$ resp.\ $\del_t \del_s $ to $u$ as sketched in the following diagram
\begin{equation*}
 \begin{array}{ccccccc}
&&& \del_s u = \Kti(u) &&& \\
 & u(\cdot, t, s_0) & \quad   & \longrightarrow & \quad &  u(\cdot, t, s) \\
 &&&&&& \\
\del_t u = K(u)  & \uparrow &&&&  \uparrow & \del_t u = K(u)\\
 &&&&&& \\
 & u(\cdot, t_0, s_0) & \quad   & \longrightarrow & \quad &  u(\cdot, t_0, s) \\ 
 &&& \del_s u = \Kti(u) &&& 
 \end{array}
\end{equation*}
then the diagram commutes if $u$ satisfies $u_{st}=u_{ts}$.

\begin{framed}
 \begin{corollary}[Criterium]
 Let $K$ and $\Kti$ be operaters of which the solutions of their evolution equations $\del_t u= K(u)$ and $\del_s u = \Kti(u)$ all satisfy $\del_s \del_t = \del_t \del_s $. Then $\del_s u = \Kti(u)$ is a symmetry of $\del_t u = K(u)$ (and vice versa) if and only if $ \del_s K(u) = \del_t \Kti(u)$.
 \end{corollary}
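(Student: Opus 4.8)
The plan is to reduce everything to tracking a single auxiliary quantity, the \emph{defect} $w := \del_t u - K(u)$, as the parameter $s$ varies. Starting from ${\mathbf u}$ with $\del_t {\mathbf u} = K({\mathbf u})$, extend it to $u(x,t,s)$ by solving the initial value problem $\del_s u = \Kti(u)$, $u(x,t,s_0) = {\mathbf u}(x,t)$; then $w(x,t,s_0)\equiv 0$ by construction, and the symmetry property ``$(x,t)\mapsto u(x,t,s)$ solves $\del_t u = K(u)$ for every $s$'' is literally the statement that $w$ vanishes identically in $s$. So the corollary becomes: $w\equiv 0$ if and only if the criterion holds along such $u$.

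For the ``if'' implication I would differentiate the defect in $s$,
$$
\del_s w \ =\ \del_s\del_t u - \del_s K(u) \ =\ \del_t\del_s u - \del_s K(u) \ =\ \del_t\Kti(u) - \del_s K(u),
$$
where the second equality is exactly the standing hypothesis $\del_s\del_t = \del_t\del_s$ applied to $u$, and the third uses $\del_s u = \Kti(u)$. If $\del_s K(u) = \del_t\Kti(u)$ then $\del_s w = 0$, so $w$ is independent of $s$; together with $w(\cdot,\cdot,s_0)=0$ this forces $w\equiv 0$, which is the symmetry. The ``vice versa'' costs nothing: the whole configuration is invariant under swapping the pairs $(t,K)\leftrightarrow(s,\Kti)$, and the criterion $\del_s K(u)=\del_t\Kti(u)$ is itself invariant under that swap, so the same computation applied to the defect $\del_s u - \Kti(u)$ gives it. For the ``only if'' direction I would run the argument backwards: if $\del_s u = \Kti(u)$ is a symmetry, then $\del_t u - K(u)\equiv 0$ as a function of $s$ for every $u$ built as above, and differentiating this identity in $s$ (again swapping mixed partials) yields $\del_t\Kti(u) = \del_s K(u)$; since ${\mathbf u}$, hence $u$, is arbitrary, the criterion holds on the relevant solution set.

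The computations here are entirely routine; the points that actually need attention are more bookkeeping than mathematics. One must be explicit that $\del_s K(u)$ and $\del_t\Kti(u)$ denote the $s$- and $t$-derivatives of the composed expressions $K(u(x,t,s))$ and $\Kti(u(x,t,s))$, so that the hypothesis $\del_s\del_t u = \del_t\del_s u$ is genuinely what licenses the key swap. One also uses the elementary fact that a function with vanishing $s$-derivative that equals zero at $s=s_0$ is identically zero, with all identities read pointwise in $(x,t)$ and for all $s$ in the relevant range. I expect the only real (and mild) obstacle to be stating precisely on which space of functions $u$ the criterion $\del_s K(u) = \del_t\Kti(u)$ is being asserted, rather than any estimate or calculation.
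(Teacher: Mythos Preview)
Your proposal is correct and rests on the same single computation the paper uses, namely
\[
\del_s K(u) \;=\; \del_s(\del_t u) \;=\; \del_t(\del_s u) \;=\; \del_t \Kti(u),
\]
invoking $\del_t u = K(u)$, the commutation of mixed partials, and $\del_s u = \Kti(u)$ in that order. The paper's proof is literally that one line; your version wraps the same chain inside the defect $w = \del_t u - K(u)$ and reads the biconditional off from $\del_s w = \del_t\Kti(u) - \del_s K(u)$ together with $w|_{s=s_0}=0$. This is not a different method, just a more explicit packaging of the two implications, and your closing caveat about where exactly the criterion is meant to hold is apt given how informally the surrounding section is set up.
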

\end{framed}

\begin{proof}
We compute $\del_s K(u) = \del_s (\del_t u) = \del_t (\del_s u) = \del_t \Kti(u)$.
\end{proof}

Now let us consider a few examples.

\begin{framed}
 \begin{example}
  Consider the operator $K(u):= -(6uu_x + u_{xxx})$ of which the evolution equation $\del_t u = K(u)$ is equivalent to the KdV equation $u_t + 6uu_x + u_{xxx}=0$. Then $\del_s u = \Kti(u)$ with $\Kti(u):= u^2$ is no symmetry of $\del_t u = K(u)$.
 \end{example}
\end{framed}

\begin{proof}
 We show that $\del_t \Kti(u)= \del_s K(u)$ does not hold: keep in mind that $\del_s \del_x = \del_x \del_s$ holds true for smooth functions and compute
 \begin{align*}
  \del_s K(u) & = -(6uu_x + u_{xxx})_s = -6u_s u_x -6 u(u_x)_s - (u_{xxx})_s \\
   & = -6u_s u_x -6 u (u_s)_x -(u_s)_{xxx} 
 \end{align*}
and now make use of $\del_s u = \Kti(u)=u^2$ and obtain
\begin{align*}
 &= -6u^2 u_x - 12 u^2u_x -2(uu_x)_{xx} = -18u^2 u_x  -2( (u_x)^2 + uu_{xx})_x \\
 & = -18u^2 u_x -4u_x u_{xx} - 2u_xu_{xx} - 2 u u_{xxx} =  -18u^2 u_x -6u_x u_{xx}- 2 u u_{xxx}.
\end{align*}
On the other hand, we have $u_t= K(u) = -6uu_x -u_{xxx}$ and thus we get
\begin{align*}
 \del_t \Kti(u) & = \left(u^2\right)_t = 2 uu_t = 2u (-6uu_x -u_{xxx}) = -12u^2u_x -2uu_{xxx}
\end{align*}
so that $\del_t \Kti(u) \neq \del_s K(u)$, i.e., $\Kti$ does not give rise to a symmetry for $K$.
\end{proof}

We will now consider examples based on `homogeneous differential polynomials'. To this aim, we assign degrees to differential terms as follows: a term of differential order zero is assigned degree 2. Each differentiation w.r.t.\ $x$ adds $+1$ to the degree, i.e., $\deg(u)=2$, $\deg(u_x)=3$, $\deg(u_{xx})=4$ etc., and multiplication leads to addition of the degrees, i.e., $\deg(uu_x)=2+3=5$ etc. This gives the following monomials of odd degree:
\begin{itemize}[leftmargin=10mm]
 \item 
 degree 3: \ $u_x$
 \item
 degree 5: \ $uu_x$, $u_{xxx}$
 \item
 degree 7: \ $u^2u_x$, $uu_{xxx}$, $u_x u_{xx}$, $u_{xxxxx}$
\end{itemize}
This leads to the following homogeneous polynomials:
\begin{itemize}[leftmargin=10mm]
 \item 
 degree 3: \ $cu_x$ \ with (constant) coefficient $c \in \R$.
 \item
 degree 5: \ $c_1 uu_x + c_2u_{xxx}$ \ with (constant) coefficients $c_1, c_2 \in \R$.
 \item
 degree 7: \ $c_1 u^2u_x + c_2 uu_{xxx} + c_3 u_x u_{xx} + c_4 u_{xxxxx}$ \ with (constant) coefficients $c_1, c_2, c_3, c_4 \in \R$.
\end{itemize}
For homogeneous polynomials of degree 3, we obtain:

\begin{framed}
 \begin{example}
 \label{degThree}
  Consider the operator $K(u):= -(6uu_x + u_{xxx})$ of which the evolution equation $\del_t u = K(u)$ is equivalent to the KdV equation $u_t + 6uu_x + u_{xxx}=0$. Then $\del_s u = \Kti(u)$ with $\Kti(u):= c u_x$ is a symmetry of $\del_t u = K(u)$ for all $c \in \R$.
 \end{example}
\end{framed}

\begin{proof}
 We keep in mind that $\del_s \del_x = \del_x \del_s$ holds true for smooth functions and that, by assumption, $u_s= cu_x$. Then we get
 \begin{align*}
  \del_s K(u) & = -(6uu_x + u_{xxx})_s = -6u_su_x -6u(u_s)_x - (u_s)_{xxx} \\
  & = -6c(u_x)^2 - 6cuu_{xx} - cu_{xxxx}.
\end{align*}
On the other hand, keeping $u_t = -(6uu_x + u_{xxx})$ in mind, we compute
\begin{align*}
 \del_t \Kti(u) = (c u_x)_t = c(u_t)_x =  -6c uu_{xx} - 6c(u_x)^2  - cu_{xxxx}
\end{align*}
which yields $ \del_s K(u) = \del_t \Kti(u)$, i.e., $\Kti$ gives rise to a symmetry.
\end{proof}

\refdegThree\ corresponds to the KdV symmetry {\em translation in space}. This makes intuitively sense since the variation with $s$ is in fact a `variation in space' via $u_s=u_x$ (see also \cite[Remark 1.1]{miwaJimboDate}).
For homogeneous polynomials of degree 5, we obtain:

\begin{framed}
 \begin{example}
 \label{degFive}
  Consider the operator $K(u):= -(6uu_x + u_{xxx})$ of which the evolution equation $\del_t u = K(u)$ is equivalent to the KdV equation $u_t + 6uu_x + u_{xxx}=0$. Then $\del_s u = \Kti(u)$ with $\Kti(u)= 6c uu_x + c u_{xxx}$ is a symmetry of $\del_t u = K(u)$ for all $c\in \R$ and it is the only one arising from the ansatz $\Kti(u) = c_1 uu_x + c_2 u_{xxx}$ with constants $c_1, c_2 \in \R$.
 \end{example}
\end{framed}

\begin{proof}
 We make the ansatz $\Kti(u):= c_1 uu_x + c_2 u_{xxx}$ and will show that $6c_2=c_1$ must hold true in order to give rise to a symmetry. We keep in mind that $\del_s \del_x = \del_x \del_s$ holds true for smooth functions and that by assumption $u_s= c_1 uu_x + c_2 u_{xxx}$. Then we get
 \begin{align*}
  \del_s K(u) & = -(6uu_x + u_{xxx})_s = -6u_su_x -6u(u_s)_x - (u_s)_{xxx} \\
  & = -6(c_1 uu_x + c_2 u_{xxx})u_x  -6u (c_1 uu_x + c_2 u_{xxx})_x - (c_1 uu_x + c_2 u_{xxx})_{xxx} \\
  & = -6c_1 u (u_x)^2 - 6c_2 u_x u_{xxx} - 6c_1 u^2 u_{xx} - 6c_1 u(u_x)^2 -6c_2u u_{xxxx}   -c_1 (uu_x)_{xxx} -c_2 u_{xxxxxx} \\
  & = -12c_1 u (u_x)^2 - 6c_2 u_x u_{xxx} - 6c_1 u^2 u_{xx}  -6c_2u u_{xxxx}  -c_1 (uu_x)_{xxx} -c_2 u_{xxxxxx} .
 \end{align*}
On the other hand, making use of $u_t=K(u)=-(6uu_x + u_{xxx})$, we compute
\begin{align*}
 \del_t \Kti(u) & = ( c_1 uu_x + c_2 u_{xxx})_t = c_1 u_t u_x + c_1 u (u_t)_x + c_2 (u_t)_{xxx} \\
 & = -6c_1 u (u_x)^2- c_1 u_x u_{xxx} - 6c_1 u^2 u_{xx} - 6c_1 u (u_x)^2 -c_1 uu_{xxxx}  -6 c_2 (u u_x)_{xxx} - c_2 u_{xxxxxx} \\
 & = -12c_1 u (u_x)^2- c_1 u_x u_{xxx} - 6c_1 u^2 u_{xx} -c_1 uu_{xxxx}  -6 c_2 (u u_x)_{xxx} - c_2 u_{xxxxxx}.
\end{align*}
Thus, the equation $\del_s K(u) = \del_t \Kti(u)$ holds true if and only if $c_1 = 6c_2$. This means that $\Kti$ must be of the form $\Kti(u)= 6c uu_x + c u_{xxx}$ with $c \in \R$.
\end{proof}

\refdegThree\ corresponds to the KdV symmetry {\em translation in time}. This makes intuitively sense since the variation with $s$ is in fact a `variation in time' via $u_s= -(6uu_x + u_{xxx}) = u_t$ (see also \cite[Remark 1.1]{miwaJimboDate}).
For homogeneous polynomials of degree 7, one finds

\begin{framed}
 \begin{example}
 Consider the operator $K(u):= -(6uu_x + u_{xxx})$ of which the evolution equation $\del_t u = K(u)$ is equivalent to the KdV equation $u_t + 6uu_x + u_{xxx}=0$. Then $\del_s u = \Kti(u)$ with $\Kti(u):= c u^2u_x + 2c uu_{xxx} + 4c u_x u_{xx} + \frac{6}{5}c u_{xxxxx}$ is a symmetry of $\del_t u = K(u)$ for all $c\in \R$ and it is the only one arising from the ansatz $\Kti(u) = c_1 u^2u_x + c_2 uu_{xxx} + c_3 u_x u_{xx} + c_4 u_{xxxxx} $ with constants $c_1, c_2, c_3, c_4 \in \R$.
 \end{example}
\end{framed}

\begin{proof}
 Left to the reader; see also \cite[Equation 1.20 and Equation 1.21]{miwaJimboDate}.
\end{proof}


\clearpage

\section{Lax formalism of an evolution equation}

\noindent
Recall that, intuitively, an integrable system is a system of differential equations whose behavior is determined by initial conditions and which can be solved (`integrated') from those initial conditions. So it is of particular interest to find equivalent formulations that are hopefully easier to solve.

Let $L$ and $M$ be operators between suitable subspaces of $ \mcF_{par}(\R^n, \R)$. The {\bf eigenvalue problem\index{Eigenvalue problem}} of $L$ is given by the equation
$$
Lv = \lam v
$$
where $v$ is an eigenfunction of the eigenvalue $\lam$ and both are thought to depend on the same parameter $t \in \R^k$ (if any). Recall that the evolution equation of $M$ is of the form $\del_t u = Mu$. The idea is now to consider the combined equations
\begin{equation}
 \label{laxSystem}
 \left\{ 
    \begin{aligned}
     Lv & = \lam v, \\
     \del_t v & = M v
    \end{aligned}
 \right.
\end{equation}
in the sense of varying the eigenvalue and associated eigenfunction of $L$ with the evolution equation of $M$, i.e., for a given ${x \mapsto \mathbf v}(x)$ with $L {\mathbf v} = \lam {\mathbf v}$ that can be seen as initial condition ${\mathbf v}= v( \cdot, t_0)$ with $\lam=\lam(t_0)$ at time $t=t_0$ consider the solutions $(x, t) \mapsto v(x,t)$ with eigenvalues $t \mapsto \lam(t)$ of the initial value problem
\begin{equation*}
 \left\{ 
    \begin{aligned}
    & L v = \lam v  \quad (\mbox{where } \lam: \R^k \to \R),\\
     & \del_t v  = M v, \\
     & v(x,t_0)= {\mathbf v}(x,t_0) \quad \forall\ x \in \R^n.
    \end{aligned}
 \right.
\end{equation*}
We now rewrite the system in Equation \eqref{laxSystem} as follows: first, differentiate the equation $Lv=\lam v$ w.r.t.\ the parameter $t$. This yields, using short notation $L_t:= \del_tL$ and $v_t:= \del_t v$ and $\lam_t:=\del_t \lam $, the equation
\begin{equation*}
 L_t v + L v_t = \lam_t v + \lam v_t.
\end{equation*}
Making use of $v_t = Mv$, we get
\begin{equation*}
 L_t v + L Mv = \lam_t v + \lam Mv =  \lam_t v + M (\lam v) =  \lam_t v + M Lv.
\end{equation*}
Using the notion of {\bf commutator\index{Commutator}} or {\bf Lie bracket\index{Lie bracket}} of operators 
$$[L,M]:= LM-ML$$ 
we obtain the equation 
\begin{equation*}
 (L_t + [L,M])v = \lam_t v.
\end{equation*}

\begin{framed}
 \begin{definition}
  The equation $L_t + [L,M]=0$ is called {\bf Lax equation\index{Lax equation}}. Two operators $L$ and $M$ satisfying $L_t + [L,M]=0$ are called a {\bf Lax pair\index{Lax pair}}.
 \end{definition}
\end{framed}

This equation is named after the Hungarian-American mathematican Peter David Lax (born in 1926). An immediate observation is

\begin{framed}
 \begin{lemma}
  The Lax equation is satisfied for a pair of operators $L$ and $M$ if and only if the spectrum of $L$ does not change under time evolution, i.e., the eigenvalues $\lam$ do not depend on $t$.
 \end{lemma}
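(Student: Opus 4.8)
The plan is to read off both implications directly from the identity $(L_t + [L,M])v = \lambda_t v$ derived immediately above the statement, using for the forward direction that $v$ is a nonzero eigenfunction, and for the backward direction that the eigenfunctions of $L$ are numerous enough to separate operators.

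First I would treat the ``only if'' direction. Suppose $L$ and $M$ form a Lax pair, i.e.\ $L_t + [L,M]=0$. Substituting this into the identity above shows that, along any solution $(x,t)\mapsto v(x,t)$ of the system \eqref{laxSystem} with eigenvalue $t\mapsto\lambda(t)$, we have $0=(L_t+[L,M])v=\lambda_t v$. Since $v$ is an eigenfunction it is not identically zero, so $\lambda_t=0$; hence $\lambda$ is independent of $t$ and the spectrum of $L$ is preserved under the time evolution generated by $M$.

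Conversely, assume the spectrum is $t$-independent, so $\lambda_t=0$ for every eigenvalue. The same identity then gives $(L_t+[L,M])v=0$ for every eigenfunction $v$ of $L$. Letting $v$ range over a complete system of eigenfunctions of $L$ — which is available under the standing hypotheses on $L$ (for instance $L$ a Schr\"odinger-type self-adjoint operator on a suitable Hilbert space) — the operator $L_t+[L,M]$ annihilates a spanning (or dense) set, and is therefore the zero operator. This is exactly the Lax equation $L_t+[L,M]=0$.

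The one point I would flag, and the only genuine subtlety, is in this converse step: passing from ``$(L_t+[L,M])v=0$ for all eigenfunctions $v$'' to ``$L_t+[L,M]=0$'' requires that the eigenfunctions of $L$ span (or lie dense in) the ambient function space. For the operators relevant to the KdV setting this completeness holds, but in full generality the statement is slightly informal; a fully rigorous version restricts attention to a class of operators $L$ for which spectral completeness is guaranteed. No hard computation is needed — everything follows from the already-established identity plus this spectral-theoretic remark.
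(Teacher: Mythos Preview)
Your argument is correct and follows the same route as the paper: both read off the equivalence directly from the identity $(L_t+[L,M])v=\lambda_t v$. Your version is in fact more careful than the paper's one-line proof, since you explicitly separate the two implications and flag the spectral-completeness assumption needed for the converse, which the paper leaves implicit.
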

\end{framed}

\begin{proof}
The equation $(L_t + [L,M]) = \lam_t$ implies 
\begin{equation*}
 L_t + [L,M]=0 \qquad \Leftrightarrow \qquad 0=\lam_t = \del_t \lam
\end{equation*}
i.e., the eigenvalues $\lam$ are constant.
\end{proof}

\begin{framed}
	\begin{example}
		Given the Harmonic oscillator $H(q,p) = \frac{p^2}{2 m} + \frac{k}{2} q^2$, its Hamiltonian equations
		\begin{equation}
			 \begin{cases}
				\dot{q} = \frac{p}{m}, \\
				\dot{p} = - k q
			\end{cases}
			\quad \mbox{can be written in matrix form} \quad
			\begin{pmatrix}
				\dot{q} \\ \dot{p}
			\end{pmatrix} = 
			\begin{pmatrix}
				0 & \frac{1}{m} \\ - k & 0
			\end{pmatrix}
			\begin{pmatrix}
				q \\ p
			\end{pmatrix}.
		\end{equation}
		Using the change of variables
		\begin{equation}
			(q,p) \mapsto (Q,P) = \left(\sqrt{\frac{k}{m}} q, \frac{p}{m}\right)
		\end{equation}
		we rewrite the linear system as
		\begin{equation}
			\frac{d}{dt}\begin{pmatrix}
				Q \\ P
			\end{pmatrix}
			=
			\begin{pmatrix}
					0 & \frac{k}{m} \\
					- \frac{k}{m} & 0
			\end{pmatrix}
			\begin{pmatrix}
				Q \\ P
			\end{pmatrix} =: M \begin{pmatrix}
				Q \\ P
			\end{pmatrix}
		\end{equation}
		and then consider $M$ as the time evolution matrix. The matrix
		\begin{equation}
			L := \begin{pmatrix}
				\frac{P}{2} & \frac{Q}{2} \\ \frac{Q}{2} &-\frac{P}{2}
			\end{pmatrix}
		\end{equation}
		does not commute with $M$, but their commutator is in fact the time evolution of $L$:
		\begin{equation}
			\frac{d}{dt}L = \begin{pmatrix} \dot{P} & \dot{Q} \\ \dot{Q} & - \dot{P} \end{pmatrix} = \begin{pmatrix}
				- \frac{k}{m} Q & \frac{k}{m} P \\
				\frac{k}{m} P & \frac{k}{m} Q
			\end{pmatrix}
			= L M - M L.
		\end{equation}
		Computing the eigenvalues of $L$ via $0 = \det(L- \lambda \mathbb{I}_2) =-(P- \lambda)(P+\lambda) -Q^2$ yields
		\begin{equation}
			 \lambda = \pm \frac{1}{2} \sqrt{Q^2 + P^2} = \pm \sqrt{\frac{k}{4 m} q^2 + \frac{p^2}{4 m^2}} = \pm \sqrt{\frac{1}{2 m}} \sqrt{H},
		\end{equation}
		which means that the eigenvalues are constants of motion, since the Hamiltonian is a constant of motion. Similarly we can recover the Hamiltonian, i.e., the conserved quantity, by computing the trace of the square of the matrix $L$.
	\end{example}
\end{framed}


\clearpage

\section{Lax pairs of the KdV equation}

\noindent
In this section, we will find two distinct Lax pairs for the KdV equation and study the eigenvalue problem of one of the involved operators.
To this aim, let $u: \R \to \R$ be a function, possibly depending on parameters, and consider
$$
L:= \del^2_x + u
$$
seen as operator between suitable subspaces of $\mcF_{par}(\R, \R)$ where $L$ operates on functions $v$ via
$$
Lv = (\del^2_x + u)(v) = v_{xx} + uv,
$$
i.e., $\del_x^2$ is the usual second order differential operator and the zero order term $u$ acts by multiplication. $L$ is thus the 1-dimensional Laplace operator plus a zero order perturbation. Since the Lax formalism varies solutions of an eigenvalue problem we first should determine if the eigenvalue problem  $Lv = \lam v$ can be solved explicitly. There are several ways to approach this problem, e.g.\ by doing an ansatz with formal power series or an ansatz with Fourier series. We follow \cite[Section 1.3]{miwaJimboDate} by using formal power series.

\begin{framed}
\begin{proposition}
 Let $L:= \del^2_x +u$. Then the coefficients in the expansion as formal power series $v=  e^{\sqrt{\lam}  x}\sum_{k\geq 0} \frac{v_k}{ \left(\sqrt{\lam }\ \right)^{k}}$ of the eigenfunctions $v: \R \to \R$ of the eigenvalue problem $Lv = \lam v$ can be determined recursively by solving
 $$
 \del_x v_0=0 \qquad \mbox{and} \qquad \del_x v_k = -\frac{1}{2} \left( \del^2_x v_{k-1} + u v_{k-1}\right) \quad\ \forall\ k \geq 1.
 $$
 Thus $v_0$ is constant and $v_k (x)= -\frac{1}{2} \int^x  \del^2_y v_{k-1}(y) + u(y) v_{k-1}(y) \ dy \ $ $\forall\ k \geq 1$. 
\end{proposition}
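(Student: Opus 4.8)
The plan is to substitute the formal power series ansatz directly into the eigenvalue equation $Lv = \lam v$ and compare coefficients of powers of $\sqrt{\lam}$. Write $\mu := \sqrt{\lam}$ for brevity, so that $v = e^{\mu x}\sum_{k\geq 0} v_k\,\mu^{-k}$, where each $v_k$ is a function of $x$ (and possibly of the remaining parameters). Everything below is an identity of formal power series in $\mu^{-1}$ (times the factor $e^{\mu x}$); we make no claim about convergence.

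First I would differentiate term by term, using $\del_x\bigl(e^{\mu x}w\bigr) = e^{\mu x}\bigl(\mu w + \del_x w\bigr)$ for any $w$. Applying this twice gives
\[
\del_x^2 v = e^{\mu x}\sum_{k\geq 0}\bigl(\mu^2 v_k + 2\mu\,\del_x v_k + \del_x^2 v_k\bigr)\mu^{-k},
\]
and therefore
\[
Lv - \lam v = \del_x^2 v + uv - \mu^2 v = e^{\mu x}\sum_{k\geq 0}\bigl(2\mu\,\del_x v_k + \del_x^2 v_k + u v_k\bigr)\mu^{-k},
\]
since the $\mu^2 v_k$ terms cancel exactly against $\lam v = \mu^2 v$.

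Next I would reorganise this as a single formal Laurent series in $\mu$. Splitting off the first group of terms, $\sum_{k\geq 0}2\,\del_x v_k\,\mu^{1-k}$, and shifting the summation index $k\mapsto k+1$ in it isolates the $\mu^1$-contribution and produces
\[
Lv - \lam v = e^{\mu x}\left(2\,\del_x v_0\;\mu \;+\; \sum_{k\geq 0}\bigl(2\,\del_x v_{k+1} + \del_x^2 v_k + u v_k\bigr)\mu^{-k}\right).
\]
Since $e^{\mu x}\neq 0$ and a formal series vanishes iff all its coefficients vanish, setting this to zero forces $\del_x v_0 = 0$ from the coefficient of $\mu^1$, and $2\,\del_x v_{k+1} + \del_x^2 v_k + u v_k = 0$ from the coefficient of $\mu^{-k}$ for each $k\geq 0$. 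Relabelling $k+1$ as $k$ yields exactly the stated recursion $\del_x v_k = -\frac{1}{2}\bigl(\del_x^2 v_{k-1} + u v_{k-1}\bigr)$ for $k\geq 1$; integrating in $x$ then gives that $v_0$ is constant and $v_k(x) = -\frac{1}{2}\int^x \del_y^2 v_{k-1}(y) + u(y)v_{k-1}(y)\,dy$ for $k\geq 1$.

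The computation is entirely formal, so there is no genuine analytic obstacle. The only points requiring care are the bookkeeping of the index shift — making sure the $\mu^1$-coefficient is isolated correctly so that it produces the separate equation $\del_x v_0 = 0$ rather than being swallowed into the generic recursion — and the standing remark that term-by-term differentiation and coefficient comparison are legitimate operations on formal power series. One should also note that each integration introduces a constant, which is absorbed into the choice of lower limit in $\int^x$; this is why the $v_k$ are determined only relative to that choice (and $v_0$ only up to an overall constant), consistent with the statement.
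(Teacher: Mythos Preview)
Your proof is correct and follows essentially the same approach as the paper: substitute the formal ansatz into $Lv=\lam v$, expand the derivatives, cancel the $\lam v$ term, and read off the recursion from the coefficients of $\mu^{-k}$. Your presentation is slightly more streamlined (using $\del_x(e^{\mu x}w)=e^{\mu x}(\mu w+\del_x w)$ directly rather than writing out the intermediate expansions), but the argument is the same.
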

\end{framed}

\begin{proof}
If $u: \R \to \R$ vanishes then $Lv=\lam v$ reduces to $v_{xx}=\lam v$ of which the solutions are of the form $v(x)=ce^{\sqrt{\lam}  x}$ with $c \in \R$. If $u \neq 0$ then we make the ansatz
$$v(x) = e^{\sqrt{\lam}  x}\sum_{k\geq 0} \frac{v_k(x)}{ \left(\sqrt{\lam }\ \right)^{k}}.$$
This leads to
\begin{align*}
 0& = v_{xx}+u v-\lam v \\
 & = \left( \sqrt{\lam} e^{\sqrt{\lam}  x}\sum_{k\geq 0} \frac{v_k}{ \left(\sqrt{\lam }\ \right)^{k}} + e^{\sqrt{\lam}  x}\sum_{k\geq 0} \frac{(v_k)_x }{ \left(\sqrt{\lam }\ \right)^{k}} \right)_x
  + u e^{\sqrt{\lam}  x}\sum_{k\geq 0} \frac{v_k}{ \left(\sqrt{\lam }\ \right)^{k}} - \lam e^{\sqrt{\lam}  x}\sum_{k\geq 0} \frac{v_k}{ \left(\sqrt{\lam }\ \right)^{k}} \\
 & = \lam e^{\sqrt{\lam}  x}\sum_{k\geq 0} \frac{v_k}{ \left(\sqrt{\lam }\ \right)^{k}} + 2 \sqrt{\lam} e^{\sqrt{\lam}  x}\sum_{k\geq 0} \frac{(v_k)_x }{ \left(\sqrt{\lam }\ \right)^{k}} + e^{\sqrt{\lam}  x}\sum_{k\geq 0} \frac{(v_k)_{xx} }{ \left(\sqrt{\lam }\ \right)^{k}} \\
 & \quad + u e^{\sqrt{\lam}  x}\sum_{k\geq 0} \frac{v_k}{ \left(\sqrt{\lam }\ \right)^{k}}- \lam e^{\sqrt{\lam}  x}\sum_{k\geq 0} \frac{v_k}{ \left(\sqrt{\lam }\ \right)^{k}} \\
 & = 2 \sqrt{\lam} e^{\sqrt{\lam}  x}\sum_{k\geq 0} \frac{(v_k)_x }{ \left(\sqrt{\lam }\ \right)^{k}} + e^{\sqrt{\lam}  x}\sum_{k\geq 0} \frac{u v_k + (v_k)_{xx} }{ \left(\sqrt{\lam }\ \right)^{k}} 
\end{align*}
After dividing by $e^{\sqrt{\lam}  x} $, we get, for $k \geq 0$, the following equations for the coefficients $v_k$:
 \begin{itemize}[leftmargin=10mm]
  \item 
  $0 = 2 (v_0)_x $ implies $x \mapsto v_0(x)\equiv v_0$ must be constant.
  \item
  $0 = 2 (v_1)_x + u v_0 + (v_0)_{xx} = 2 (v_1)_x + u v_0$ implies $v_1(x) = - \frac{v_0}{2}\int^x u(y) \ dy$.
  \item
  $0 = 2 (v_2)_x + u v_1 + (v_1)_{xx} $ implies $v_2(x) = - \frac{1}{2} \int^x u (y) v_1(y) + (v_1)_{yy} (y) \ dy$.
 \end{itemize}
In general, we get $0 = 2 (v_k)_x + u v_{k-1} + (v_{k-1})_{xx} $ for $k \geq 1$ which leads to 
$$
v_k(x) = - \frac{1}{2} \int^x u (y)v_{k-1}(y) + (v_{k-1})_{yy}(y) \ dy \qquad \forall\ k \geq 1
$$
which proves the claim. 
\end{proof}

Given the operator $L$, we will see that there are (at least) two totally different choices for the operator $M$ to form a Lax pair for the KdV equation. The first one is a differential operator of first order:

\begin{framed}
 \begin{example}
 \label{laxKdvFirst}
Let $u $ be a function and $\ga \in \R$ a constant and 
\begin{equation*}
 \left\{ 
 \begin{aligned}
  L & =\del^2_x + u,  \\
  M &= ( \ga + u_x) - (4 \lam + 2 u) \del_x
 \end{aligned}
 \right. 
\end{equation*}
where $\lam= \lam(t)$ is an eigenvalue of $L$. Abbreviate $L_t:= \del_t L = u_t$. Then $L$ and $M$ form a Lax pair, i.e., we have $L_t + [L,M]=0$, if and only if $u $ solves the KdV equation $u_t+6 uu_x +u_{xxx}=0$. 
 \end{example}
\end{framed}

\begin{proof}
Let $v: \R \times \R \to \R$ be a function in the variables $(x,t) \in \R^2$. We calculate
\begin{align*}
&  (L_t + [L,M])(v) = L_tv +LMv -MLv \\
& \quad = u_t v + (\del_x^2 + u) (( \ga + u_x) - (4 \lam + 2 u) \del_x)(v)
 - (( \ga + u_x) - (4 \lam + 2 u) \del_x)(\del_x^2 + u)(v) \\
& \quad = u_t v + (\del_x^2 + u)( \ga v + u_x v - 4 \lam v_x + 2uv_x)
- (( \ga + u_x) - (4 \lam + 2 u) \del_x) (v_{xx} + uv)  \\
& \quad = u_t v + \ga v_{xx} + u_{xxx} v + 2 u_{xx}v_x + u_x v_{xx} -4\lam v_{xxx} -2 u_{xx} v_x -4 u_x v_{xx} - 2u v_{xxx} \\
&   \qquad\ \ + \ga u v + u u_x v - 4 \lam u v_x - 2 u^2 v_x - \bigl( \ga v_{xx} + u_x v_{xx} -4 \lam v_{xxx} - 2 u v_{xxx} \\
&   \qquad\ \ + \ga u v + u u_x v - 4 \lam u_x v - 4 \lam u v_x - 2 u u_x v - 2 u^2 v_x \bigr) \\
& \quad = u_t v +  u_{xxx} v - 4u_x v_{xx} + 4 \lam u_x v + 2 uu_x v
\end{align*}
and, making use of $\lam$ being an eigenvalue, i.e., $\lam v = Lv = v_{xx} +uv $, and we obtain furthermore
\begin{align*}
& = u_t v +  u_{xxx} v - 4u_x v_{xx} + 4 u_x (v_{xx} +uv) + 2 uu_x v \\
& = u_t v +  u_{xxx} v - 4 u_x v_{xx} + 4 u_x v_{xx} + 4 u u_x v + 2uu_x v \\
& = (u_t + 6 uu_x + u_{xxx}) v.
\end{align*}
Thus $L_t + [L,M]=0$ holds true if and only if $u_t + 6 uu_x + u_{xxx}=0$, i.e., $u$ is a solution of the KdV equation. 
\end{proof}

Now let us consider another possible Lax pair related to the KdV equation where $M$ is an operator of third order.

\begin{framed}
 \begin{example}
 \label{laxKdvThird}
Let $u, c_1, c_2, c_3: \R^2 \to \R $ be functions with variables $(x,t) \in \R^2$. Then the ansatz
\begin{equation*}
 \left\{ 
 \begin{aligned}
  L & =\del^2_x + u,\\
  M &= c_1 \del_x^3 + c_2 \del_x + c_3
 \end{aligned}
 \right. 
\end{equation*}
turns $L$ and $M$ into a Lax pair if and only if
\begin{itemize}
 \item 
 \ $(c_1, c_2, c_3) = (-4, -6u, -3u_x)$,
 \item
 \ $u$ solves the KdV equation $u_t + 6uu_x + u_{xxx} =0$.
\end{itemize}
In other words, $L$ and $M$ form a Lax pair if and only if
 \begin{equation*}
 \left\{ 
 \begin{aligned}
  L & =\del^2_x + u, \\
  M &= -4 \del_x^3 -6u \del_x -3u_x
 \end{aligned}
 \right. 
\end{equation*}
and $u$ satisfies $u_t +6 uu_x + u_{xxx} = 0$.
\end{example}
\end{framed}

\begin{proof}
 Let $u$, $v$, $c_1$, $c_2$, $c_3: \R^2 \to \R$ be functions with variables $(x,t)\in \R^2$ and set $L : =\del^2_x + u$ and $M:= c_1 \del_x^3 + c_2 \del_x + c_3$ and $L_t:= \del_t L =u_t$. We calculate
 \begin{align*}
  (L_t + [L,M])(v)
  &  = u_t v + (\del^2_x + u)(c_1 \del_x^3 + c_2 \del_x + c_3)(v) - (c_1 \del_x^3 + c_2 \del_x + c_3)(\del^2_x + u)(v) \\
  &  = u_t v + (\del^2_x + u) ( c_1 v_{xxx} + c_2 v_x + c_3 v) - (c_1 \del_x^3 + c_2 \del_x + c_3)(v_{xx} + uv) \\
  &  = u_t v + (c_1)_{xx} v_{xxx} + 2 (c_1)_x v_{xxxx} + c_1 v_{xxxxx} + (c_2)_{xx} v_x + 2 (c_2)_x v_{xx} + c_2 v_{xxx}  \\
  & \quad \ \ + (c_3)_{xx} v + 2 (a_3)_x v_x + c_3 v_{xx} + c_1 u v_{xxx} + c_2 u v_x + a_3 u v \\
  & \quad  \ \ - c_1 v_{xxxxx} - c_2 v_{xxx} -c_3 v_{xx} -c_1 u_{xxx}v - 3 c_1 u_{xx} v_x - 3 c_1 u_x v_{xx} \\
  & \quad \ \  - c_1 u v_{xxx} - c_2 u_x v - c_2 u v_x - c_3 u v \\
   &  = u_t v + 2 (c_1)_x v_{xxxx} + (c_1)_{xx} v_{xxx} + 2 (c_2)_x v_{xx} - 3 c_1 u_x v_{xx} + (c_2)_{xx} v_x \\
   & \quad  \ \  - 3c_1 u_{xx} v_x + 2 (c_3)_x v_x + (c_3)_{xx} v - c_1 u_{xxx} v - c_2 u_x v \\
   &  =  2 (c_1)_x v_{xxxx} + (c_1)_{xx} v_{xxx} + \bigl(2 (c_2)_x - 3 c_1 u_x \bigr) v_{xx} \\
   & \quad + \bigl( (c_2)_{xx} - 3c_1 u_{xx}  + 2 (c_3)_x  \bigr) v_x + \bigl( u_t + (c_3)_{xx}  - c_1 u_{xxx}  - c_2 u_x \bigr) v.
 \end{align*}
$0= L_t + [L,M]$ holds true if and only if the coefficients of $v_{xxxx}$, $v_{xxx}$,  $v_{xx}$,  $v_x$, and $v$ vanish. This means
\begin{itemize}[leftmargin=10mm]
 \item 
 $0=2 (c_1)_x$ gives $(c_1)_x =0$ which implies that $c_1$ must be constant. Denote this constant for the moment by $c$. Its precise value will be determined below. 
 \item
 $c_1 \equiv c$ constant satisfies automatically also $(c_1)_{xx}=0$.
 \item
 Moreover, $0= 2 (c_2)_x - 3 c_1 u_x= 2 (c_2)_x - 3 c u_x$ implies $(c_2)_x= \frac{3}{2} c u_x$ so that $c_2(x) = \frac{3}{2} c \int^x u_y(y) \ dy = \frac{3}{2} c u(x)$.
 \item
 Furthermore, $0= (c_2)_{xx} - 3c_1 u_{xx}  + 2 (c_3)_x = \left(\frac{3}{2} c u\right)_{xx} - 3 c u_{xx} + 2 (c_3)_x$ which is equivalent to $(c_3)_x = \frac{3}{4} c u_{xx} $ which implies in turn $c_3= \frac{3}{4} c u_x$.
 \item
 Eventually, we find
 \begin{align*}
   0 & =  u_t + (c_3)_{xx}  - c_1 u_{xxx}  - c_2 u_x = u_t + \frac{3}{4} c u_{xxx} -cu_{xxx} -\frac{3}{2} c u u_x u_t - \frac{1}{4} c u_{xxx}-\frac{3}{2} c u u_x
 \end{align*}
which agrees with the KdV equation $0= u_t + 6 uu_x + u_{xxx}$ if and only if $-4 = c \equiv c_1$.
\end{itemize}
Altogether, we find $c_1 \equiv -4$ and $c_2 = -6u$ and $c_3 = -3 u_x$ which leads to $M= - 4 \del^3_x -6u \del_x -3u_x$. Moreover, we conclude that $L=\del_x^2 +u$ and $M= - 4 \del^3_x -6u \del_x -3u_x$ form a Lax pair if and only if $u$ is a solution of the KdV equation $u_t + 6 uu_x + u_{xxx}=0$.
\end{proof}

The resulting equation and Lax pair (if any) depend quite a bit on the chosen ansatz:

\begin{framed}
 \begin{example}
 \label{laxKdvAlmost}
Let $u, c_1, c_2: \R^2 \to \R $ be functions with variables $(x,t) \in \R^2$. Then the ansatz 
\begin{equation*}
 \left\{ 
 \begin{aligned}
  L & =\del^2_x + u, \\
  M &= \del_x^3 + c_1 \del_x + c_2
 \end{aligned}
 \right. 
\end{equation*}
yields a Lax pair if and only if 
\begin{itemize}
 \item 
 $c_1 = \frac{3}{2}u$ and $c_2 = \frac{3}{4}u_x$,
 \item
 $u$ solves the KdV type equation $u_t -\frac{3}{2} u u_x - \frac{1}{4} u_{xxx}=0$.
\end{itemize}
In other words, $L$ and $M$ are a Lax pair if and only if
\begin{equation*}
 \left\{ 
 \begin{aligned}
  L & =\del^2_x + u, \\
  M &=  \del_x^3 + \frac{3}{2}u \del_x + \frac{3}{4} u_x
 \end{aligned}
 \right. 
\end{equation*}
and $u$ satisfies $u_t -\frac{3}{2} u u_x - \frac{1}{4} u_{xxx}=0$.
\end{example}
\end{framed}

\begin{proof}
Let $u, c_1, c_2: \R^2 \to \R $ be functions with variables $(x,t) \in \R^2$ and $L =\del^2_x + u$ and $M = \del_x^3 + c_1 \del_x + c_2$ and $L_t:= \del_t L = u_t$. We compute
\begin{align*}
 & (L_t + [L,M])(v) \\
 & \quad = u_t v + (\del^2_x + u)(\del_x^3 + c_1 \del_x + c_2)(v) - (\del_x^3 + c_1 \del_x + c_2) (\del^2_x + u)(v) \\
 & \quad =  u_t v + (\del^2_x + u)(v_{xxx} + c_1 v_x + c_2v) - (\del_x^3 + c_1 \del_x + c_2)(v_{xx} + uv) \\
 & \quad =  u_t v + v_{xxxxx} + (c_1 v_x)_{xx} + (c_2 v)_{xx} + u v_{xxx} + c_1 u v_x + c_2 uv \\
 & \qquad \quad\ \ - v_{xxxxx} -(uv)_{xxx} -c_1 v_{xxx} -c_1 (uv)_x - c_2 v_{xx} - c_2 uv \\
 & \quad =  u_t v + (c_1)_{xx} v_x + 2 (c_1)_x v_{xx} + c_1 v_{xxx} + (c_2)_{xx} v + 2 (c_2)_x v_x + c_2 v_{xx} + u v_{xxx}  \\ 
 & \qquad + c_1 u v_x  - u_{xxx}v - 3u_{xx} v_x - 3 u_x v_{xx} - u v_{xxx} - c_1 v_{xxx} - c_1 u_x v - c_1 u v_x - c_2 v_{xx} \\
 & \quad = \bigl( 2 (c_1)_x - 3 u_x \bigr) v_{xx} + \bigl( (c_1)_{xx} + 2 (c_2)_x - 3u_{xx} \bigr) v_x 
 + \bigl( u_t + (c_2)_{xx} - u_{xxx} - c_1 u_x \bigr) v
\end{align*}
$0= L_t + [L,M]$ holds true if and only if the coefficients of $v_{xx}$,  $v_x$, and $v$ vanish. This leads to
\begin{itemize}[leftmargin=10mm]
 \item 
 $ 0= 2 (c_1)_x - 3 u_x$ is equivalent to $(c_1)_x = \frac{3}{2} u_x$ and thus $c_1 =  \frac{3}{2} u$.
 \item
 $ 0 = (c_1)_{xx} + 2 (c_2)_x - 3u_{xx} =  \frac{3}{2} u_{xx} + 2 (c_2)_x - 3u_{xx} =  2 (c_2)_x  -\frac{3}{2} u_{xx}$ is equivalent to $ (c_2)_x = \frac{3}{4} u_{xx}$ which implies $c_2 = \frac{3}{4} u_{x}$.
 \item
 $ 0 = u_t + (c_2)_{xx} - u_{xxx} - c_1 u_x = u_t + \frac{3}{4} u_{xxx} - u_{xxx}- \frac{3}{2} uu_x=u_t- \frac{3}{2} uu_x - \frac{1}{4} u_{xxx} $ is a KdV type equation.
\end{itemize} 
Thus the coefficients are uniquely determined by $c_1 = \frac{3}{2}u$ and $c_2 = \frac{3}{4}u_x$ and the related PDE is of KdV type given by $u_t -\frac{3}{2} u u_x - \frac{1}{4} u_{xxx}=0$. We conclude that $L$ and $M$ form a Lax pair if and only if $c_1 = \frac{3}{2}u$ and $c_2 = \frac{3}{4}u_x$ and $u$ is a solution of $u_t -\frac{3}{2} u u_x - \frac{1}{4} u_{xxx}=0$. 
\end{proof}


\clearpage

\section{Limitations of the Lax formalism}

\noindent
A look at the proofs of \reflaxKdvFirst, \reflaxKdvThird, and \reflaxKdvAlmost\ shows that an ansatz usually leads to several constraints. If the underlying field is commutative one may hope for a reasonable number of terms arising from the commutator in the Lax equation to cancel each other. But as soon as we are not working over the real or complex numbers but the quaternions then much less terms will be able to cancel each other.

\begin{framed}
\begin{definition}
 A linear map $\mcI: \R^{2n}\simeq C^n \to \R^{2n}\simeq C^n$ satisfying $\mcI \circ \mcI = -\Id$ is said to be a {\bf complex structure\index{Complex structure}}. Three complex structures $\mcI_1$, $\mcI_2$, $\mcI_3: \R^{4n} \to \R^{4n}$ form a {\bf quaternionic structure\index{Quaternionic structure}} if $\mcI_1 \circ \mcI_2=-\mcI_2 \circ \mcI_1 = \mcI_3$.
\end{definition}
 \end{framed}

Let us have a look at some examples.

\begin{framed}
 \begin{example}
 Let $i\in \C$ denote the imaginary unit. Then 
 $$\mcI: \R^{2n}\simeq C^n \to \R^{2n}\simeq C^n, \quad z=(z_1, \dots, z_n) \mapsto iz:=(iz_1, \dots, iz_n)$$
 is a complex structure.
 \end{example}
\end{framed}

Furthermore

\begin{framed}
 \begin{definition}
Write elements $x=(x_0, x_1, x_2, x_3) \in \R^4$ as $x=x_0 + ix_1 + j x_2 + k x_3$ where $i, j, k$ satisfy $i^2=j^2=k^2=ijk=-1$ and endow it with the standard addition and the (noncommutative!) multiplication 
\begin{align*}
 x y & = (x_0 + ix_1 + j x_2 + k x_3)(y_0 + iy_1 + j y_2 + k y_3) \\
 & = x_0 y_0 + ix_0y_1 + jx_0 y_2 + k x_0 y_3 +   ix_1y_0 + i^2 x_1 y_1 + ijx_1 y_2 + ik x_1 y_3 \\
 &  \quad + jx_2 y_0 + jix_2 y_1 + j^2 x_2 y_2 + jk x_2 y_3 +  kx_3 y_0 + kix_3 y_1 + kj x_3 y_2 + k^2 x_3 y_3 \\
 & = x_0 y_0 - x_1 y_1 - x_2 y_2 - x_3 y_3 + i(x_0y_1 + x_1 y_0 + x_2 y_3 - x_3 y_2) \\
 & \quad + j( x_0 y_2 + x_2 y_0 -x_1 y_3 + x_3 y_1) + k(x_0 y_3 + x_3 y_0 + x_1 y_2 - x_2 y_2).
\end{align*}
$(\R^4, i, j, k)$ with the above multiplication is usually denoted by $\H$ and its elements are called {\bf quaternions\index{Quaternions}}. $i, j, k \in \H$ are often called {\bf quaternionic units\index{Quaternionic unit}}. {\bf Quaternionic conjugation\index{Quaternionic conjugation}} is defined as $\overline{x}:=x_0  ix_1 - j x_2 - k x_3$.
\end{definition}
\end{framed}

Multiplication in $\H$ is indeed {\em not} commutative as already the very special case $ij=-ji$ shows. In fact, only the subspace of real numbers of $ \H$ commutes with all $z \in \H$.

\begin{framed}
 \begin{example}
  Let $i, j, k \in \H$ be the quaternionic units. Then
  \begin{align*}
  &\mcI : \H^n \to \H^n , & & \mcI (z_1, \dots, z_n)  := (iz_1, \dots, iz_n), \\
  & \mcJ : \H^n \to \H^n  ,  && \mcJ (z_1, \dots, z_n)  := (jz_1, \dots, jz_n), \\ 
  &\mcK: \H^n \to \H^n  ,  & & \mcK (z_1, \dots, z_n)  := (kz_1, \dots, k z_n)
  \end{align*}
are three complex structures that form in fact a quaternionic structure.
\end{example}
\end{framed}

We are now interested in the following PDE.

\begin{framed}
 \begin{definition}
Let $\R^{4n} \simeq \H^n$ be equipped with a quaternionic structure $\mcI_1, \mcI_2, \mcI_3$. Consider $\T^3: = (\R \slash \Z)^3$ with the standard basis and denote the associated partial derivatives by $\del_1$, $\del_2$, $\del_3$. A map $v: \T^3 \to \H^n$ satisfies the {\bf Cauchy-Riemann-Fueter equation\index{Cauchy-Riemann-Fueter equation}} if 
$$
\mcI_1 \del_1 v + \mcI_2 \del_2 v + \mcI_3 \del_3 v =0.
$$
The {\bf Cauchy-Riemann-Fueter operator\index{Cauchy-Riemann-Fueter operator}} is given by
$$
\dirac := \mcI_1 \del_1  + \mcI_2 \del_2  + \mcI_3 \del_3 
$$
and its {\bf quaternionic conjugate\index{Quaternionic conjugate}} by 
$$
\overline{\dirac} := -\mcI_1 \del_1  - \mcI_2 \del_2  - \mcI_3 \del_3 .
$$
The {\bf negative evolution equation\index{Negative evolution equation}} associated with the Cauchy-Riemann-Fueter operator is given by
$$
\del_tu = - \ \dirac (u)
$$
where $u: \T^3 \times \R \to \H $ has coordinates $(x,t) \in \T^3 \times \R $.
 \end{definition}
\end{framed}

Note the following relation between the Cauchy-Riemann-Fueter operator and the Laplace operator:

\begin{framed}
 \begin{remark}
$\dirac$ is a Dirac type operator in the sense that $\dirac \circ \overline{\dirac} = \De$ where $\De $ is the 3-dimensional Laplace operator. 
 \end{remark}
\end{framed}

The Cauchy-Riemann-Fueter equation is of importance for instance in hyperk\"ahler Floer theory where it describes the critical points of the hypersymplectic action functional. Its negative evolution equation coincides with the PDE induced by the negative $L^2$-gradient flow between critical points, for details see \cite{HNS} and its physical interpretation in \cite{BrilleslijperFabert}. The Cauchy-Riemann-Fueter equation can be written as Hamiltonian PDE, see \cite{hohloch}. Thus it makes sense to inquire about the existence of Lax pairs. We will see that the noncommutativity of the multiplication in the target space poses here the real obstacle.

\begin{framed}
 \begin{example}
 \label{exampleNoLax}
 Let $\R^{4n} \simeq \H^n$ be equipped with a quaternionic structure $\mcI_1, \mcI_2, \mcI_3$ and let $\del_1, \del_2, \del_3$ be the partial derivatives on $\T^3$ w.r.t.\ the standard basis. Let $u, v: \T^3 \times \R \to \H^n$ be functions with variables $(x,t) \in \T^3 \times \R$ and set
 \begin{equation*}
  \left\{ 
  \begin{aligned}
L & := \dirac + u,  \\
M & := \overline{\dirac} 
\end{aligned}
  \right.
 \end{equation*}
with $L(v):= \dirac(v) + uv$ where the operation $\H^n \times \H^n \to \H^n$, $(X,Y) \mapsto XY$ must be associative and linear in both factors. 
Then $L_t + [L,M]=0$, i.e., $L$ and $M$ form a Lax pair, if and only if 
\begin{itemize}
 \item 
 $u$ solves $u_t + \dirac(u) =0$,
 \item
 $\mcI_m u = u \mcI_m$ \ for all $m=1, 2, 3$.
\end{itemize}
 \end{example}
\end{framed}

\begin{proof}
 Let $u, v: \T^3 \times \R \to \H^n$ be functions with variables $(x,t) \in \T^3 \times \R$ and let $L  := \dirac + u$ and $M := \overline{\dirac} $ and set $L_t:=\del_t L = u_t$. Since complex structures are linear maps, we get for $m \in \{1, 2, 3\}$ 
 \begin{align*}
  \bigl(\del_1(\mcI_{m} v), \del_2 (\mcI_{m} v), \del_3 (\mcI_{m} v) \bigr) & =   D \left(\mcI_{m} v\right)
  = D \mcI_m|_v.Dv = \mcI_m (\del_1 v, \del_2 v, \del_3 v)  \\
  &  = \left( \mcI_m \del_1 v, \mcI_m  \del_2 v,\mcI_m \del_3 v \right).
 \end{align*}
 Moreover, if $v$ is sufficiently regular, then 
 $$\del^2_{m_1 m_2}v:=\del_{m_1}\del_{ m_2} v = \del_{m_2}\del_{ m_1}v=:\del_{m_2 m_1}^2$$ 
 for all $m_1, m_2 \in \{1, 2,3\}$. We compute
 \begin{align*}
  & (L_t + [L, M])(v) \\
  & \quad = u_tv + (\mcI_1 \del_1  + \mcI_2 \del_2  + \mcI_3 \del_3 +u)( -\mcI_1 \del_1  - \mcI_2 \del_2  - \mcI_3 \del_3)(v) \\
  & \qquad \quad \ \ - ( -\mcI_1 \del_1  - \mcI_2 \del_2  - \mcI_3 \del_3)  (\mcI_1 \del_1  + \mcI_2 \del_2  + \mcI_3 \del_3 + u)(v) \\
  & \quad = u_tv + (\mcI_1 \del_1  + \mcI_2 \del_2  + \mcI_3 \del_3 + u)( -\mcI_1 \del_1 v - \mcI_2 \del_2 v - \mcI_3 \del_3 v) \\
  & \qquad \quad \ \ + (\mcI_1 \del_1  + \mcI_2 \del_2  + \mcI_3 \del_3)(\mcI_1 \del_1 v  + \mcI_2 \del_2 v  + \mcI_3 \del_3 v + uv) \\
  & \quad =  u_tv + \del^2_{11} v - \mcI_1 \mcI_2 \del^2_{12} v - \mcI_1 \mcI_3 \del^2_{13} v - \mcI_2 \mcI_1 \del^2_{12} v + \del^2_{22} v - \mcI_2 \mcI_3 \del^2_{23} \\
  & \qquad \quad \ \ - \mcI_3 \mcI_1 \del^2_{13}v - \mcI_3 \mcI_2 \del^2_{23} v + \del^2_{33} v - u \mcI_1 \del_1 v - u \mcI_2 \del_2 v - u \mcI_3 \del_3 v \\
  & \qquad \quad \ \ - \del^2_{11}v + \mcI_1 \mcI_2 \del^2{12}v + \mcI_1 \mcI_3 \del^2_{13} v + \mcI_1 (\del_1 u) v + \mcI_1 u \ \del_1 v \\
  & \qquad \quad \ \ + \mcI_2 \mcI_1 \del^2_{12}v - \del^2_{22} v + \mcI_2 \mcI_3 \del^2_{23} v + \mcI_2 (\del_2 u) v + \mcI_2 u \ \del_2 v \\
   & \qquad \quad \ \ + \mcI_3 \mcI_1 \del^2_{13} v + \mcI_3 \mcI_2 \del^2_{23} v- \del^2_{33}  + \mcI_3 (\del_3 u) v + \mcI_3 u \ \del_3 v \\
   & \quad =  \bigl( \mcI_1 u - u \mcI_1 \bigr) \del_1v + \bigl( \mcI_2 u - u \mcI_2 \bigr) \del_2 v + \bigl( \mcI_3 u - u \mcI_3 \bigr) \del_3 v \\
    & \qquad \quad \ \ + \bigl(u_t + \mcI_1 \del_1 u  + \mcI_2 \del_2 u + \mcI_3 \del_3 u \bigr) v .
 \end{align*}
The identity $L_t + [L, M]=0$ is thus equivalent with
$$
\left\{
\begin{aligned}
 \ & \mcI_1 u = u \mcI_1, \\
 \ &\mcI_2 u  = u \mcI_2, \\
 \ & \mcI_3 u  = u \mcI_3,  \\
 \ &u_t + \mcI_1 \del_1 u  + \mcI_2 \del_2 u + \mcI_3 \del_3 u =0. 
\end{aligned}
\right.
$$
The first three conditions require $u$ to commute with all three complex structures and the last condition is equivalent to the negative evolution equation $\del_t u = -\dirac (u)$.
\end{proof}

Thus solutions $u$ of the evolution equation $\del_t u = -\dirac (u)$ cannot be very rich in structure if the associated operators should form a Lax pair:

\begin{framed}
 \begin{remark}
 \begin{enumerate}[label=\arabic*),leftmargin=5mm, align=left]
  \item 
   Let $\mcI_1=i$, $\mcI_2=j$, $\mcI_3=k$ be the standard quaternionic structure on $\H$. Then the condition $\mcI_m u = u \mcI_m$ for all $m=1, 2, 3$ in \refexampleNoLax\ is satisfied if and only if $u$ is real valued, i.e., of the form $(u, 0, 0,0) \in \R^4 \simeq \H$.
  \item
  \refexampleNoLax\ was found with the ansatz $M= a \del_1 + b \del_2 + c \del_3$ where $a, b, c: \T^3 \times \R \to \H$ are coefficient functions depending on $(x,t) \in \T^3 \times \R$ and $\mcI_1=i$, $\mcI_2=j$, $\mcI_3=k$ is the standard quaternionic structure on $\H$. When studying the conditions for $L_t + [L,M]=0$, then the coefficient of $v$ is given by $u_t -a \del_1 u - b \del_2 u -c\del_3 u$ which yields $u_t + \dirac(u)=0$ if and only if $a=-i$ and $b=-j$ and $c=-k$.
 \end{enumerate}
 \end{remark}
\end{framed}






\chapter{KdV equation: direct and inverse scattering}

\vspace{-10mm}

In this chapter, we will use the Lax formalism and the scattering technique to solve the initial value problem for the Korteweg–de Vries (KdV) equation.

In the previous chapter, in Examples~\ref{laxKdvFirst} and~\ref{laxKdvThird}, we gave the following two Lax pairs for the KdV equation:
\begin{equation}
	\begin{cases}
		L = \partial^2_{x} + u(x,t), \\
		M = \left(\gamma + u_x(x,t)\right) - \left(4 \lambda + 2 u(x,t)\right) \partial_x  
	\end{cases}
\quad \mbox{and} \qquad
	\begin{cases}
		L = \partial^2_{x} + u(x,t), \\
		M = -4 \partial^3_x - 6 u(x,t) \partial_x - 3 u_x(x,t).
	\end{cases}
\end{equation}
Both pairs satisfy the Lax equation $L_t = [L, M]$.
Roughly speaking, the eigenvalue problem associated with the operator $ L $ allows us to deduce the so-called \emph{scattering data} from the initial condition. The time evolution operator $ M $ then governs the time evolution of these scattering data. Finally, the \emph{inverse scattering transform} (by the use of the Gel'fand-Levitan-Marchenko formula) reconstructs the solution. A diagram of this procedure is shown in Figure~\ref{fig:scatteringinv}.

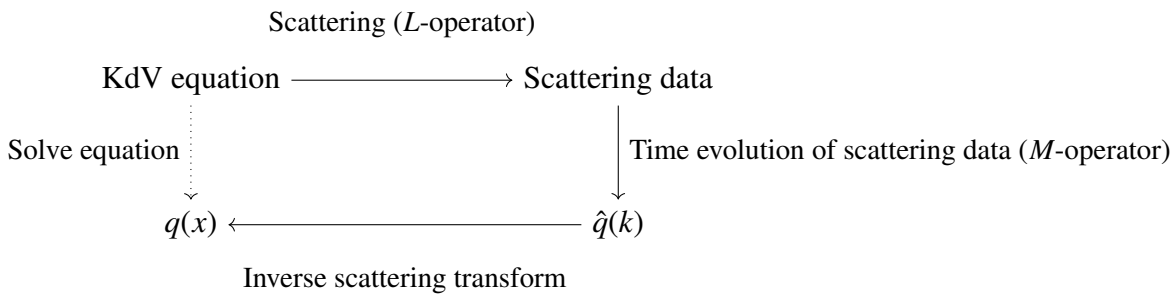
\begin{figure}
	\begin{tikzpicture}
		\matrix (m) [matrix of math nodes, row sep=3em, column sep=7em, nodes={anchor=center}]
		{ 
			\text{KdV equation} & \text{Scattering data} \\
			q(x) & \hat{q}(k) \\
		};
		
		\draw[->] (m-1-1) -- (m-1-2) node[midway, above, font=\small, yshift=1em] {Scattering ($L$-operator)};
		\draw[<-] (m-2-1) -- (m-2-2) node[midway, below, font=\small, yshift=-1em] {Inverse scattering transform};
		
		\draw[->] (m-1-2) -- (m-2-2) node[midway, right, font=\small] {Time evolution of scattering data ($M$-operator)};
		\draw[dotted][->] (m-1-1) -- (m-2-1) node[midway, left, font=\small] {Solve equation};
		
		
	\end{tikzpicture}
	\label{fig:scatteringinv}
	\caption{The procedure around the method of inverse scattering.}
\end{figure}

\vspace{1mm}


\clearpage

\section{Analogy with Fourier transformation}

The abstract procedure of scattering is quite similar to the Fourier transform formalism, though the details differ considerably. The Fourier transform converts a partial differential equation into a new equation for the so-called \emph{Fourier coefficients}, which is hoped to be easier to solve. Then, the inverse Fourier transform allows for reconstructing the solution.

Now recall that, for $f \in L^1(\mathbb{R}^n)$, the Fourier transform is given by
\begin{equation}
	\hat{f}(\mathbf{\xi}) = \frac{1}{\left(2 \pi\right)^{n/2}} \int_{\mathbb{R}^n} f(\textbf{x}) \exp(- i \mathbf{x}\cdot \mathbf{\xi})\, d \mathbf{x}
\end{equation}
and it lies in $C^0_b (\mathbb{R}^n) = \left\{g \in C^0(\mathbb{R}^n) \mid g \text{ bounded}\right\}$. Moreover recall

\begin{framed}
\begin{proposition}
	Let $f\in L^1(\mathbb{R}^n)$ and $\hat{f} \in L^1(\mathbb{R}^n)$. Then
	\begin{equation}
		\tilde{f}(x) = \frac{1}{\sqrt{2 \pi}} \int_{\mathbb{R}^n} \hat{f} e^{i \mathbf{x} \cdot \mathbf{\xi}} d \mathbf{\xi}
	\end{equation}
	coincides with $f$ almost everywhere.
\end{proposition}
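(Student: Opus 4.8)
The plan is to establish the inversion formula by the classical Gaussian regularisation (mollifier) argument. For $\varepsilon>0$ introduce the regularised integral
\[
I_\varepsilon(\mathbf{x}) := \frac{1}{(2\pi)^{n/2}} \int_{\mathbb{R}^n} \hat{f}(\mathbf{\xi})\, e^{i\mathbf{x}\cdot\mathbf{\xi}}\, e^{-\varepsilon^2|\mathbf{\xi}|^2/2}\, d\mathbf{\xi},
\]
which is well defined for every $\mathbf{x}$ since $\hat f\in L^1(\mathbb{R}^n)$ and the Gaussian factor is bounded. First I would let $\varepsilon\to 0$ directly: the integrand is dominated in absolute value by $|\hat f|\in L^1(\mathbb{R}^n)$, so dominated convergence yields $I_\varepsilon(\mathbf{x})\to \tilde f(\mathbf{x})$ for every $\mathbf x\in\mathbb{R}^n$.

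Second, I would evaluate $I_\varepsilon$ in a different way. Insert the definition of $\hat f$ and apply Fubini's theorem, which is legitimate because $(\mathbf y,\mathbf\xi)\mapsto f(\mathbf y)\,e^{i(\mathbf x-\mathbf y)\cdot\mathbf\xi}\,e^{-\varepsilon^2|\mathbf\xi|^2/2}$ is absolutely integrable on $\mathbb{R}^n\times\mathbb{R}^n$ (a product of an $L^1$ function of $\mathbf y$ and an $L^1$ function of $\mathbf\xi$). This gives
\[
I_\varepsilon(\mathbf{x}) = \int_{\mathbb{R}^n} f(\mathbf{y})\, G_\varepsilon(\mathbf{x}-\mathbf{y})\, d\mathbf{y}, \qquad G_\varepsilon(\mathbf{z}) := \frac{1}{(2\pi)^n}\int_{\mathbb{R}^n} e^{i\mathbf{z}\cdot\mathbf\xi}\,e^{-\varepsilon^2|\mathbf\xi|^2/2}\,d\mathbf\xi.
\]
The one computational input is the classical Gaussian Fourier integral, obtained by completing the square and reducing to the one-dimensional identity $\int_{\mathbb{R}} e^{-at^2/2}\,dt=\sqrt{2\pi/a}$ (the linear term being absorbed by a contour shift / analytic continuation): this produces $G_\varepsilon(\mathbf z)=(2\pi\varepsilon^2)^{-n/2}\,e^{-|\mathbf z|^2/(2\varepsilon^2)}$. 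Hence $\{G_\varepsilon\}_{\varepsilon>0}$ is an approximate identity: $G_\varepsilon\ge 0$, $\int_{\mathbb{R}^n}G_\varepsilon=1$, and $\int_{|\mathbf z|\ge\delta}G_\varepsilon\to 0$ as $\varepsilon\to 0$ for every $\delta>0$.

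Third, I would invoke the standard fact that convolution against an approximate identity converges in $L^1$: $\|I_\varepsilon-f\|_{L^1}=\|f*G_\varepsilon-f\|_{L^1}\to 0$ as $\varepsilon\to 0$. Consequently there is a sequence $\varepsilon_m\downarrow 0$ with $I_{\varepsilon_m}(\mathbf x)\to f(\mathbf x)$ for almost every $\mathbf x$. Comparing this with the pointwise limit $\tilde f(\mathbf x)$ from the first step forces $\tilde f=f$ almost everywhere, which is the assertion. I expect the main obstacle to be precisely the $L^1$-approximate-identity estimate $\|f*G_\varepsilon-f\|_{L^1}\to 0$: it relies on continuity of translation in $L^1(\mathbb{R}^n)$ (itself a consequence of the density of $C_c(\mathbb{R}^n)$ in $L^1$), combined with splitting $\int G_\varepsilon$ over a small ball and its complement. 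Everything else — Fubini, dominated convergence, the Gaussian integral — is routine once that ingredient is granted. (With the normalisation displayed above, the inverse transform should carry the factor $(2\pi)^{-n/2}$; for $n=1$ this is exactly the stated constant.)
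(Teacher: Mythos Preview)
Your argument is correct and is the standard Gaussian-regularisation proof of the $L^1$ Fourier inversion theorem (essentially the one given in Reed--Simon). The paper, however, does not actually prove this proposition: it is stated without proof, and the companion proposition about the Schwartz space carries only the line ``See for instance \cite{reedsimons2} and/or the references therein.'' So there is nothing to compare against beyond noting that your approach is precisely the textbook one the paper is implicitly pointing to. Your closing remark about the normalisation constant is also well taken: the displayed $1/\sqrt{2\pi}$ should be $(2\pi)^{-n/2}$ in dimension $n$.
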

\end{framed}

The domain of the Fourier transform can be modified to make it invertible without requiring any specific conditions on the transformed function:

\begin{framed}
\begin{proposition}
	The Fourier transform is a bijection from $S(\mathbb{R}^n)$ (Schwartz space) to itself, where the inverse Fourier transform of a function $\hat{f}(\xi)$ is given by:
	\begin{equation}
		f(x) = \frac{1}{\sqrt{2 \pi}} \int_{\mathbb{R}^n} \hat{f} e^{i \mathbf{x} \cdot \mathbf{\xi}} d \mathbf{\xi}.
	\end{equation}
\end{proposition}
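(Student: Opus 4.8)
I would split the proof into three parts: (a) $\mathcal{F}$ maps $S(\mathbb{R}^n)$ into $S(\mathbb{R}^n)$; (b) for $f\in S(\mathbb{R}^n)$ the stated inversion formula holds, i.e.\ $\mathcal{F}^{-1}\mathcal{F}f=f$; (c) conclude that $\mathcal{F}$ is a bijection of $S(\mathbb{R}^n)$ with inverse $\mathcal{F}^{-1}$. The main point is that once (a) is in place, (b) is almost immediate from the Proposition just proved above (the $L^1$-inversion theorem), since a Schwartz function and its Fourier transform both lie in $L^1(\mathbb{R}^n)$.

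\textbf{Part (a).} Here the two intertwining relations do the work: for $f\in S(\mathbb{R}^n)$ and multi-indices $\alpha,\beta$,
\[\widehat{\partial^{\alpha} f}(\xi)=(i\xi)^{\alpha}\,\hat f(\xi),\qquad \partial^{\beta}\hat f(\xi)=\widehat{(-ix)^{\beta}f}(\xi).\]
The first I would get by integrating by parts, the boundary terms vanishing because $f$ and its derivatives decay rapidly; the second by differentiating under the integral sign, which is legitimate since $x^{\beta}f\in L^1(\mathbb{R}^n)$ and the differentiated integrand is dominated by an $L^1$ function uniformly in $\xi$ (this also yields $\hat f\in C^\infty$). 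Combining the two, $\xi^{\alpha}\partial^{\beta}\hat f$ equals, up to a unimodular constant, the Fourier transform of $\partial^{\alpha}\bigl((-ix)^{\beta}f\bigr)$, which is again a Schwartz function and in particular lies in $L^1(\mathbb{R}^n)$; hence $\|\xi^{\alpha}\partial^{\beta}\hat f\|_{\infty}<\infty$ for all $\alpha,\beta$, which is precisely $\hat f\in S(\mathbb{R}^n)$. (The same bounds show $\mathcal{F}$ is continuous for the Schwartz topology, though that is not needed for bijectivity.)

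\textbf{Part (b).} By Part (a), $f\in S(\mathbb{R}^n)\subset L^1(\mathbb{R}^n)$ and $\hat f\in S(\mathbb{R}^n)\subset L^1(\mathbb{R}^n)$, so the preceding Proposition applies and gives that $\mathcal{F}^{-1}\mathcal{F}f$ agrees with $f$ almost everywhere; both sides are continuous ($f$ because it is Schwartz, $\mathcal{F}^{-1}\mathcal{F}f$ because it is the inverse Fourier transform of the $L^1$ function $\hat f$, hence lies in $C^0_b(\mathbb{R}^n)$ as recalled above), so they agree everywhere. If one insisted on a self-contained argument rather than quoting the $L^1$-inversion theorem, the genuine analytic content is here: insert a Gaussian cutoff $e^{-\varepsilon|\xi|^2/2}$ into $\mathcal{F}^{-1}\mathcal{F}f$, apply Fubini (now legitimate), evaluate the inner integral via the Gaussian Fourier transform $\int_{\mathbb{R}^n}e^{-\varepsilon|\xi|^2/2}e^{iz\cdot\xi}\,d\xi=(2\pi/\varepsilon)^{n/2}e^{-|z|^2/(2\varepsilon)}$ to recognise a convolution of $f$ with an approximate identity, and let $\varepsilon\to 0^{+}$, using dominated convergence on one side and continuity of $f$ on the other. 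I expect this $\varepsilon$-regularisation (and the careful justification of the two passages to the limit) to be the main obstacle in a from-scratch proof; with the preceding Proposition in hand it is already discharged.

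\textbf{Part (c).} Let $\mathcal{R}g(x):=g(-x)$. Substituting $\xi\mapsto-\xi$ in the defining integrals shows $\mathcal{F}^{-1}=\mathcal{R}\mathcal{F}=\mathcal{F}\mathcal{R}$ as operators; in particular $\mathcal{F}^{-1}$ maps $S(\mathbb{R}^n)$ into itself, since $\mathcal{R}$ visibly does and $\mathcal{F}$ does by Part (a). Part (b) says $\mathcal{F}^{-1}\mathcal{F}=\mathrm{id}$ on $S(\mathbb{R}^n)$, so $\mathcal{F}$ is injective; and $\mathcal{F}\mathcal{F}^{-1}=\mathcal{F}\mathcal{R}\mathcal{F}=\mathcal{F}^{-1}\mathcal{F}=\mathrm{id}$ using $\mathcal{F}\mathcal{R}=\mathcal{F}^{-1}$, which gives surjectivity and identifies the inverse. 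Hence $\mathcal{F}\colon S(\mathbb{R}^n)\to S(\mathbb{R}^n)$ is a bijection whose inverse is given by the stated formula.
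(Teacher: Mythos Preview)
Your proof is correct and complete. The paper, however, does not give a proof at all: its entire argument is the single line ``See for instance \cite{reedsimons2} and/or there references therein.'' So there is nothing to compare at the level of mathematical strategy --- you have supplied a full argument where the paper simply defers to the literature.

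Your approach is the standard one and is well organised. Part~(a) correctly uses the two intertwining relations to show $\hat f\in S(\mathbb{R}^n)$; Part~(b) cleanly leverages the preceding $L^1$-inversion proposition (which the paper also states) together with the observation $S(\mathbb{R}^n)\subset L^1(\mathbb{R}^n)$, and the continuity upgrade from a.e.\ to everywhere is handled properly; Part~(c) uses the reflection identity $\mathcal{F}^{-1}=\mathcal{R}\mathcal{F}=\mathcal{F}\mathcal{R}$ to get both injectivity and surjectivity without extra work. The optional Gaussian-regularisation sketch you include for a self-contained version of~(b) is also correct and is exactly the argument found in Reed--Simon, so in that sense your detailed route coincides with what the paper's cited reference does.
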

\end{framed}

\begin{proof}
See for instance \cite{reedsimons2} and/or there references therein.
\end{proof}

More properties, technical details, and applications of the Fourier transform can be found in \cite{reedsimons2}.

\begin{framed}
\begin{example}
Let $f, g \in L^1(\mathbb{R}^n, \mathbb{C})$ and denote by $\nabla = \sum_{j=1}^{n} \partial^2_j$ the Laplace operator on $\mathbb{R}^n$ (often called {\em vector Laplacian} in Physics) and consider the following partial differential equation:
	\begin{equation}
		-\nabla f + f = g \label{eqn:pdeexfourier}
	\end{equation}
	Since the Fourier transform turns differentiation into multiplication via $\widehat{\partial_j f} = - i \mathbf{\xi}_j \hat{f}$, applying the Fourier transform to both sides of Equation~\eqref{eqn:pdeexfourier} yields
	\begin{align*}
		\hat{g} & = - \widehat{\sum_{j=1}^{n} \partial_j^2 f } + \hat{f} = -\sum_{j=1}^{n} (i \xi_j )^2 \hat{f} + \hat{f}
		= 	\left(1 + \sum_{j=1}^n \xi_j^2 \right) \hat{f} 
	\end{align*}
	which can be solved for $\hat{f}(\xi)$ via
	\begin{equation}
		\hat{f}(\xi) = \frac{\hat{g}(\xi)}{\left(1 + \sum_{j=1}^n \xi_j^2 \right)}.
	\end{equation}
	Applying the inverse Fourier transform then yields the solution of the original Equation \eqref{eqn:pdeexfourier}:
	\begin{equation}
		f(\mathbf{x}) = \frac{1}{(2 \pi)^{n/2}} \int_{\mathbb{R}^n} \frac{\hat{g}(\mathbf{\xi}) \exp\left(i \mathbf{x} \cdot \mathbf{\xi} \right)}{\left(1 + \sum_{j=1}^n \xi_j^2 \right)} d \mathbf{\xi}.
	\end{equation}
\end{example}
\end{framed}

\begin{framed}
\begin{example}[Heat equation]
	\label{ex:heat}
	Let $g \in L^2(\mathbb{R})$ and consider the heat equation
	\begin{equation}
		\label{eqn:heateqn}
		\begin{cases}
			\partial_t f(t,x) = \partial_x^2 f(t,x), \\
			f(0,x) = g(x).
		\end{cases}
	\end{equation}
	 Note that requiring $g \in L^2(\mathbb{R})$ ensures that the Fourier transform and inverse Fourier transform are both bijective mappings from $L^2(\mathbb{R})$ to itself. To solve the heat equation we will follow the procedure in Figure~\ref{fig:heatfourier}: Using the inverse Fourier transform, we rewrite
	\begin{equation}
	\label{eqn:fourierheat}
		f(t,x) = \frac{1}{\sqrt{2 \pi}} \int_{-\infty}^{+\infty} \hat{f}(t,k) \, e^{i k x} dk
	\quad \mbox{and} \quad
		f(0,x) = g(x) = \frac{1}{\sqrt{2 \pi}} \int_{-\infty}^{+\infty} \hat{g}(k) e^{i k x} dk.
	\end{equation}
	Substituting this into the heat equation~\eqref{eqn:heateqn}, we obtain the new equation
	\begin{equation}
		\begin{cases}
			\partial_t \hat{f}(t,k) = - k^2 \hat{f}(t,k), \\
			\hat{f}(0,k) = \hat{g}(k)
		\end{cases}	
	\end{equation}
	which is an ordinary differential equation whose solution is given by
	\begin{equation}
		\hat{f}(t,k) = \hat{g}(k) e^{- k^2 t}. \label{eqn:heatfouriersolution}
	\end{equation}
	Putting this solution into the left formula in \eqref{eqn:fourierheat} gives
	\begin{equation}
		f(t,x) = \frac{1}{\sqrt{2 \pi}} \int_{-\infty}^{+\infty} \hat{g}(k)\, e^{- k^2 t + i k x} dk
	\end{equation}
	from which we obtain the solution for the original equation:
	\begin{equation}
		f(t,x) = \frac{1}{2 \pi} \int_{-\infty}^{\infty} \int_{-\infty}^{+\infty} g(y) e^{-k^2 t + i k (x-y)} dy dk.
	\end{equation}
\end{example}
\end{framed}

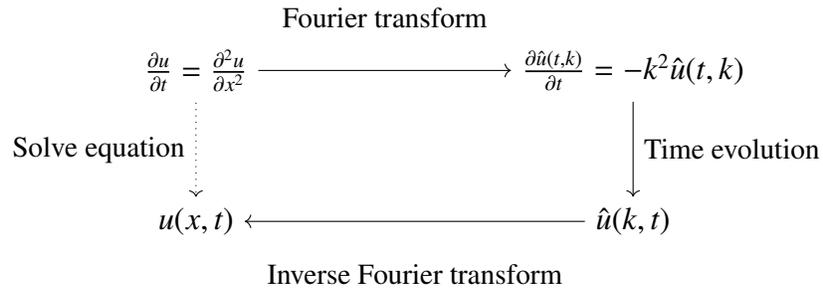
\begin{figure}
	\begin{tikzpicture}
		\matrix (m) [matrix of math nodes, row sep=3em, column sep=8em, nodes={anchor=center}]
		{ 
			\frac{\partial u}{\partial t} = \frac{\partial^2 u}{\partial x^2} & \frac{\partial \hat{u}(t,k)}{\partial t} = - k^2 \hat{u}(t,k) \\
			u(x,t) & \hat{u}(k,t) \\
		};
		
		\draw[<-] (m-2-1) -- (m-2-2) node[midway, below, font=\small,yshift=-1em] {Inverse Fourier transform};
		
		\draw[->] (m-1-2) -- (m-2-2)  node[midway, right, font=\small] {Time evolution};
		\draw[dotted][->] (m-1-1) -- (m-2-1) node[midway, left, font=\small] {Solve equation};
		
		\draw[->] (m-1-1) -- (m-1-2) node[midway, above, font=\small, yshift=1em] {Fourier transform};
		
	\end{tikzpicture}
	\caption{The procedure around the Fourier transform method.}
	\label{fig:heatfourier}
\end{figure}


\clearpage

\section{Motivation from physics}
Scattering is a well-studied framework in physics, as a significant portion of modern particle physics relies on the analogy between particle interactions and scattering phenomena. In this section, we will give an introductory overview of quantum mechanics and scattering theory to clarify some aspects of the physics behind the scattering scheme.

For simplicity, we will restrict our study to a one-dimensional domain $\mathbb{R}$. Note that we will also consider the reduced Planck constant equal to unity, i.e., $\hbar = 1$. For a more comprehensive treatment of quantum theory from a physical perspective, see~\cite{qmsakurai}, and for a mathematical approach, refer to \cite{qmhall}. We now describe the postulates of quantum mechanics in the Schr\"odinger representation.

Typically, the state of a quantum system at a fixed time is identified by a vector in a Hilbert space, which in our case is $L^2(\mathbb{R})$. However, when studying scattering phenomena, the set of possible functions is extended, allowing for non-normalizable functions as well.

An \emph{observable} refers to a measurable quantity of the physical system, represented by a self-adjoint operator acting on the Hilbert space. The possible outcomes of a measurement are the eigenvalues of the corresponding self-adjoint operator. When a measurement is made, the wave function collapses into the eigenfunction associated with the measurement outcome.

The time evolution of a quantum state is governed by the \emph{time-dependent Schr\"odinger equation}
\begin{equation}
	- i\partial_t \psi(t,x) = \hat{H} \psi(t,x)
\end{equation}
where $\hat{H}$ is the \emph{Hamiltonian operator} (representing the energy operator), often briefly called the Hamiltonian of the system, and $\psi(t,x)$ is the function representing the quantum state.

In general, the Hamiltonian operator of a one-dimensional system takes the form
\begin{equation}
	H = - \frac{\partial^2_x}{2 m} + V(x)
\end{equation}
where the first term represents the kinetic energy, i.e., the square of the momentum operator
$$	\hat{p} = -i \partial_x,$$
and the second term represents the potential energy $V(x)$. The Hamiltonian operator acts on functions $\psi$ via $H \psi = -\del^2_x \psi + V(x) \psi$.

The goal from a physical perspective is to study an obstacle, represented mathematically by a potential $V(x)$, through the asymptotic behavior of a test wave (or particle beam). This is done by sending a beam from $-\infty$, which collides with the potential creating a reflected beam, and a transmitted beam which is the part of the beam that overcomes the obstacle.

The setup thus results in three beams:
\begin{itemize}
	\item To the left of the obstacle, as $x \to -\infty$, we have the incident beam, represented by the wave function $\psi_{inc} \sim \exp \{ + i k x\}$, and a reflected beam, with the wave function $\psi_{ref} \sim \exp\{-i k x\}$.
	\item To the right of the obstacle, as $x \to +\infty$, we have the transmitted beam $\psi_{tra} \sim \exp\{- i k x\}$.
\end{itemize}

\begin{framed}
\begin{remark}
	The beams are mathematically described by $\psi_{inc} \sim \exp\{+ i k x\}$, moving from left to right. In the time-dependent version
	\begin{equation}
		\tilde{\psi}_{inc} \sim \exp\{- \omega t + i k x\},
	\end{equation}
	the wave moves from left to right. At any fixed time, these wave functions are not normalizable, i.e., they do not belong to $L^2(\mathbb{R})$. For this reason, we call them {\em free states}, and they always have a positive energy value.
	In contrast, states in $L^2(\mathbb{R})$ are called {\em bound (or localized) states}, as any $L^2(\mathbb{R})$ function can be approximated by a compactly supported function. These functions are often associated with a negative energy value.
\end{remark}
\end{framed}

The potentials studied are of two types:
\begin{itemize}
	\item Rapidly decaying potential:
	\begin{equation}
		V(x) \underset{\abs{x}\to\infty}{\longrightarrow} 0
	\end{equation}
	\item Finite-range potential, i.e., a potential that is non-zero only within a compact interval:
	\begin{equation}
		V(x) = \begin{cases}
			\tilde{V}(x), \qquad x \in\ ]a,b[, \\
			0, \quad \  \qquad x \notin\ ]a,b[.
		\end{cases}
	\end{equation}
\end{itemize}

The choice of these types of potential is not arbitrary: the first type represents long-range, non-localized interactions such as the electromagnetic potential, while the second type can represent either a particle itself or a finite-range potential, such as that created by the strong force.

\clearpage


\section{Scattering procedure}
In this section, we will derive the set of scattering data by approaching the Korteweg–de Vries (KdV) equation as a scattering problem. We will examine the $L$-operator associated with the KdV equation and show that its asymptotic analysis corresponds to the previously illustrated physical setting.

Let us begin by examining the eigenvalue problem associated with the operator $L$ in a time-independent context, as we aim to construct the scattering data at $t=0$. This eigenvalue problem corresponds to the Schr\"odinger-type eigenvalue equation
\begin{equation}
	L v = \left(\partial^2_x + u(x)\right) v = k^2 v
\end{equation}
where $k^2 \in \mathbb{R}$ is an eigenvalue.
We study this eigenvalue problem w.r.t.\ the two limits $x \to -\infty$ and $x \to +\infty$ and under the Faddeev condition \cite{kdvfaddev} $u \in P_1$ where
\begin{equation*}
	 P_1 : = \left\{ g\in C^2(\mathbb{R})  \ \left| \ \int_{\mathbb{R}} (1+ \lvert x \rvert) \lvert g(x)\rvert dx < +\infty \right. \right\}.
\end{equation*}
This implies $u \to 0$ as $\lvert x \rvert \to +\infty$. More details over this setting can be found for instance in \cite{ablowitzClarkson}.
Under these conditions, the Schr\"odinger problem simplifies to
\begin{equation}
	\partial_x^2 v = k^2 v
\end{equation}
for both limits $x \to -\infty$ and $x \to +\infty$. Thus, we identify two sets of eigenfunctions (indexed by the eigenvalue in question):
\begin{equation}
	\begin{cases}
		\{\phi_k(x) , \overline{\phi}_k(x)\} \qquad \text{for} \quad x \to +\infty, \\
		\{\psi_k(x), \overline{\psi}_k(x)\} \qquad \text{for} \quad x \to -\infty
	\end{cases}
\end{equation}
with the following asymptotic behaviors:
\begin{equation}
\label{eqn:asymptotic}
	\begin{matrix}
		\phi_k(x) \underset{x \to +\infty}{\sim} e^{-i k x}, & && \overline{\phi}_k(x) \underset{x \to +\infty}{\sim} e^{i k x}, \\
		\psi_k(x) \underset{x \to -\infty}{\sim} e^{i k x}, & & & \overline{\psi}_k(x) \underset{x \to -\infty}{\sim} e^{-i k x}.
	\end{matrix}
\end{equation}
This implies in particular
\begin{equation}
\label{eqn:invphipsi}
	\begin{cases}
		\phi_k(x) = \overline{\phi}_{-k}(x), \\
		\psi_k(x) = \overline{\psi}_{-k}(x).
	\end{cases}
\end{equation}
Furthermore, as both sets satisfy the same differential equation, we can express them as a linear combination
\begin{equation}
	\begin{cases}
		\phi_k(x) = A(k) \psi_k(x) + B(k) \overline{\psi}_k(x), \\
		\overline{\phi}_k(x) = C(k) \psi_k(x) + D(k) \overline{\psi}_k(x)
	\end{cases}
\end{equation}
and, by applying Equation~\eqref{eqn:invphipsi}, we obtain
\begin{align}
	\phi_k(x) &= \overline{\phi}_{-k}(x), \\
	A(k) \psi_k(x) + B(k) \overline{\psi}_k(x) &= C(-k) \psi_{-k}(x) + D(-k) \overline{\psi}_{-k}(x), \\
	A(k) \psi_k(x) + B(k) \overline{\psi}_k(x) &= C(-k) \overline{\psi}_{k}(x) + D(-k) \psi_{k}(x).
\end{align}
This leads us to the conditions
\begin{equation}
	\begin{cases}
		B(k) = C(-k), \\
		A(k) = D(-k)
	\end{cases}
\end{equation}
which lead to
\begin{equation}
	\label{eqn:equationsysincref}
	\begin{cases}
		\phi_k(x) = A(k) \psi_k(x) + B(k) \overline{\psi}_k(x), \\
		\overline{\phi}_k(x) = B(-k) \psi_k(x) + A(-k) \overline{\psi}_k(x).
	\end{cases}
\end{equation}
To better characterize the coefficients $A(k)$ and $B(k)$, we introduce

\begin{framed}
\begin{definition}
	Let $f,g \in C^1(\mathbb{R},\mathbb{C})$. Then the {\em Wronskian} is the skew-symmetric, bilinear operator
	\begin{equation}
		W(f,g) := f \partial_x g - g \partial_x f.
	\end{equation}
	More generally, the Wronskian for $n$ functions $f_1, \ldots, f_n \in C^{n-1}(\mathbb{R},\mathbb{C})$ is defined as
	\begin{equation}
		W(f_1, \ldots, f_n) := \det \begin{pmatrix}
			f_1 & f_2 & \cdots & f_n \\
			\partial_x f_1 & \partial_x f_2 &  \cdots & \partial_x f_n \\
			\partial^2_x f_1 & \partial_x^2 f_2 & \cdots & \partial_x^2 f_n\\
			\vdots & \vdots  & \ddots & \vdots \\
			\partial_x^{n-1} f_1 & \partial_x^{n-1} f_2 & \cdots & \partial_x^{n-1} f_n
		\end{pmatrix}.
	\end{equation}
\end{definition}
\end{framed}


\begin{framed}
\begin{lemma}
	\label{lemma:Wronskian}
	Let $f,g \in C^2(\mathbb{R})$ satisfy the same Schrödinger equation. Then
	\begin{equation}
		\partial_x W(f,g) = f \partial_x^2 g - g \partial_x^2 f =0.
	\end{equation}
\end{lemma}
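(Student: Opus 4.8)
The plan is to verify the identity by a direct differentiation and then invoke the hypothesis. First I would expand $\partial_x W(f,g)$ using the product rule: since $W(f,g) = f\,\partial_x g - g\,\partial_x f$, one gets
$$
\partial_x W(f,g) = \bigl(\partial_x f\bigr)\bigl(\partial_x g\bigr) + f\,\partial_x^2 g - \bigl(\partial_x g\bigr)\bigl(\partial_x f\bigr) - g\,\partial_x^2 f = f\,\partial_x^2 g - g\,\partial_x^2 f,
$$
which already establishes the first claimed equality and uses nothing beyond $f,g \in C^2$.

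For the second equality, I would use that $f$ and $g$ solve the \emph{same} Schrödinger equation, meaning both satisfy $Lv = k^2 v$ with the \emph{same} eigenvalue $k^2$, i.e.\ $\partial_x^2 v + u v = k^2 v$, or equivalently $\partial_x^2 v = (k^2 - u)v$. Substituting $\partial_x^2 f = (k^2 - u)f$ and $\partial_x^2 g = (k^2 - u)g$ into $f\,\partial_x^2 g - g\,\partial_x^2 f$ gives
$$
f\,(k^2 - u)\,g - g\,(k^2 - u)\,f = 0,
$$
since multiplication of scalar functions is commutative. Hence $\partial_x W(f,g) = 0$, as desired.

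There is no real obstacle here: the statement is a one-line computation, and the only point worth emphasizing in the writeup is that ``the same Schrödinger equation'' is used with its full force — it is the equality of the eigenvalues (not merely the operator $L$) that makes the common factor $(k^2 - u)$ cancel. I would also remark, as an immediate consequence to be used later, that $W(f,g)$ is therefore a constant in $x$ whenever $f,g$ are eigenfunctions of $L$ for a common eigenvalue; this is the form in which the lemma will actually be applied to the scattering solutions $\phi_k,\overline{\phi}_k,\psi_k,\overline{\psi}_k$.
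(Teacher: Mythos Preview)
Your proof is correct and essentially identical to the paper's: both differentiate $W(f,g)$ directly, cancel the first-order cross terms, and then substitute $\partial_x^2 v = (\text{eigenvalue} - u)v$ to see the remaining expression vanish. Your added emphasis that the \emph{common eigenvalue} is what makes the cancellation work, and your remark that the constancy of $W$ is what gets used for the scattering solutions, are both well placed.
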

\end{framed}

\begin{proof}
	Let $e\in\mathbb{R}$ and consider a Schrödinger equation with a potential $u(x) \in P_1$,
	\begin{equation}
	\label{eqn:generalSchroedinger}
		-\partial_x^2 f + u(x) f = e f.
	\end{equation}
	Let $f,g$ be two different solutions of \eqref{eqn:generalSchroedinger}. Then computing the Wronskian yields
	\begin{align*}
		\partial_x W(f,g) &= \partial_x \left(f \partial g - g \partial f\right) = \left(f \partial_x^2 g - g \partial_x^2 f\right)= f \left( u(x) -e\right) g - g \left( u(x) -e\right) f = 0.
	\end{align*}
\end{proof}

Since the Wronskian of $\phi_k$ and $\phi_{-k}$ is constant in $x$ by Lemma~\ref{lemma:Wronskian}, we can compute it by considering the asymptotic behavior of $\phi_k$ and $\phi_{-k}$ as $x\to+\infty$ and get
\begin{align}
	W(\phi_k(x),\overline{\phi}_k(x)) = 2 i k.
\end{align}
The same holds for $W(\psi_k(x), \overline{\psi}_k(x))$.
If we compute the Wronskians using relations~\eqref{eqn:equationsysincref} we obtain
\begin{align}
	W(\phi_k, \overline{\phi}_k(x)) &= W(A(k) \psi_k(x) + B(k) \overline{\psi}_k(x), B(-k) \psi_k(x) + A(-k) \overline{\psi}_k(x))\\
	&=A(k) A(-k) W(\psi_{k}(x), \overline{\psi}_k(x)) + B(k) B(-k) W(\overline{\psi}_k(x),{\psi}_k(x)) \\
	&=\left(A(k) A(-k) -B(k) B(-k) \right) W(\psi_{k}(x), \overline{\psi}_k(x))
\end{align}
which leads to
\begin{equation}
 \label{eqn:coefftransreflc}
	A(k) A(-k) - B(k) B(-k) = 1.
\end{equation}
Moreover, one can express $A(k)$ and $B(k)$ by means of the Wronskian since
\begin{align*}
	W(\phi_k(x),\psi_k(x)) &= B(k) W(\overline{\psi}_k(x), \psi_k(x)) = - 2 i k B(k)
\end{align*}
leads to
$$
B(k) =  - \frac{W(\phi_k(x),\psi_k(x))}{2 i k}
$$
and
\begin{align*}
	W(\phi_k(x),\overline{\psi}_k(x)) &= A(k) W(\psi_k(x), \overline{\psi}_k(x)) = 2 i k A(k)
\end{align*}
leads to
$$
A(k) =  \frac{W(\phi_k(x),\overline{\psi}_k(x))}{2 i k}.
$$
To simplify the notation in the following computations we set
\begin{align*}
	M_k(x) := \phi_k(x) \exp(i k x), && N_k(x) := \psi_k(x) \exp(i k x), \\
	\overline{M}_k(x) := \overline{\phi}_k(x) \exp(i k x), && 	\overline{N}_k(x) := \overline{\psi}_k(x) \exp(i k x)
\end{align*}
and obtain the asymptotic expansions
\begin{equation}
	\label{eqn:asymptMN}
	\begin{aligned}
		& M_k(x) \underset{x \to +\infty}{\sim} 1 ,\qquad \quad  && N_k(x) \underset{x \to -\infty}{\sim} 1, \\
		& \overline{M}_k(x) \underset{x \to +\infty}{\sim} e^{2 i k x}, \qquad \quad  && \overline{N}_k(x) \underset{x \to -\infty}{\sim} e^{2 i k x}
	\end{aligned}
\end{equation}
with the relations
\begin{equation}
	\begin{dcases}
		N_k(x) = \overline{N}_{-k}(x) e^{2 i k x}, \label{eqnNforNbar}\\
		M_k(x) = \overline{M}_{-k}(x) e^{ 2 i k x}.
	\end{dcases}
\end{equation}
Important coefficients are the ratios between $B(k)$ and $A(k)$ and their complex conjugates. We define the \emph{transmission coefficient} $\tau$ and the \emph{reflection coefficients} $\rho$ by
\begin{equation}
	\tau(k) := \frac{1}{B(k)} \qquad \mbox{and} \qquad  \rho(k):=\frac{A(k)}{B(k)}. \label{eqn:defrefandtrans}
\end{equation}
Their complex conjugates are given by
\begin{equation}
	\overline{\tau}(k) = \frac{1}{\overline{B}(k)}\qquad \mbox{and} \qquad \overline{\rho}(k) = \frac{A(-k)}{B(-k)}.
\end{equation}
Moreover, using equation~\eqref{eqn:coefftransreflc}, we obtain $ 1 - \lvert\rho(k)\rvert^2 = \lvert \tau(k) \rvert^2 $ and thus
\begin{equation}
	\lvert\rho(k)\rvert^2 + \lvert \tau(k) \rvert^2 = 1.
\end{equation}
Equations~\eqref{eqn:equationsysincref}, the definition of the transmission coefficients, and recalling Equation~\eqref{eqnNforNbar} together yield
\begin{equation}
	\frac{M_k(x)}{B(k)} = \overline{N}_k(x) + \rho(k) e^{2 i k x} \overline{N}_{-k}(x). \label{eqn:moverA}
\end{equation}
The following statement can be found for instance in \cite{ablowitzClarkson}.

\begin{framed}
\begin{proposition}
	\label{lemma:221}
	Assume the above described setting. Then
	\begin{enumerate}
		\item $M_k(x)$ and $B(k)$ can be expanded analytically in the half-plane $\Im(k)>0$, and
		\begin{equation}
			\begin{dcases}
				\lim_{\lvert k\rvert \to\infty} M_k(x) \to 1,
				\lim_{\lvert k\rvert \to\infty} a(x) \to 1.
			\end{dcases}
		\end{equation}
		\item $N_k(x)$ can be extended to the half-plane $\Im(k)<0$, and
		\begin{equation}
			\lim_{\lvert k \rvert \to \infty} N_k(x) \to 1
		\end{equation}
	\end{enumerate}
\end{proposition}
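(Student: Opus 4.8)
The plan is to recast the eigenvalue problem $Lv=k^{2}v$ as a Volterra integral equation, solve it by successive approximations, and then read off both the analytic continuation and the behaviour as $|k|\to\infty$ directly from the structure of that equation; the coefficient $B(k)$ is treated afterwards as a Wronskian of Jost solutions. First I would fix $\phi_{k}$ by its prescribed asymptotics at $+\infty$ and, using the Green's function of $\partial_{x}^{2}+k^{2}$ matched to that boundary datum together with the substitution $M_{k}(x)=\phi_{k}(x)e^{ikx}$ of the excerpt, rewrite the ODE as a Volterra equation of the second kind
$$ M_{k}(x) = 1 + \int_{x}^{\infty} D_{k}(x,y)\,u(y)\,M_{k}(y)\,dy, \qquad D_{k}(x,y) = \frac{e^{2ik(y-x)}-1}{2ik} $$
(up to the sign conventions of the excerpt). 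The two features of the kernel I would isolate are that $k\mapsto D_{k}(x,y)$ is \emph{entire} for each fixed pair $x\le y$, and that on the closed half-plane of the statement one has, for $y\ge x$, the uniform bound $|D_{k}(x,y)|\le C\,\min\{\,y-x,\,1/|k|\,\}$. This is precisely where the Faddeev hypothesis $u\in P_{1}$ is needed rather than merely $u\in L^{1}(\mathbb{R})$: the kernel degenerates to size $\sim(y-x)$ as $k\to 0$, so the weight $(1+|x|)$ is what keeps the estimates uniform down to the real axis.

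Next I would solve the equation by Picard iteration, $M_{k}^{(0)}=1$ and $M_{k}^{(n+1)}(x)=1+\int_{x}^{\infty}D_{k}(x,y)u(y)M_{k}^{(n)}(y)\,dy$. Setting $\sigma_{u}(x):=C\int_{x}^{\infty}(1+|y|)|u(y)|\,dy<\infty$, an induction using the kernel bound gives $|M_{k}^{(n+1)}(x)-M_{k}^{(n)}(x)|\le \sigma_{u}(x)^{\,n+1}/(n+1)!$, so the Neumann series converges absolutely, uniformly for $x\ge x_{0}$ and locally uniformly in $k$ in the open half-plane, and it provides an a priori bound $\sup_{x}|M_{k}(x)|\le C'$ which is moreover uniform for $|k|$ large. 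Each iterate $M_{k}^{(n)}(x)$ is holomorphic in the open half-plane, being a finite sum of iterated integrals of products of entire functions of $k$ against an integrable weight — holomorphy is checked by Morera together with Fubini, or by differentiating under the integral sign, the domination being supplied by the kernel bound — so by the Weierstrass convergence theorem the uniform limit $M_{k}(x)$ is holomorphic there as well.

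For the large-$|k|$ behaviour I would return to the integral equation directly: $|M_{k}(x)-1|\le \sup_{y}|M_{k}(y)|\int_{x}^{\infty}|D_{k}(x,y)|\,|u(y)|\,dy\le \frac{C''}{|k|}\int_{\mathbb{R}}|u|$, which tends to $0$ uniformly in $x$ as $|k|\to\infty$ in the half-plane. For $B(k)$ I would invoke the Wronskian formula for $B$ established earlier in the excerpt, rewritten so that both of its factors are the Jost solutions analytic in the \emph{same} half-plane — namely $\phi_{k}$, normalized at $+\infty$, and the Jost solution at $-\infty$ whose normalization also tends to $1$, obtained by running the entire construction above with the boundary condition imposed at $x=-\infty$. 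Since the Wronskian is independent of $x$, evaluating it on a range of $x$ where both normalized solutions are controlled exhibits $B(k)$ as built from boundary values of bounded holomorphic functions, hence holomorphic in the half-plane; letting $|k|\to\infty$ with both normalizations tending to $1$ then gives $B(k)\to 1$. Part~(2) for $N_{k}$ is the mirror image: integrate the equation for $\psi_{k}$ towards $-\infty$, or invoke the reflection $k\mapsto -k$, which swaps the two half-planes and, through $\psi_{k}=\overline{\psi}_{-k}$, swaps the two Jost solutions.

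The step I expect to be the main obstacle is establishing the kernel estimate and wielding it correctly in the Picard scheme: one must see exactly why ``$D_{k}=O(\min(y-x,\,1/|k|))$'' together with $u\in P_{1}$ yields a series whose bounds remain uniform in $k$ up to and including the real axis (and in particular at $k=0$, where the naive estimate fails). Once that is secured, the analyticity of the limit (Morera/Weierstrass), the $|k|\to\infty$ asymptotics, and the Wronskian bookkeeping for $B(k)$ are all routine.
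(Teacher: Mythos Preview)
Your approach is correct and, for the Jost solutions $M_{k}$ and $N_{k}$, essentially coincides with the paper's: both reduce the eigenvalue problem to Volterra integral equations with the kernel $(1-e^{2ik(x-y)})/(2ik)$ and read off analyticity and the large-$|k|$ limit from there. Your treatment is in fact more careful than the paper's, which simply asserts analyticity once the Volterra equations are written down, without spelling out the Neumann-series bound, the role of the $P_{1}$ weight, or the Morera/Weierstrass argument you outline.

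The one genuine methodological difference is in how $B(k)$ is handled. You propose to use the Wronskian representation $B(k)=W(\phi_{k},\overline{\psi}_{k})/(2ik)$ and argue that both factors can be chosen analytic in the same half-plane. The paper instead substitutes the Volterra representations for $M_{k}$ and $\overline{N}_{k}$ into the relation $M_{k}-B(k)\overline{N}_{k}=A(k)e^{2ikx}\overline{N}_{-k}$ and, after an algebraic manipulation, extracts the explicit formula
\[
B(k)=1+\frac{1}{2ik}\int_{-\infty}^{\infty}u(y)\,M_{k}(y)\,dy,
\]
from which both the analytic continuation and the limit $B(k)\to 1$ are immediate. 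Your Wronskian route is the standard textbook one and perfectly valid; the paper's integral formula has the practical advantage that it is reused later in the inverse-scattering section, and it makes the limit $B(k)\to 1$ a one-line consequence of $M_{k}\to 1$ and $u\in L^{1}$ without having to control a second Jost solution.

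One caution: the paper's conventions for which Jost solution is normalised at which end are somewhat nonstandard (and arguably internally inconsistent in places), so when you write ``both factors analytic in the same half-plane'' you should be explicit about which pair you mean --- in the paper's notation it is $\phi_{k}$ together with $\overline{\psi}_{k}$, not $\psi_{k}$ --- otherwise the Wronskian argument does not go through as stated.
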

\end{framed}

\begin{proof}
	Consider
	\begin{equation*}
		\partial_x^2 v(x) + u(x) v(x) = k^2 v(x)
	\end{equation*}
	and perform the transformation $v_k(x) = m_k(x) \exp(- i k x)$ and obtain
	\begin{equation}
	\label{eqn:Shro1lemma}
		\partial_x^2 m_k(x) - 2 i k \partial_x m_k(x) = - u(x) m_k(x).
	\end{equation}
	To solve equation~\eqref{eqn:Shro1lemma}, we will study the Green function associated to this differential equation, i.e., the function $G_k(x)$ which solves
	\begin{equation*}
		\partial_x^2 G_k(x) - 2 i k  \partial_x G_k(x) = - \delta(x)
	\end{equation*}
	where $\delta(x)$ is the Dirac delta function.
	This allows to write the solution of Equation~\eqref{eqn:Shro1lemma} as
	\begin{equation}
		m_k(x) = 1 +\int_{-\infty}^{+\infty} G_k(x-\xi) u(\xi) m_k(\xi) d\xi
	\end{equation}
	if for $\vert x \rvert \to +\infty$ the function $m(x) \to 1$.
	By means of the Fourier transform we obtain
	\begin{align*}
		\partial_x^2 \int_{-\infty}^{+\infty} e^{i l x} \hat{G}_k(l) d l - 2 i k \partial_x \int_{-\infty}^{+\infty} e^{i l x} \hat{G}_k(l) d l + \int_{-\infty}^{+\infty} e^{i l x} d l
		\int_{-\infty}^{+\infty} \left((i l)^2 \hat{G}_k(l) - 2 i k (i l) \hat{G}_k(l) + 1\right) e^{i l x} d l = 0
	\end{align*}
	which leads to
	\begin{equation}
		\hat{G}_k(l) = \frac{1}{-2 l k + l^2}.
	\end{equation}
	By performing the inverse Fourier transform we get
	\begin{equation}
		\frac{1}{\sqrt{2 \pi}} \int_\gamma \frac{e^{i l x}}{-2 l k + l^2} dl.
	\end{equation}
	For this inverse Fourier transform we integrate over a closed path $\gamma$, as one in Figure~\ref{fig:polesinte} and closed by a half circle.
	Since the poles of $\hat{G}_k{l}$ are $l=0$ and $l=2k$, and both lie on the real line. The choice of the path gamma is not unique since there are two possible contours $\gamma_+$ and $\gamma_{-}$ as sketched in Figure~\ref{fig:polesinte}.
	
	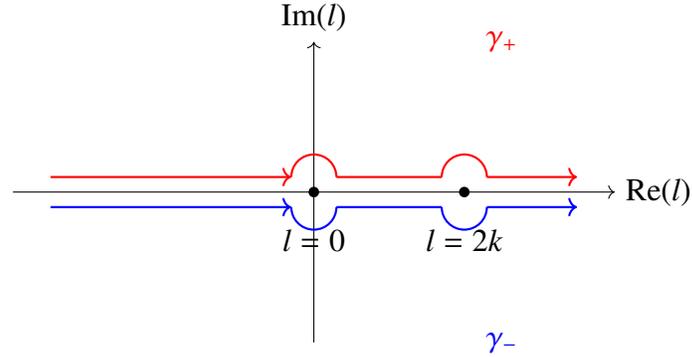
\begin{figure}
		\begin{tikzpicture}
			\draw[->] (-4, 0) -- (4, 0) node[anchor=west] {Re$(l)$};
			
			\draw[->] (0, -2) -- (0, 2) node[anchor=south] {Im$(l)$};
			
			\fill (0, 0) circle (2pt) node[below=10pt] {$l=0$}; 
			\fill (2, 0) circle (2pt) node[below=10pt] {$l=2k$}; 
			
			\draw[thick, red, ->] (-3.5, 0.2) -- (-0.3, 0.2);
			\draw[thick, red] (0.3, 0.2) -- (1.7, 0.2);
			\draw[thick, red, ->] (2.3, 0.2) -- (3.5, 0.2);
			
			\draw[thick, red] (0.3, 0.2) arc[start angle=0, end angle=180, radius=0.3];
			\draw[thick, red] (2.3, 0.2) arc[start angle=0, end angle=180, radius=0.3];
			
			\draw[thick, blue, ->] (-3.5, -0.2) -- (-0.3, -0.2);
			\draw[thick, blue] (0.3, -0.2) -- (1.7, -0.2);
			\draw[thick, blue, ->] (2.3, -0.2) -- (3.5, -0.2);
			
			\draw[thick, blue] (0.3, -0.2) arc[start angle=0, end angle=-180, radius=0.3];
			\draw[thick, blue] (2.3, -0.2) arc[start angle=0, end angle=-180, radius=0.3];
			
			\node at (2.5, 2) {\textcolor{red}{$\gamma_{+}$}};
			\node at (2.5, -2) {\textcolor{blue}{$\gamma_{-}$}};
		\end{tikzpicture}
		\label{fig:polesinte}
		\caption{Integration paths: $\gamma_{+}$ and $\gamma_{-}$}
	\end{figure}
	The two paths leads to two different Green functions since the paths of integration in order to obtain the Green functions depends on the situation and there is some freedom in the choice (in the literature, this is often referred to as \emph{prescriptions}; in physics literature the Green functions are also called {\em propagators}). Let $\Theta(x)$ denote the Heaviside function. Then
	\begin{align}
		\gamma_+ & \qquad \mbox{gives rise to} \qquad  G^{+}_k(x) = \frac{1}{2 i k}  \left( 1 - e^{2 i k x}\right) \Theta(x), \\
		\gamma_- &\qquad \mbox{gives rise to} \qquad  G^{-}_k(x) = \frac{-1}{2 i k}  \left( 1 - e^{2 i k x}\right) \Theta(-x).
	\end{align}
	Note that both Green functions vanishes for $\lvert k \rvert \to +\infty$.
	
	Now we use these two Green functions to obtain two integral expressions (Volterra integral equations), one for $M_k(x)$ and one for $N_k(x)$:
	\begin{equation}
	\label{eqn:stepinvscat1}
	\begin{aligned}
		M_k(x) &= 1 + \frac{1}{2 i k} \int_{-\infty}^x \left(1 - e^{2 i k (x-y)}\right) u(y) M_k(y) dy, \\
		\overline{N}_k(x) &= 1 - \frac{1}{2 i k} \int_{x}^{+\infty} \left(1 - e^{2 i k (x-y)}\right) u(y) \overline{N}_k(y) dy.
	\end{aligned}
	\end{equation}
	Since the solution of these integral equations is unique and both the Green functions are analytic, we can conclude that $M_k(x)$ is analytic for $
	\Im(k)>0$ and $\overline{N}_k(x)$ is analytic for $\Im(k)<0$.
	
	To study the analytic property of $B(k)$ we recall that
	\begin{equation}
		\frac{M_k(x)}{B(k)} = \overline{N}_k(x) + \rho(k) e^{2 i k x} \overline{N}_{-k}(x)
	\end{equation}
	that can be transformed into
	\begin{equation}
	\label{eqn:transformationmkx}
		M_k(x) - B(k) \overline{N}_k(x) = A(k) e^{2 i k x} \overline{N}_{-k}(x) .
	\end{equation}
	By using the integral equations~\eqref{eqn:stepinvscat1}, we obtain for the left side of Equation \eqref{eqn:transformationmkx}:
	\begin{align*}
		& M_k(x) - B(k) \overline{N}_k(x) \\
		& =   \left(1 + \frac{1}{2 i k} \int_{-\infty}^x \left(1 - e^{2 i k (x-y)}\right) u(y) M_k(y) dy\right) - B(k)\left(1 - \frac{1}{2 i k} \int_{x}^{+\infty} \left(1 - e^{2 i k (x-y)}\right) u(y) \overline{N}_k(y) dy \right) \\
		 & = 1 - B(k) + \frac{1}{2i k} \left(\int_{-\infty}^x \left(1 - e^{2 i k (x-y)}\right) u(y) M_k(y) dy \right)
		 + \frac{B(k)}{2 i k} \left(\int_{x}^{+\infty} \left(1 - e^{2 i k (x-y)}\right) u(y) \overline{N}_k(y) dy \right)  \\
		 & = 1 - B(k) + \frac{1}{2i k} \left(\int_{-\infty}^{+\infty} \left(1 - e^{2 i k (x-y)}\right) u(y) M_k(y) dy \right)  - \frac{1}{2i k} \left(\int_{x}^{+\infty} \left(1 - e^{2 i k (x-y)}\right) u(y) M_k(y) dy \right) \\
		 & \quad  + \frac{B(k)}{2 i k} \left(\int_{x}^{+\infty} \left(1 - e^{2 i k (x-y)}\right) u(y) \overline{N}_k(y) dy \right)  \\
		 & = 1 - B(k) + \frac{1}{2i k} \left(\int_{-\infty}^{+\infty} \left(1 - e^{2 i k (x-y)}\right) u(y) M_k(y) dy \right) \\
		 & \quad + \frac{1}{2 i k} \left(\int_{x}^{+\infty} \left(1 - e^{2 i k (x-y)}\right) u(y) \left( B(k) \overline{N}_k(y) - M_k(y) \right) dy \right)\\
		 & = 1 - B(k) + \frac{1}{2i k} \left(\int_{-\infty}^{+\infty} \left(1 - e^{2 i k (x-y)}\right) u(y) M_k(y) dy \right) \\
		 & \quad  - \frac{1}{2 i k} \left(\int_{x}^{+\infty} \left(1 - e^{2 i k (x-y)}\right) u(y) \left( M_k(y) - B(k) \overline{N}_k(y)  \right) dy \right).
	\end{align*}
	Now consider the right-hand side of Equation~\eqref{eqn:transformationmkx}:
	\begin{align*}
		A(k) e^{2 i k x} \overline{N}_{-k}(x) & =  A(k) e^{2 i k x} \left(1 - \frac{1}{2 i k} \int_{x}^{+\infty} \left(1 - e^{2 i k (x-y)}\right) u(y) \overline{N}_{-k}(y) dy \right) \\
		& =  A(k) e^{2 i k x} - \frac{1}{2 i k} \left(\int_{x}^{+\infty} \left(1 - e^{-2 i k (x-y)}\right) u(y) A(k) e^{2 i k x} \overline{N}_{-k}(y) dy \right).
	\end{align*}
	By putting all together, and using~\eqref{eqn:transformationmkx} again we obtain
	\begin{align}
		1 - B(k) + \frac{1}{2i k} \left(\int_{-\infty}^{+\infty} \left(1 - e^{2 i k (x-y)}\right) u(y) M_k(y) dy \right) = A(k) e^{2 i k x}.
	\end{align}
	By comparing the coefficients we eventually obtain
	\begin{equation}
		\begin{dcases}
			B(k) = 1 + \frac{1}{2 i k} \int_{-\infty}^{\infty} u(y) M_k(y) dy , \\
			A(k) = - \frac{1}{2 i k} \int_{-\infty}^{\infty} u(y) M_k(y) e^{- 2 i k y} dy.
		\end{dcases} \label{eqn:recoverAB}
	\end{equation}
	For the integral representations we get the analytic property of $B(k)$ for $\Im(k) > 0$, the limit for the same conditions:
	\begin{equation}
		\lim_{\lvert k \rvert \to +\infty} B(k) \to 1.
	\end{equation}
	\end{proof}
	
	Now we show

	\begin{framed}
	\begin{proposition}
		\label{prop:secondprop}
		The function $B(k)$ can have only a finite number of simple zeros located on the upper part of the imaginary axis.
	\end{proposition}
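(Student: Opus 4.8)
\emph{Overall approach.} I would translate the statement into one about the discrete spectrum of the self-adjoint Schr\"odinger-type operator $L=\partial_x^2+u$: a zero of $B$ in the open upper half-plane is the same as a pole of the transmission coefficient $\tau=1/B$, which is the same as an $L^2$-eigenvalue of $L$. The ingredients are then (i) the analyticity and the limit $B(k)\to 1$ as $\abs{k}\to\infty$ in $\{\Im(k)\ge 0\}$ from Proposition~\ref{lemma:221} (equivalently the representation~\eqref{eqn:recoverAB}), (ii) self-adjointness of $L$, (iii) uniqueness for the underlying second order ODE, and (iv) the weighted decay $u\in P_1$ of the Faddeev condition. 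As a first reduction I would note that $B$ cannot vanish on $\R\setminus\{0\}$: from~\eqref{eqn:defrefandtrans} one has $\abs{B(k)}=1/\abs{\tau(k)}$, and $\abs{\rho(k)}^2+\abs{\tau(k)}^2=1$ gives $\abs{\tau(k)}\le 1$, hence $\abs{B(k)}\ge 1$ there.

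\emph{Why the zeros lie on the imaginary axis and are simple.} Suppose $B(k_0)=0$ with $\Im(k_0)>0$. By~\eqref{eqn:equationsysincref} the Jost solutions attached to $k_0$ become linearly dependent, so there is a non-trivial solution $v_{k_0}$ of $Lv=k_0^2v$ which, by the exponential asymptotics~\eqref{eqn:asymptotic}, decays exponentially as $x\to+\infty$ and as $x\to-\infty$; hence $v_{k_0}\in L^2(\R)$ and $k_0^2$ is an eigenvalue of $L$. Self-adjointness forces $k_0^2\in\R$, and a number $k_0$ with $\Im(k_0)>0$ and $k_0^2\in\R$ must be of the form $k_0=i\kappa$ with $\kappa>0$. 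For simplicity I would differentiate $\phi_k''+(u-k^2)\phi_k=0$ in $k$ and integrate the Lagrange identity $\partial_x W(\phi_k,\partial_k\phi_k)=2k\,\phi_k^2$ over $\R$; using the asymptotics this produces a relation $B'(k_0)=c(k_0)\int_{\R}\phi_{k_0}(x)^2\,dx$ with $c(k_0)\neq 0$, and since $\phi_{k_0}$ may be taken real and non-trivial the integral is non-zero, so the zero is simple. (Alternatively: one-dimensional Schr\"odinger eigenvalues are non-degenerate, two $L^2$-solutions of a second order ODE being proportional, and for a self-adjoint operator the order of the zero of $B$ equals that multiplicity.)

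\emph{Finiteness.} It remains to show the zeros do not accumulate. By Proposition~\ref{lemma:221} they stay in a bounded subset of $\{\Im(k)\ge 0\}$ (since $B\to 1$ at infinity); by analyticity of $B$ on $\{\Im(k)>0\}$ together with $B\not\equiv 0$ they are isolated in the open half-plane; and by the first paragraph they cannot accumulate on $\R\setminus\{0\}$. A bounded, discrete set is finite, so this would conclude the proof once accumulation at the single remaining point is excluded.

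\emph{Main obstacle.} The delicate point, and where the extra weight $\int_{\R}(1+\abs{x})\abs{u(x)}\,dx<\infty$ in the Faddeev condition is genuinely used, is ruling out accumulation at the origin: one must show that $B$ extends continuously up to $k=0$ and that, in the generic case, $B(0)\neq 0$ (and that in the exceptional ``zero-energy resonance'' case $0$ is still an isolated zero of $B$). I expect this near-origin analysis — controlling the Volterra integral equations of~\eqref{eqn:stepinvscat1} uniformly down to $k=0$ under the weighted integrability hypothesis — to be the hard part; everything else is soft complex analysis plus self-adjointness.
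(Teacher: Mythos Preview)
Your proposal is correct and largely parallel to the paper's argument, with one genuine methodological difference and one place where you are more careful than the paper.

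For the location of the zeros, the paper does not invoke self-adjointness of $L$ abstractly; instead it computes directly that $\partial_x W(\phi_k,\phi_k^*) = (k^2-(k^*)^2)\abs{\phi_k}^2$, integrates over $\R$ using the exponential decay of the bound state, and concludes $(k^*)^2=k^2$. This is of course exactly the computation underlying self-adjointness, done by hand; your route via ``$L$ is self-adjoint, hence $k_0^2\in\R$'' is cleaner and buys you the result without tracking Wronskians, at the cost of assuming the reader is comfortable with essential self-adjointness of $-\partial_x^2-u$ for $u\in P_1$. Your preliminary observation that $\abs{B(k)}\ge 1$ on $\R\setminus\{0\}$ is not in the paper and is a nice shortcut for the finiteness step.

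For simplicity of the zeros, your Lagrange-identity argument $\partial_x W(\phi_k,\partial_k\phi_k)=2k\phi_k^2$ is exactly what the paper does (the paper writes it out at greater length, deriving an explicit formula $\partial_k B(i\sigma)$ proportional to $\int\phi_{i\sigma}^2\,dx$).

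For finiteness, you are in fact more scrupulous than the paper: the paper's Step~3 simply notes $B(k)\to 1$ at infinity, asserts continuity up to the boundary, and defers the details to a reference. You correctly identify the only substantive issue---ruling out accumulation at $k=0$ via uniform control of the Volterra equations under the weighted Faddeev condition---which the paper does not address explicitly.
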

	\end{framed}
	
	\begin{proof}
		We will prove the statement in three steps.

		{\bf Step 1:}
		Let $k_0$ be a zero of $B(k)$, with $k_0 = \xi_0 + i \delta_0$ (where $\xi_0, \delta_0 \in \mathbb{R}$). Then by using Equation~\eqref{eqn:equationsysincref} we get
		\begin{equation}
			\phi(x,k_0) = B(k_0) \psi_k(x)
		\end{equation}
		en subsequently
		\begin{equation}
			\label{eqn:asympa(k)zeros}
			\begin{dcases}
				\phi_{k_0}(x) \underset{x \to -\infty}{\sim} B(k_0) e^{- i \xi_0 x} e^{\delta_0 x}, \\
				\phi_{k_0}(x) \underset{x \to \infty}{\sim} B(k_0) e^{+ i \xi_0 x} e^{-\delta_0 x}.
			\end{dcases}
		\end{equation}
		Since $\phi_k(x)$ is square integrable we conclude $\delta_0 > 0$.
		Moreover, since $\phi_k(x)$ satisfies the Schr\"odinger equation
		\begin{equation}
		\label{eqn:Schorosimple}
			\partial_x^2 \phi_k(x) + \left(u(x) + k^2 \right) \phi_k(x) = 0
		\end{equation}
		and denoting by $\phi^*_k(x)$ the complex conjugate of $\phi_k(x)$ we compute
		\begin{align*}
			\partial_x \left(W(\phi_k(x), \phi_{k*}(x))\right) &= \partial_x \left( \phi_k(x) \partial_x \phi^*_{k}(x) - \phi^*_k(x) \partial_x \phi_{k}(x) \right)\\ &= \left( \phi_k(x) \partial^2_x \phi^*_{k}(x) - \phi^*_k(x) \partial^2_x \phi_{k}(x) \right) \\
			&= \left( \phi_k(x) \left(u(x) + (k^*)^2 \right) \phi^*_{k}(x) - \phi^*_k(x) \left(u(x) + (k)^2 \right) \phi_{k}(x) \right) \\
			&= \left(k^2 - (k^*) \right) \phi_k(x) \phi^*_k(x).
		\end{align*}
		Now by considering the asymptotic behavior presented in Equation~\eqref{eqn:asympa(k)zeros} we conclude
		\begin{equation}
			\int_{-\infty}^{+\infty}\left(W(\phi_k(x), \phi_{k^*}(x))\right) dx \to 0.
		\end{equation}
		This implies
		\begin{equation}
			\left((k^*)^2 - k^2\right) \int_{-\infty}^{\infty} \lvert \psi_k(x) \rvert^2 dx = 0
		\end{equation}
		which leads to
		\begin{equation}
			0 = \left((k^*)^2 - k^2\right) = 4 i \delta_0 \xi_0 \quad \Rightarrow \quad \xi_0 = 0.
		\end{equation}
		Thus, since $\delta_0>0$ and $\xi_0=0$ the zeros of $B(k)$ lie on the positive imaginary axis.
		
		{\bf Step 2:} Let us now prove that the zeros of $B(k)$ are simple. Consider once again the Schr\"odinger equation~\eqref{eqn:Schorosimple} for $\phi_k(x)$. Now multiply it by $\partial_k\phi_k(x)$ and obtain
		\begin{equation}
			\label{eqn:diffB}
			\partial_k \phi_k(x) \partial_x^2 \phi_k(x) + \partial_k \phi_k(x) \left(u(x) + k^2 \right) \phi_k(x) =0.
		\end{equation}
		In a similar way, by computing the derivative of Equation~\eqref{eqn:Schorosimple} w.r.t.\ $k$ and multiplying the result with $\phi_k(x)$ we get
		\begin{equation}
			\label{eqn:diffA}
			\phi_k(x) \partial_k \partial_x^2 \phi_k(x) + \left(u(x) + k^2 \right) \phi_k(x) \partial_k \phi_k(x) + 2 k \left(\phi_k(x)\right)^2 = 0.
		\end{equation}
		The difference between Equations~\eqref{eqn:diffB} and~\eqref{eqn:diffA} gives
		\begin{equation}
			\bigg(\phi_k(x) \partial_k \partial_x \phi_k(x) - \partial_x \phi_k(x) \partial_k \phi_k(x)  \bigg) + 2 k\, \bigg(\phi_k(x)\bigg)^2 = 0.
		\end{equation} 
		This implies
		\begin{equation}
			\lim_{x\to +\infty}W(\phi_k(x), \partial_k \phi_k(x)) - \lim_{x\to -\infty}W(\phi_k(x), \partial_k \phi_k(x)) = - 2 k \int_{-\infty}^{\infty} \phi^2_k(x) dx. \label{eqn:integral}
		\end{equation}
		Now compute
		\begin{equation}
		\label{eqn:Wronskianproof}
		\begin{aligned}
			& W \left(\partial_k \phi_k(x), \psi_k(x)\right) + W \left(\phi_k(x), \partial_k \psi_k(x)\right)  \\
			& \quad = \partial_k \phi_k(x) \partial_x \psi_k(x) - \partial_k \partial_x \phi_k(x) \psi_k(x) + \phi_k(x) \partial_x \partial_k \psi_k(x) - \partial_x \phi_k(x) \partial_k \psi_k(x) \\
			& \quad = \partial_k W(\phi_k (x), \psi_k(x)) = \, \partial_k (2 i k B(k) ) = 2 i B(k) + 2 i k\, \partial_k B(k).
		\end{aligned}
		\end{equation}
		If $B(k)$ has a zero in $i \sigma$ then we obtain
		\begin{equation}
			\phi_{i \sigma}(x) = B(i \sigma) \psi_{i \sigma}(x)
		\end{equation}
		and Equation~\eqref{eqn:Wronskianproof} becomes
		\begin{equation}
			W \left(\partial_k \phi_k(x), \psi_k(x)\right) + W \left(\phi_k(x), \partial_k \psi_k(x)\right) = 2 i A(i \sigma) - 2 \sigma \partial_{k} A( i \sigma).
		\end{equation}
		Now compute
		\begin{equation}
			A(k)^2 W \left(\psi_k(x), \partial_k \psi_k(x) \right) - W(\phi_k(x),\partial_k \phi_k(x)) = - 2 B(i \sigma) \sigma\,  \partial_k A(i\sigma)
		\end{equation}
		which, when taking the limit for $x \to - \infty$, and comparing it with equation~\eqref{eqn:integral}:
		\begin{equation}
			- 2 i \sigma \int_{-\infty}^{\infty} \phi_{i \sigma}(x)^2 dx = - 2 \sigma B(i \sigma)\, \partial_k A(i \sigma).
		\end{equation}
		But using Equations~\eqref{eqn:asymptotic} we obtain for $A(i \sigma) = 0$
		\begin{equation}
			\partial_k A(i\sigma) = - \frac{i}{B(i \sigma)} \int_{-\infty}^{\infty} \phi^2_{i \sigma}(x)\, dx. \label{eqn:simpleconclusion}
		\end{equation}
		Moreover, since $\phi_{i \sigma}(x)$ is a real function, Equation~\eqref{eqn:simpleconclusion} is not vanishing and then the zeros are simple.

		
		{\bf Step 3:} The last step is to prove that the zeros of $B(k)$ are finite. Let us recall that by Proposition~\ref{lemma:221}
		\begin{equation}
			\lim_{\lvert k \rvert \to \infty} B(k) \to 1.
		\end{equation}
		Moreover $B(k)$ is continuous for positive imaginary part of $k$, and therefore it possess a finite number of zeros. For details see~\cite{Segur_1973}.
		
		Consequently since $M_k(x)/B(k)$ is the fraction of two holomorphic functions, it is meromorphic with a finite number $N$ of isolated poles on the positive imaginary axis. This permits to write it as a sum of an holomorphic function $\eta_k(x)$ and a sum of $N$ terms that describes the behaviour of the poles:
		\begin{equation}
			\frac{M_k(x)}{B(k)} := \eta_k(x) + \sum_{j=1}^{N} \frac{\alpha_j(x)}{k - i k_j}, \label{eqn:splitholo}
		\end{equation} 
		where $\alpha_j(x)$ are real functions that describes the behavior around the poles of $B(k)$.
		Since
		\begin{equation}
			\frac{M_k(x)}{B(k)} = \overline{N}_k(x) + \rho(k) e^{2 i k x} \overline{N}(x,-k),
		\end{equation}
		and by integrating around of each pole of Equation~\eqref{eqn:splitholo} and by using the residue theorem we get
		\begin{equation}
			\alpha_j(x) = C_j \overline{N}_{-i k_j}(x) e^{-2 k_j x}
		\end{equation}
		where $C_j$ are normalization constants. Finally we compute
		\begin{equation}
			\eta_k(x) = \overline{N}_k(x) + \rho(k) e^{2i k x} \overline{N}_{-k}(x) - \sum_{j=1}^N \frac{C_j}{k - i k_j} e^{-2 k_j x} \overline{N}_{-ik_j} (x).
		\end{equation}
	\end{proof}
	
	The two propositions above permit to define

	\begin{framed}
	\begin{definition}[Scattering data]
		We call the set
		\begin{equation}
			S = \left\{\{k_j, C_j\}_{1\leq j \leq N}, \rho(k), B(k) \right\}
		\end{equation}
		the {\em scattering data} where
		\begin{itemize}
			\item $\{k_j, C_j\}_{1\leq j \leq N}$ consist of the poles and the associated normalized coefficients,
			\item $\rho(k)$ are the transmission coefficients,
			\item $B(k)$ is the incoming wave coefficient.
		\end{itemize}
	\end{definition}
	\end{framed}

	We clarify that, for now, these scattering data and the set $S$ are time-independent because we are studying a time-independent setting. However, from the next section onward, the scattering data will incorporate time dependence.
	

\clearpage

\section{Time evolution of scattering data}

Let us consider again the Lax pair, and in particular the time-evolution operator $M$. As in the previous subsection, the asymptotic behavior will take a central role in the procedure. Let us consider the transformation
\begin{equation}
	v_k(t,x) = M_k(t,x) e^{-i k x}.
\end{equation}
Then the equation $\partial_t v = M v$ becomes
\begin{equation}
	\partial_t M_k(t,x) = \left( \gamma - 4ik^3 + u_x + 2 i k u(t,x) \right) M_k(t,x) + \left(4 k^2 - 2 u(t,x)\right) \partial_x M_k(t,x).
\end{equation}
The asymptotic behaviour of $M_k(x,t)$ is again given by
\begin{equation}
	\begin{cases}
		M_k(t,x) \sim 1 \quad \text{for} \, x\to - \infty , \\
		M_k(t,x) \sim B(k,t) + A(k,t) e^{2 i k x} \quad \text{for} \, x\to - \infty
	\end{cases}
\end{equation}
since the asymptotic analysis of Equation~\eqref{eqn:asymptMN} is valid for each value of $t$. This permits to consider for $x\to -\infty$ the equation
\begin{equation}
	\gamma - 4ik^3 = 0,
\end{equation}
by using also the Fadeev condition given by $u \in P_1$. For $x\to+\infty$ we obtain
\begin{equation}
	\label{eqn:timeevolution(ab)}
	\partial_t B(k,t)  + \partial_t A(k,t) e^{2 i k x} = 8 i k^3 A(k,t) e^{2 i k x} \quad \mbox{and thus} \quad  \begin{cases}
		B(k,t) = B(k,0), \\
		A(k,t) = A(k,0) e^{8 i k^3 t}.
	\end{cases}
\end{equation}
Since $B(k,t) = B(k,0)$ for all $t$, the locations of the zeros are fixed during the time evolution.
From Equation~\eqref{eqn:timeevolution(ab)} we recover the transmission coefficient by using Equation~\eqref{eqn:defrefandtrans}:
\begin{equation}
	\tau(k,t) = \tau(k,0) \qquad \mbox{and} \qquad \rho(k,t) = \rho(k,0) e^{ 8 i k^3 t}.
\end{equation}
This is enough to time evolve the scattering data set $S$ and we obtain
\begin{align}
\label{eqn:timedepscattering}
	S = \bigg\{ \bigg\{&k_j(t) = k_j, C_j(t) = C_j(0) e^{ 8 i k_j^3 t} \bigg\}_{1 \leq j \leq N} , \ \
	\rho(k,t) = \rho(k,0) e^{ 8 i k^3 t}, \ \  B(k,t) = B(k,0) \bigg\} .
\end{align}


\clearpage

\section{Inverse scattering}
	The aim of this section is to describe the final step of the procedure illustrated in Figure~\ref{fig:scatteringinv}: the construction of the time-dependent solution $ u(t,x) $ from the time-dependent scattering data. To simplify the notation in this section, we will omit explicit references to the time dependence of objects. Specifically, we will describe how to transform the general scattering data (whether time-independent or time-dependent) into the corresponding solution.
	
	Moreover, we will assume that the function $ B(k) $ does not have any poles, allowing us to use a simpler version of the equation, namely equation~\eqref{eqn:moverA}.
	Even with this restrictive assumption, it is possible to extend the procedure to cases where $ B(k) $ has poles, see \cite{ablowitzClarkson}.
	
	Let us consider again the Volterra integral equation
	\begin{equation*}
		\overline{N}_k(x) = 1 - \frac{1}{2 i k}\int_{x}^{+\infty} \left(1 - e^{2 i k (x-y)} \right) u(y) \overline{N}_k(y) d y.
	\end{equation*}
 	Since for $\Im(k)<0$ we have
 	\begin{equation*}
 		\lim_{\lvert k \rvert \to +\infty } e^{2 i k (x-y)} \to 0
 	\end{equation*}
 	we obtain
 	\begin{equation}
 	\label{eqn:lineN}
 		\overline{N}_k(x) \underset{\lvert k \rvert \to+\infty}{\sim} 1 - \frac{1}{2 i k} \int_{x}^{\infty} u(y) dy.
 	\end{equation}
 	On the other side, by using the Cauchy representation theorem for an analytic function $\mathcal{F}(y)$ with $ 		\lim_{\lvert y \rvert \to+\infty} \mathcal{F}(y) \to 0$ we obtain
 	\begin{equation}
 		\mathcal{F}(k) = \frac{1}{2 \pi i} \oint \frac{\mathcal{F}(\xi)}{\xi - k} d\xi.
 	\end{equation}
 	To compute this integral we can chose two different paths: one that passes above the singularity $\xi = k$ called $\gamma^+$ or the one below $\gamma_-$. To evaluate this integral, we apply Jordan lemma.
 	\begin{lemma}
 		Suppose $f(z)$ is an analytic function at all points in the upper (lower) half plane, that are exterior to a circle of radius $R_0$ centered in the origin. Let $C_R$ the upper semicircle with radius $R>R_0$ and that for all point $z \in C_R$ there exist a positive constant $\Theta_R$ such that:
 		\begin{equation}
 			\begin{dcases}
 				\lvert f(z) \rvert \leq \Theta_R, \\
 				\lim_{R \to +\infty} M_R = 0.
 			\end{dcases}
 		\end{equation}
 		Then for every constant $a \in \mathbb{R}^+$, then:
 		\begin{equation}
 			\lim_{R \to +\infty} \int_{C_R} f(z) e^{\imath a z} dz = 0,
 		\end{equation}
 	\end{lemma}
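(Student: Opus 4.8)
The plan is to estimate the contour integral directly, using the exponential decay of $e^{\imath a z}$ in the upper half-plane, which is quantified by Jordan's inequality. First I would parametrize the semicircle by $z = R e^{\imath\theta}$ with $\theta \in [0,\pi]$, so that $dz = \imath R e^{\imath\theta}\,d\theta$ and $\lvert e^{\imath a z}\rvert = e^{-aR\sin\theta}$, since $a>0$ and $\sin\theta\ge 0$ on $[0,\pi]$. Using the hypothesis $\lvert f(z)\rvert\le\Theta_R$ on $C_R$ together with the triangle inequality for integrals, this gives
\begin{equation*}
 \left\lvert \int_{C_R} f(z)\, e^{\imath a z}\,dz \right\rvert \le \Theta_R\, R \int_0^\pi e^{-aR\sin\theta}\,d\theta .
\end{equation*}

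Next I would reduce the $\theta$-integral. By the symmetry $\sin(\pi-\theta)=\sin\theta$ one has $\int_0^\pi e^{-aR\sin\theta}\,d\theta = 2\int_0^{\pi/2} e^{-aR\sin\theta}\,d\theta$, and on $[0,\pi/2]$ Jordan's inequality $\sin\theta\ge 2\theta/\pi$ yields $e^{-aR\sin\theta}\le e^{-2aR\theta/\pi}$. Integrating the elementary exponential then gives
\begin{equation*}
 \int_0^\pi e^{-aR\sin\theta}\,d\theta \le 2\int_0^{\pi/2} e^{-2aR\theta/\pi}\,d\theta = \frac{\pi}{aR}\bigl(1 - e^{-aR}\bigr) \le \frac{\pi}{aR}.
\end{equation*}
Combining the two displays, the factors of $R$ cancel and we obtain the key bound
\begin{equation*}
 \left\lvert \int_{C_R} f(z)\, e^{\imath a z}\,dz \right\rvert \le \frac{\pi\,\Theta_R}{a}.
\end{equation*}

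Finally, letting $R\to+\infty$ and invoking the hypothesis that the bound $\Theta_R$ (the quantity written $M_R$ in the statement) tends to $0$, the right-hand side tends to $0$, which is exactly the assertion $\lim_{R\to+\infty}\int_{C_R} f(z)e^{\imath a z}\,dz = 0$. The lower-half-plane version is proved identically, parametrizing by $\theta\in[-\pi,0]$ (equivalently replacing $a$ by $-a$), so it needs no separate argument. The only real subtlety — and the reason the lemma is not entirely trivial — is that one must \emph{not} bound $e^{-aR\sin\theta}$ by its maximum value $1$, for then the residual factor $R$ would survive; keeping the Gaussian-type decay and estimating it via Jordan's inequality is the whole point, and everything else is a routine computation.
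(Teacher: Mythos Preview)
Your proof is correct and is the standard argument for Jordan's lemma; the paper itself does not give a proof but simply cites the textbook of Brown and Churchill, where exactly this parametrize-and-use-Jordan's-inequality argument appears. So there is nothing to compare: you have supplied the proof that the paper defers to the literature.
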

 	\begin{proof}
 		See~\cite{Brown_2014}.
 	\end{proof}
	We can define the two projections:
 	\begin{equation}
 		\begin{dcases}
 			\ \left(\Pi^+ \mathcal{F}\right)(\kappa) = \frac{1}{2 \pi i} \int_{-\infty}^{\infty} \frac{\mathcal{F}(\xi)}{\xi - k + (i 0)} d\xi,  \\
 			\ \left(\Pi^- \mathcal{F}\right)(\kappa) = \frac{1}{2 \pi i} \int_{-\infty}^{\infty} \frac{\mathcal{F}(\xi)}{\xi - k - (i 0)} d\xi.
 		\end{dcases}
 	\end{equation}
 	Now rewrite Equation~\eqref{eqn:moverA} to
 	\begin{equation}
 		{M_k(x)}{B(k)} - \overline{N}_k(x) - \rho(k) e^{2 i k x} \overline{N}_{-k}(x) = 0,
 	\end{equation}
 	and then compute its projection integral using $\gamma_-$:
 	\begin{align}
 		0 &= \frac{1}{2 i \pi} \int_{-\infty}^{\infty}\frac{1}{k - \xi} \left({M_k(x)}{B(k)} - \overline{N}_k(x) - \rho(k) e^{2 i k x} \overline{N}_{-k}(x)\right) \\
 		&= -1 + \overline{N}_k(x) - \frac{1}{2 i \pi} \int_{-\infty}^{\infty}\frac{1}{k - \xi} \left( \rho(k) e^{2 i k x} \overline{N}_{-k}(x)\right)
 	\end{align}
 	where we have used Propositions~\ref{lemma:221} and~\ref{prop:secondprop}, and in particular
 	\begin{equation*}
 		\lim_{\lvert k \rvert \to \infty} \frac{M_k(x)}{B(k)} \to 1,
 	\end{equation*}
 	and the analytic properties of $\overline{N}_k(x)$ for $\Im(k)<0$, and of $M_k(x)$ and $a(k)$ for $\Im(k)>0$. Taking the limit $\lvert k \rvert \to \infty$ and using Equation~\eqref{eqnNforNbar} we get
 	\begin{equation}
 		\overline{N}_k(x) = 1 - \frac{1}{2 k \pi i} \int_{-\infty}^{+\infty} \rho(\xi) N_{\xi}(x) d\xi. \label{eqn:Ninvscattering}
 	\end{equation}
 	Comparing the above equation with equation~\eqref{eqn:lineN}, we obtain
 	\begin{equation*}
 		1 - \frac{1}{2 k \pi i} \int_{-\infty}^{+\infty} \rho(\xi) N_{\xi}(x) d\xi = 1 - \frac{1}{2 i k} \int_{x}^{\infty} u(y) dy.
 	\end{equation*}
 	Taking the $x$-derivative of both sides the result is
 	\begin{equation*}
 		u(x) = -\partial_x \left( \frac{1}{\pi} \int_{-\infty}^{\infty} \rho(\xi) N_{\xi}(y) d\xi\right).
 	\end{equation*}

 	Now a final note regarding the case for which $B(k)$ has zeros (for details see~\cite{ablowitzClarkson}):

 	\begin{framed}
 	\begin{remark}
 	 The two integral representations for $\overline{N}_k(x)$ have to consider the poles of $B(k)$, since according to the residue theorem (see for instance \cite{rudin}) the integral over a path $\partial \Omega$ is of the form
 	\begin{equation}
 		\frac{1}{2 i \pi} \int_{\partial \Omega} \mathcal{F}(z) dz = \sum_{l} \text{Res}_{\mathcal{F}}(z_l),
 	\end{equation} 
 	where $z_l$ are the poles of the function $\mathcal{F}$.
 	Thus if the zeros of $B(k)$ are located in $k_l=i \sigma_l$ as in the Proposition~\ref{prop:secondprop} then Equation~\eqref{eqn:Ninvscattering} becomes
 	\begin{equation*}
 		\overline{N}_k(x) = 1 - \left( \sum_{l} \frac{C_l e^{- 2 \sigma_l x} \overline{N}_{-i \sigma_l}(x)}{k + i \sigma_l}\right) + \frac{1}{2 \pi i k} \int_{-\infty}^{+\infty} \rho(\xi) N_{\xi}(x) d\xi
 	\end{equation*}
 	and eventually
 	\begin{equation*}
 		u(x) = \partial_x \left(2 i \sum_{j=1}^{N} C_j N_j(x) - \frac{1}{\pi} \int_{-\infty}^{\infty} \rho(\xi) N_{\xi}(x) d \xi \right).
 	\end{equation*}
 \end{remark}
 \end{framed}




 



\backmatter

\printindex

\end{document}